\documentclass{article}

\baselineskip=12pt
\linespread{1.25}
\parindent=11pt
\parskip \medskipamount
\pagestyle{plain}
\overfullrule=5pt

\usepackage[T1]{fontenc}
\usepackage{geometry}
\geometry{a4paper,left=1in,right=1in,top=1in,bottom=1in}
\usepackage{titling}
\setlength{\droptitle}{-6em}
\usepackage{setspace}
\usepackage{xcolor}
\usepackage{authblk}
\usepackage{amsmath,amssymb,amsthm,amsfonts,bm,bbm}

\DeclareMathOperator*{\argmax}{arg\,max}

\usepackage{mathrsfs,stmaryrd}
\usepackage{upgreek}
\usepackage{stmaryrd}
\SetSymbolFont{stmry}{bold}{U}{stmry}{m}{n}
\usepackage{indentfirst}
\usepackage{subdepth}
\usepackage{graphicx}
\usepackage{diagbox}
\usepackage{pdfpages}
\usepackage[latin1]{inputenc}
\usepackage{url}
\usepackage{float}
\usepackage{appendix}
\usepackage{etoolbox}
\usepackage{microtype}
\usepackage[centerlast]{caption}
\usepackage[plainpages=false,hyperfootnotes=false]{hyperref}
\hypersetup{colorlinks=true,urlcolor=blue,linkcolor=red,citecolor=blue}
\usepackage{cleveref}
\crefname{equation}{\hspace{-0.4em}}{\hspace{-0.4em}}
\crefrangeformat{equation}{(#3#1#4)--(#5#2#6)}
\usepackage{bookmark}
\usepackage{framed}
\usepackage{enumitem}
\usepackage{apacite}

\newtheorem{theorem}{Theorem}[section]
\newtheorem{lemma}[theorem]{Lemma}

\newtheorem{remark}[theorem]{Remark}
\newtheorem{definition}[theorem]{Definition}

\newtheorem{example}[theorem]{Example}
\renewenvironment{proof}{\noindent {\bf Proof.}}{\hfill $\Box$}
\allowdisplaybreaks[4]

\hypersetup{
  pdftitle={On stochastic control problems with higher-order moments},
  pdfauthor={Y. K. Wang, J. Z. Liu, A. Bensoussan, K. F. C. Yiu, and J. Q. Wei}
}

\title{On stochastic control problems with higher-order moments}
\date{\vspace{-6ex}}
\author{
    Yike Wang\thanks{School of Finance, Chongqing Technology and Business University, Chongqing 400067, China.},
    Jingzhen Liu\thanks{Corresponding author. China Institute for Actuarial Science, Central University of Finance and Economics, Beijing 100081, China.},
    Alain Bensoussan\thanks{Naveen Jindal School of Management, University of Texas at Dallas, Dallas, TX 75083.},
    Ka-Fai Cedric Yiu\thanks{Department of Applied Mathematics, The Hong Kong Polytechnic University, Hunghom, Kowloon, Hong Kong, China.},
    Jiaqin Wei\thanks{Key Laboratory of Advanced Theory and Application in Statistics and Data Science--MOE, School of Statistics, East China Normal University, Shanghai 200241, China.}.
}

\begin{document}
\maketitle

\begin{abstract}
In this paper, we focus on a class of time-inconsistent stochastic control problems, where the objective function includes the mean and several higher-order central moments of the terminal value of state.
To tackle the time-inconsistency, we seek both the closed-loop and the open-loop Nash equilibrium controls as time-consistent solutions.
We establish a partial differential equation (PDE) system for deriving a closed-loop Nash equilibrium control,
which does not include the equilibrium value function and is different from the extended Hamilton-Jacobi-Bellman (HJB) equations as in Bj\"ork et al. (Finance Stoch. 21: 331-360, 2017).
We show that our PDE system is equivalent to the extended HJB equations that seems difficult to be solved for our higher-order moment problems.
In deriving an open-loop Nash equilibrium control, due to the non-separable higher-order moments in the objective function,
we make some moment estimates in addition to the standard perturbation argument for developing a maximum principle.
Then, the problem is reduced to solving a flow of forward-backward stochastic differential equations.
In particular, we investigate linear controlled dynamics and some objective functions affine in the mean.
The closed-loop and the open-loop Nash equilibrium controls are identical, which are independent of the state value, random path and the preference on the odd-order central moments.
By sending the highest order of central moments to infinity, we obtain the time-consistent solutions to some control problems whose objective functions include some penalty functions for deviation.
\end{abstract}

\noindent {\bf Keywords:} stochastic control, higher-order moment, time-consistent, closed-loop Nash equilibrium control, open-loop Nash equilibrium control
\vspace{3mm}

\noindent {\bf AMS2010 classification:} Primary: 93E20, 91G80; Secondary: 91B08, 49N90
\vspace{3mm}

\section{Introduction}
\label{sec:Introduction}

\noindent Stochastic control theories have been widely studied in the past few decades.
It is a quite important mathematical technic in the field of finance and economics, when a dynamic decision is supposed to be made in a multi-period problem.
Dynamic programming and Bellman principle of optimality, see \cite{Yong-Zhou-1999}, represent the property of optimal control in a large class of dynamic optimization problems:
An optimal control based on the current information is still optimal in future.
However, there are plenty of ``non-standard'' problems, for which the dynamic programming method cannot be directly used.
In those problems, an optimal control for some initial epochs may no longer be optimal for some future epochs in the sense of global maximization/minimization,
which may induce the decision makers to change their approach and to modify their previous optimal plan at every instant.
This phenomenon is known as ``time-inconsistency''.

The studies on time-inconsistent control problems can be traced back to \cite{Strotz-1955,Pollak-1968,Peleg-Yaari-1973,Goldman-1980,Laibson-1997,Laibson-1998}.
In the early years, the research on time-inconsistent control problems can be almost classified into two main categories: 
\begin{itemize}
\item the problems whose objective function includes non-exponential discounting factors, like $J ( t,u ) = \mathbb{E} [ \int_{t}^{T} Q ( t,s ) H ( s, {X}^{u}_{s}, {u}_{s} ) ds + Q ( t,T ) F ( {X}^{u}_{T} ) | \mathcal{F}_{t} ]$;
\item and the problems whose objective function includes non-linear functions of the conditional expectation of state, like $J ( t,u ) = G ( \mathbb{E} [ {X}^{u}_{T} | \mathcal{F}_{t} ] )$.
\end{itemize}
Here $u$ and ${X}^{u}$ respectively denote the control process and its corresponding state process.
Later, the non-exponential discounting problem was extended to the initial-state dependence problem, while the problem with non-linear functions of expectation was extended to the mean-field type control problem.
On the one hand, the initial-state dependence means that some functions for evaluating the future control, such as the integrand in the objective function, explicitly rely on the current time or the current value of state.
See, e.g., \cite[(6.1)]{Bjork-Khapko-Murgoci-2017}.
This initial-state dependence in linear-quadratic (LQ) problems and utility maximization problems has attracted much attention.
The related literature includes, e.g., \cite{Yong-2011,Yong-2012,Ekeland-Mbodji-Pirvu-2012,Zhao-Shen-Wei-2014,Liu-Lin-Yiu-Wei-2020,Zhang-Purcal-Wei-2021}.
On the other hand, in many mean-field type control problems, the conditional expectation $\mathbb{E} [ {X}^{u}_{T} | \mathcal{F}_{t} ]$ also appears in the controlled dynamic equation of ${X}^{u}$. 
More generally, researchers incorporate the distribution of ${X}^{u}_{s}$ into the controlled dynamic equation (known as McKean-Vlasov equation) to make theoretical extensions.
In terms of application, the mean-field type control problems, sometimes combined with LQ problems, have been deeply investigated for several years,
see, e.g., \cite{Hu-Jin-Zhou-2012,Hu-Jin-Zhou-2017,Bensoussan-Frehse-Yam-2013,Yong-2017} and references therein.
As the most celebrated application, the mean-variance dynamic portfolio selection, which can be traced back to \cite{Markowitz-1952},
was pioneered by \cite{Li-Ng-2000,Zhou-Li-2000} and developed by a huge body of literature including \cite{Basak-Chabakauri-2010,Czichowsky-2013,Bjork-Murgoci-Zhou-2014}, etc.

A straightforward extension of mean-variance dynamic portfolio selection is to incorporate higher-order (central) moments into the objective function.
Given that the distribution of the terminal wealth can be determined by all central/original moments if its moment generating function exists,
an objective function including central/origin moments as many as possible might be regarded as an approximate quantification for the preferences on the distribution.

Moreover, objective functions with higher-order central moments can also arise from improving the classical single-period mean-variance model pioneered by \cite{Markowitz-1952}.
As is well known, in the classical model, the investors aim to maximize the expected return when the variance does not exceed a given level,
or to minimize the variance when the expected return does not fall below a given level.
By Pareto efficiency or the effective frontier theory (see also \cite[Section 6.8]{Yong-Zhou-1999} or \cite{Zhou-Li-2000}),
the primal problem can be embedded into a class of unconstrained optimization problems with ``mean-variance utility'',
where the utility is increasing with the mean and decreasing with the variance.
In the same manner, for the maximizer of mean-variance-skewness utility, which is increasing with respect to the mean and decreasing with respect to the variance and the skewness,
the investor cannot further improve the expected return without increasing the variance and the skewness.
Notably, when the mean and the variance are fixed, a larger skewness usually indicates a smaller mode or median, which is not preferred by some investors.
This coincides with the negative risk premium of skewness found by \cite{ChabiYo-2012} in the multi-period capital asset pricing model with higher-order moments.
However, a positive skewness usually arises from a long right tail, which may be preferred by some investors.
This implies that it is also rational to assume that the mean-variance-skewness utility is increasing with the skewness.
The above discussion on benefit and drawback of positive skewness is still valid for other odd-order central moments, since they can also be treated as measures of the symmetry of a return's distribution.

Furthermore, we have the following informal expansion for the expected utility of the random return $X$ (see also \cite{Samuelson-1970,Jean-1971,Scott-Horvath-1980}),
\begin{equation*}
\mathbb{E} [ \Psi (X) ] = \Psi (0) + \Psi' (0) \mathbb{E} [X] + \sum_{j=2}^{n} \frac{1}{j!} {\Psi }^{(j)} ( \mathbb{E} [X] ) \mathbb{E} \big[ ( X - \mathbb{E} [X] )^{j} \big] + \bar{R},
\end{equation*}
where $\mathbb{E}$ denotes the expectation operator, $\Psi $ is a utility function, and $\bar{R}$ is the residual.
The mean and the higher-order central moments are included in the expansion.
From stochastic dominance theories (see, e.g., \cite{Ekern-1980}), $(-1)^{j+1} {\Psi }^{(j)}$ for every $j$ is supposed to be positive,
which provides appropriate assumptions for the monotonicity of the utility with respect to the central moments.

Besides, in the objective function, the combination of higher-order central moments can act as a penalty function for the deviation $X - \mathbb{E} [X]$,
or can generate a utility function of the random return $X$ net of the endogenous reference level $\mathbb{E} [X]$.
For example, for the objective function
\begin{equation}
J = \mathbb{E} [X] - \sum_{j=2}^{\infty } \frac{ ( - c )^{j} }{j!} \mathbb{E} \big[ ( X - \mathbb{E} [X] )^{j} \big]
  = \mathbb{E} [X] - \mathbb{E} \big[ {e}^{ - c ( X - \mathbb{E} [X] ) } \big] + 1, \quad c > 0,
\label{eq:mean-exponential utility :eq}
\end{equation}
the function $S(x) = \exp ( - c x ) - 1$ provides a relatively small reward (resp. large penalty) for positive (resp. negative) deviation from the expected return.
Also, the abovementioned objective function $J$ can be regarded as a weighted average of the mean $\mathbb{E} [X]$ for a risk-neutral preference on $X$,
and the constant absolute risk aversion (CARA) utility function of $X - \mathbb{E} [X]$, whose CARA coefficient is $c$.
The inspiration about endogenous reference level can be also found in the study on, e.g., consumer's habit formation in behavioral finance (see also \cite{Liu-Lin-Yiu-Wei-2020}).
The related results for the above-mentioned objective function can be found in \Cref{subsubsec: Limiting case I: mean-exponential objective function} in this paper.

In the last few decades, there have been several studies devoted to the portfolio selection problems with higher-order moments.
Researchers proceeded with static (namely, single-period) mean-variance-skewness problem.
\cite{Konno-Suzuki-1995} found that the ``efficient frontier'' for the mean-variance-skewness problem is non-concave (see Figure 1 for its 3D image and Figure 2 for its 2D section, therein),
which may lead to a significant increase in the complexity of further analysis.
At first, the 3D efficient frontier has a flat area and many kinks, where the solution switches from a corner point to an interior point or vice versa.
When drawing a ``capital market surface'' as an analog to capital market line, which should be a ruled surface tangent to the 3D efficient frontier,
one can find that the intersection of the 3D efficient frontier and the capital market surface is sometimes discontinuous.
Furthermore, it is difficult to derive an explicit solution for mean-variance-skewness utility maximization over the capital market surface, since the indifference surface of utility is not necessarily convex.
\cite{DeAthayde-Flores-2004} considered the variance minimization problem with mean-skewness constraints to determine the efficient frontier.
The result is not always the same as that of the skewness maximization problem with mean-variance constraints or the mean maximization problem with variance-skewness constraints.
This also indicates the non-concavity of the efficient frontier.
To handle the non-concavity, many researchers have tried to obtain the optimal solution by specifying the objective function, by considering some multi-object problems, or by designing some numerical algorithms.
See, e.g., \cite{Lai-1991,Konno-Shirakawa-Yamazaki-1993,Joro-Na-2006,Briec-Kerstens-Jokung-2007,Harvey-Liechty-Liechty-Muller-2010,Boissaux-Schiltz-2013,Landsman-Makov-Shushi-2020,Mehlawat-Gupta-Khan-2021,Zhen-Chen-2022}.

For dynamic maximization problems in multi-period or continuous-time models,
the non-concavity of efficient frontier, or the non-convexity of optimization problem, makes it difficult to derive the value function from the associated Hamilton-Bellman-Jacobi (HJB) equation.
The stochastic maximum principle has also been considered (see \cite{Boissaux-Schiltz-2010}), but it is still quite hard to derive an explicit expression of the maximizer/minimizer.
We should note that unlike the systematic skewness and co-skewness of yield rates of all assets emphasized in \cite{Kraus-Litzenberger-1976},
it is the total skewness of terminal wealth that the objective function of a dynamic optimization problem includes.
In fact, the geometric Brownian motion, or the Brownian drift-diffusion model in general, has already provided the multivariate distribution of the long-term yield rates of those assets.

Rather than a maximizer/minimizer for the objective function, we consider time-consistent solutions for our higher-order moment problems.
Notably, the model studied in this paper is fairly general, although it arises from the portfolio problems.
On the one hand, following the definition as in \cite{Bjork-Murgoci-2014,Bjork-Khapko-Murgoci-2017,Wang-Zheng-2021},
we try to seek an optimal feedback control only against some spike variations in form of feedback function,
which is known as the closed-loop Nash equilibrium control, or namely, the Nash equilibrium control law.
The problem is reduced to solving an extended HJB system,
which contains a Bellman equation for the equilibrium value function and some partial differential equations (PDEs) known as the Feynman-Kac formula for original moments.
On the other hand, referring to \cite{Hu-Jin-Zhou-2012,Hu-Jin-Zhou-2017}, we try to seek an open-loop Nash equilibrium control, which is optimal against some spike variations in form of stochastic process.
A standard perturbation argument from \cite[Chapter 3]{Yong-Zhou-1999} is adopted to develop a sufficient maximum principle.
And then the solution can be derived from a flow of forward-backward stochastic differential equations (FBSDEs).

The main contributions of this paper are as follows.
\emph{Firstly}, we establish a PDE system for deriving a closed-loop Nash equilibrium control.
The PDE system does not include the so-called equilibrium value function and is different from the well-known extended HJB equations as in \cite{Bjork-Khapko-Murgoci-2017}.
After using a sequence of multipliers to connect the PDEs arising from the Feynman-Kac formula for original moments,
we establish a verification theorem to show the sufficiency of a feedback control to the optimality.
Our innovative methodology is similar to that in \cite{Wang-Zheng-2021}, where a generalized equilibrium value function is introduced which reduces to the classical equilibrium value function in a diagonal manner.
However, using the system given by \cite{Wang-Zheng-2021} is not appropriate to solve our problem, where the terminal value of the state and its mean and higher-order moments are not separable.
For our approach, when handling the problems whose objective function can be re-expressed as a linear combination of mean and higher-order central moments,
it is not necessary to solve any Bellman equation or PDE to arrive at the time-consistent solution.
\emph{Secondly}, we derive a sufficient maximum principle with a flow of FBSDEs in seeking open-loop Nash equilibrium controls.
For a linear control model, the maximum principle can be further simplified.
Notably, compared with the classical cases, e.g., in \cite{Peng-1990,Yong-Zhou-1999,Buckdahn-Djehiche-Li-2011,Hu-Jin-Zhou-2012},
the estimation of more expansion terms needs to be investigated in the perturbation argument, due to the non-separable higher-order moments in the objective function.
\emph{Thirdly}, we consider a particular case with a linear control model, where the objective function is affine in the mean.
Both categories of time-consistent solutions, closed-loop and open-loop, are derived.
It is interesting to find that they are identical, and they are independent of both the state and random path.
Moreover, the preference on odd-order central moments does not affect the solution.
\emph{Finally}, we consider several limiting cases where the objective functions are linear combinations of mean and central moments and the highest order of the central moments tends to infinity.
The closed-form solution in each case is derived from solving an algebraic equation rather than a differential equation.
Moreover, by a heuristic approach with the use of Fourier cosine transform, we obtain a candidate time-consistent solution for a fairly general even penalty function.

The rest of this paper is organized as follows.
In \Cref{sec:Model formulation}, we introduce the problem with higher-order central moments in a general formulation.
In \Cref{sec:Sufficient condition of CNEC: extended HJB-PDE}, an extended HJB system with a verification theorem for closed-loop Nash equilibrium control is studied.
In \Cref{sec:Sufficient condition of ONEC: maximum principle}, we derive a sufficient maximum principle for open-loop Nash equilibrium control, and we reduce the problem to solving a flow of FBSDEs.
In \Cref{sec:Particular case: linear controlled SDE}, we apply our results to solve several particular linear control problems, and show their state/path-independent solutions.
In \Cref{sec:Concluding remark}, we make some concluding remarks.

\section{Model and problem formulation}
\label{sec:Model formulation}

Let $T$ denote a fixed finite time-horizon, $( \Omega, \mathcal{F}, \mathbb{P} )$ denote a complete probability space, on which a one-dimensional standard Brownian motion $W := \{ {W}_{t} \}_{ t \in [ 0,T ] }$ is defined,
and $\mathbb{E}$ be the expectation operator.
Let $\mathbb{F} := \{ \mathcal{F}_{t} \}_{ t \in [ 0,T ] }$ be the right-continuous, completed natural filtration generated by $W$,
and $\mathbb{E}_{t} [ \cdot ] := \mathbb{E} [ \cdot | \mathcal{F}_{t} ]$ for short.
For the sake of brevity,
in this paper we usually omit the statement $\mathbb{P}$-a.s. (namely, almost surely with respect to $\mathbb{P}$) for equalities and inequalities involving conditional expectations or It\^o integrals,
unless otherwise noted.
We suppose that the dynamics of the real-valued state process $\{ {X}_{t} \}_{ t \in [ 0,T ] }$ evolve as the following fairly general controlled stochastic differential equation (SDE),
\begin{equation}
d {X}_{s} = b ( s, {X}_{s}, {u}_{s} ) ds + \sigma ( s, {X}_{s}, {u}_{s} ) d {W}_{s}, \quad
  {X}_{0} = {x}_{0},
\label{eq:initial controlled SDE :eq}
\end{equation}
and the primal object is to maximize
\begin{equation*}
J ( 0, {x}_{0}, u )
= \mathbb{E} \Big[ \Psi \big( 0, {x}_{0}, {X}_{T}, \mathbb{E} [ {X}_{T} ], \mathbb{E} \big[ ( {X}_{T} - \mathbb{E} [ {X}_{T} ] )^{2} \big], \ldots,
                              \mathbb{E} \big[ ( {X}_{T} - \mathbb{E} [ {X}_{T} ] )^{n} \big] \big) \Big]
\end{equation*}
for a fixed integer $n \ge 2$.
When the time goes to $t$ and the state moves to $x$, or namely, for the initial pair $( t,x )$,
it is supposed to maximize the objective function
\begin{equation}\begin{aligned}
J ( t,x,u ) := \mathbb{E}_{t} \Big[ \Psi \big( & t, x, {X}^{t,x,u}_{T}, \mathbb{E}_{t} [ {X}^{t,x,u}_{T} ], \mathbb{E}_{t} \big[ ( {X}^{t,x,u}_{T} - \mathbb{E}_{t} [ {X}^{t,x,u}_{T} ] )^{2} \big], \\
                                               & \ldots, \mathbb{E}_{t} \big[ ( {X}^{t,x,u}_{T} - \mathbb{E}_{t} [ {X}^{t,x,u}_{T} ] )^{n} \big] \big) \Big],
\end{aligned}
\label{eq:objective function :eq}
\end{equation}
where $\{ {X}^{t,x,u}_{s} \}_{ s \in [ t,T ] }$ is given by the following controlled SDE:
\begin{equation}
d {X}^{t,x,u}_{s} = b ( s, {X}^{t,x,u}_{s}, {u}_{s} ) ds + \sigma ( s, {X}^{t,x,u}_{s}, {u}_{s} ) d {W}_{s}, \quad
  {X}^{t,x,u}_{t} = x.
\label{eq:controlled SDE :eq}
\end{equation}
To reduce the complexity of derivation, the following basic assumptions are adopted:
\begin{enumerate}
\item $\{ {u}_{t} \}_{ t \in [ 0,T ] }$ is valued in $\mathcal{R}$, a subspace of one-dimensional Euclidean space.
\item The initial condition ${X}_{0} = {x}_{0} \in \mathbb{R}$ and
      the functions $b, \sigma : [ 0,T ] \times \mathbb{R} \times \mathcal{R} \to \mathbb{R}$ and $\Psi: [ 0,T ] \times \mathbb{R} \times \mathbb{R} \times \mathbb{R}^{n} \to \mathbb{R}$ are given.
\item $b ( t,x,u )$ and $\sigma ( t,x,u )$ are uniformly continuous, and twice differentiable in $x$.
      Moreover, $( b, \sigma )$ and their partial derivatives $( {b}_{x}, {b}_{xx}, {\sigma }_{x}, {\sigma }_{xx} )$ are Lipschitz continuous in $( x,u )$,
      for which the Lipschitz continuity parameters are uniformly bounded for all $t \in [ 0,T ]$.
\item $\Psi ( t,x,y, \vec{z} )$ is continuously differentiable in $\vec{z} := ( {z}_{1}, \ldots, {z}_{n} )$ and twice differentiable in $y$.
      Moreover, the functions $( \Psi, {\Psi }_{y}, {\Psi }_{yy}, {\Psi }_{ {z}_{1} }, \ldots, {\Psi }_{ {z}_{n} } )$ are Lipschitz continuous in $( y, \vec{z} )$,
      for which the Lipschitz continuity parameters are uniformly bounded for all $( t,x )$;
      and $| \Psi ( \cdot, \cdot, 0, \vec{0} ) |$ is uniformly bounded.
\end{enumerate}

\begin{definition}\label{def:admissible control}
A stochastic process $u = \{ {u}_{t} \}_{ t \in [ 0,T ] }$ is named an ``admissible control'', 
if $u$ is $\mathcal{R}$-valued and $\mathbb{F}$-adapted, $\mathbb{E} [ \int_{0}^{T} ( | b ( t, 0, {u}_{t} ) |^{2} + | \sigma ( t, 0, {u}_{t} ) |^{2} ) dt ] < \infty $,
and the $\mathbb{F}$-adapted strong solution of the corresponding controlled SDE \cref{eq:initial controlled SDE :eq} satisfies $\mathbb{E} [ | {X}_{T} |^{2n - 2 + \epsilon } ] < \infty $ for some $\epsilon > 0$.
Furthermore, a deterministic function $\hat{u}: [ 0,T ] \times \mathbb{R} \to \mathcal{R}$ is named an ``admissible feedback control'', 
if $u$ given by ${u}_{t} = \hat{u} ( t, {X}^{ 0, {x}_{0}, u }_{t} )$ is an admissible control 
and $\hat{u} ( t,x )$ is Lipschitz continuous in $x$, for which the Lipschitz continuity parameter is uniformly bounded for all $t \in [ 0,T ]$. 
\end{definition}

\begin{remark}
Corresponding to any admissible control or admissible feedback control, \cref{eq:initial controlled SDE :eq} admits a unique $\mathbb{F}$-adapted strong solution, for which the theory can be found in \cite[Section 1.6.1]{Yong-Zhou-1999}.
The additional condition $\mathbb{E} [ | {X}_{T} |^{2n - 2 + \epsilon } ] < \infty $ is employed for the case ${\Psi }_{ y {z}_{n} } \ne 0$ in \Cref{sec:Sufficient condition of ONEC: maximum principle}.
If ${\Psi }_{ y {z}_{m} } \equiv 0$ for all $m \ge n / 2$, then $\mathbb{E} [ | {X}_{T} |^{2n - 2 + \epsilon } ] < \infty $ can be weakened to $\mathbb{E} [ | {X}_{T} |^{n} ] < \infty $.
\end{remark}

Let $\mathcal{U}_{0}$ denote the set of all the admissible controls and $\mathcal{U}$ denote the set of all the admissible feedback controls.
With a slight abuse of notation, corresponding to $\hat{u} \in \mathcal{U}$, we let $\{ {X}^{ t, x, \hat{u} }_{s} \}_{ s \in [ t,T ] }$ and $J ( t, x, \hat{u} )$ be the state process and the objective function, respectively.
In addition, for the sake of brevity, we let ${z}_{0} := 1$, ${\partial }_{x} := \frac{\partial }{\partial x}$,
${f}_{ \vec{z} } := ( {f}_{ {z}_{1} }, \ldots, {f}_{ {z}_{n} } )$ for any $f$ continuously differentiable in $\vec{z}$,
${f}_{ \vec{z} \vec{z} }$ denote the Hessian matrix of $f$ with respect to $\vec{z}$ for any $f$ twice differentiable in $\vec{z}$,
and $\langle \cdot, \cdot \rangle$ denote the scalar product (or namely, inner product) of two row vectors;
and employ the infinitesimal operator $\mathcal{D}_{\zeta }$ by
\begin{equation*}
\mathcal{D}_{\zeta } f ( t,x ):= {f}_{t} ( t,x ) + {f}_{x} ( t,x ) b ( t,x, \zeta ) + \frac{1}{2} {f}_{xx} ( t,x ) \big( \sigma ( t,x, \zeta ) \big)^{2}
\end{equation*}
for $\zeta \in \mathcal{R}$ and any bivariate function $f$ continuously differentiable in the first argument and twice differentiable in the second argument.

It is well-known that the dynamic maximization problems,
maximizing \cref{eq:objective function :eq} subject to \cref{eq:controlled SDE :eq} over all admissible controls $u$ (or all $\hat{u} \in \mathcal{U}$) for all initial pairs $( t,x )$, are usually time-inconsistent,
due to the fairly general initial-pair-dependent function $\Psi ( t,x, \cdot, \cdot )$ and the central moments in the objective function.
In other words, any maximizer $u$ for some initial pair $( t,x )$ may not realize the maximization of \cref{eq:objective function :eq} for the initial pair $( s, {X}^{t,x,u}_{s} )$.
For example, \cite[Section IV]{Strotz-1955} illustrates the time-consistency in a non-exponential discounting problem.
In terms of the mean-variance problem in \cite{Zhou-Li-2000}, mirroring the derivation therein one can find that 
\begin{itemize}
\item the maximizer corresponding to the initial pair $( \tau, {X}_{\tau } )$ is represented by $\hat{u} ( \cdot, \cdot ) = f ( \tau, {X}_{\tau }, \cdot, \cdot )$ with a continuously differentiable function $f$,
\item and $f ( \tau, {X}_{\tau }, \cdot, \cdot ) = f ( 0, {X}_{0}, \cdot, \cdot )$ does not necessary hold due to the positive quadratic variation of $\{ f ( t, {X}_{t}, s,y ) \}_{ t \in [ 0, \tau ] }$ with $( s,y )$ fixed.
\end{itemize}
To overcome the time-inconsistency, researchers formulate the problem as a game between the agent and her/his future selves, pioneered by \cite{Strotz-1955}. 
A subgame perfect Nash equilibrium point for this game is regarded as a time-consistent solution, since any deviation from it at any time instant will be worse off and each of the future selves has no incentive to change the equilibrium strategy.
Interested readers can also refer to \cite[Section 2.3]{Bjork-Murgoci-Zhou-2014} for the concept of the game and a ``subgame perfect Nash equilibrium point''.
In this paper, we take this game-theoretic perspective and seek a closed-loop Nash equilibrium control (see also \cite{Bjork-Murgoci-Zhou-2014,Bjork-Khapko-Murgoci-2017}) 
and an open-loop Nash equilibrium control (see also \cite{Hu-Jin-Zhou-2012,Hu-Jin-Zhou-2017}), which are defined as follows.

\begin{definition}\label{def:closed-loop Nash equilibrium control}
$\tilde{u} \in \mathcal{U}$ is a closed-loop Nash equilibrium control (CNEC), if
\begin{equation}
0 \le \liminf_{\varepsilon \downarrow 0}
      \frac{1}{\varepsilon } \big( J ( t, \tilde{X}_{t}, \tilde{u} ) - J ( t, \tilde{X}_{t}, \tilde{u}^{ t, \varepsilon, \zeta } ) \big), \quad \mathbb{P}-a.s., ~ a.e. ~ t \in [ 0,T ),
\label{eq:closed-loop Nash equilibrium control :eq}
\end{equation}
for $\tilde{X} := {X}^{ 0, {x}_{0}, \tilde{u} }$ and any constant $\zeta \in \mathcal{R}$ satisfying $\tilde{u}^{ t, \varepsilon, \zeta } \in \mathcal{U}$,
where $\tilde{u}^{ t, \varepsilon, \zeta }$ is a spike variation of $\tilde{u}$ given by
$\tilde{u}^{ t, \varepsilon, \zeta } ( s, \cdot ) = \tilde{u} ( s, \cdot ) {1}_{\{ s \notin [ t, t + \varepsilon ) \}} + \zeta {1}_{\{ s \in [ t, t + \varepsilon ) \}}$.
Moreover, corresponding to any fixed CNEC $\tilde{u}$, $V ( t,x ) := J ( t,x, \tilde{u} )$ is called an equilibrium value function (EVF).
\end{definition}

\begin{remark}
Our \Cref{def:closed-loop Nash equilibrium control} for CNEC is a bit weaker than that in \cite{Bjork-Khapko-Murgoci-2017},
while \cite{Bjork-Khapko-Murgoci-2017} used an arbitrary function to make the perturbation (namely, $\zeta = \zeta ( s,y )$).
In fact, our modification is due to the use of Dynkin's formula for, e.g., \cref{eq:mean difference in time :eq},
where the feedback control should be continuous in time.
In other words, replacing the constant $\zeta $ by the function $\zeta ( t,x )$ continuous in $t$, as well as replacing $\mathcal{D}_{\zeta }$ by $\mathcal{D}_{ \zeta ( t,x ) }$,
does not affect the validness of results in this work.
Apart from that, our \Cref{def:closed-loop Nash equilibrium control} for CNEC is also much weaker than the strong equilibrium strategy in \cite{Huang-Zhou-2021,He-Jiang-2021}.
Although the weak equilibrium strategy might correspond to a stationary point rather than a maximum as mentioned in \cite[Remark 3.5]{Bjork-Khapko-Murgoci-2017},
the coming sufficient condition guarantees that $J ( t, \bar{X}_{t}, \bar{u}^{ t, \varepsilon, \zeta } ) - J ( t, \bar{X}_{t}, \bar{u} )$,
even if positive, is at most $o ( \varepsilon )$ (see the proof for \Cref{thm:verification theorem}).
Notably, \cite{He-Jiang-2021} has shown that many classical time-inconsistent control problems, including the mean-variance problem in \cite{Basak-Chabakauri-2010}, have no proper strong equilibrium strategy,
if the perturbation $\zeta $ could be a smooth function.
Therefore, we do not intend to find the strong equilibrium strategy for our problem, but leave this work to our future research.
\end{remark}

\begin{definition}\label{def:open-loop Nash equilibrium control}
$\bar{u} \in \mathcal{U}_{0}$ is an open-loop Nash equilibrium control (ONEC), if
\begin{equation}
0 \le \liminf_{\varepsilon \downarrow 0}
      \frac{1}{\varepsilon } \big( J ( t, \bar{X}_{t}, \bar{u} ) - J ( t, \bar{X}_{t}, \bar{u}^{ t, \varepsilon, \zeta } ) \big), \quad \mathbb{P}-a.s., ~ a.e. ~ t \in [ 0,T ),
\label{eq:open-loop Nash equilibrium control :eq}
\end{equation}
for $\bar{X} := {X}^{ 0, {x}_{0}, \bar{u} }$ and any $\mathcal{F}_{t}$-measurable random variable $\zeta $ satisfying $\bar{u}^{ t, \varepsilon, \zeta } \in \mathcal{U}_{0}$,
where $\bar{u}^{ t, \varepsilon, \zeta }$ is a spike variation of $\bar{u}$ given by $\bar{u}^{ t, \varepsilon, \zeta }_{s} = \bar{u}_{s} + \zeta {1}_{\{ s \in [ t, t + \varepsilon ) \}}$.
\end{definition}

\begin{remark}
Distinct from the substitution used in the spike variation for CNEC, the spike variation for ONEC uses the addition of a perturbation.
This modification helps limit the differences, such as $\sigma ( s, \bar{X}_{s}, \bar{u}_{s} + \zeta ) - \sigma ( s, \bar{X}_{s}, \bar{u}_{s} )$ in the proof for \Cref{lem:estimate for square of mean},
by a multiple of $| \zeta |$,
given that the admissibility of $u$ does not ensure the boundedness of every ${u}_{s}$.
In other words, if we had adopted some adequate assumptions for the growth or boundedness of $( b, \sigma )$ and the integrability of $u$,
the spike variation for ONEC could have been generalized as $\bar{u}^{ t, \varepsilon, \zeta }_{s} = \bar{u}_{s} {1}_{\{ s \notin [ t, t + \varepsilon ) \}} + {\zeta }_{s} {1}_{\{ s \in [ t, t + \varepsilon ) \}}$.
See, e.g., \cite{Peng-1990,Buckdahn-Djehiche-Li-2011,Alia-2019}.
\end{remark}

\section{Sufficient condition for CNEC: extended HJB-PDE}
\label{sec:Sufficient condition of CNEC: extended HJB-PDE}

In this section, we introduce an extended HJB system with several integrality and smoothness conditions for deriving a CNEC.
For any $\hat{u} \in \mathcal{U}$, let ${m}^{ \hat{u} }_{j} ( t,x ) := \mathbb{E}_{t} [ ( {X}^{ t,x, \hat{u} }_{T} )^{j} ]$
and $\vec{m}^{ \hat{u} } := ( {m}^{ \hat{u} }_{1}, \ldots, {m}^{ \hat{u} }_{n} )$.
Without loss of generality, we consider the following fairly general objective function with higher-order original moments,
\begin{equation}
J ( t,x, \hat{u} ) = \mathbb{E}_{t} \big[ \Phi \big( t, x, {X}^{ t,x, \hat{u} }_{T}, \vec{m}^{ \hat{u} } ( t,x ) \big) \big],
\label{eq:objective functional alternative :eq}
\end{equation}
where $\Phi $ has the same properties as of $\Psi $, e.g. $\Phi ( t,x,y, \vec{z} )$ is continuously differentiable in $\vec{z}$ and twice differentiable in $y$.
Obviously, in terms of our problem with higher-order central moments,
\begin{equation}
\Phi ( t, x, y, \vec{z} ) = \Psi \bigg( t,x,y, {z}_{1}, {z}_{2} - {z}_{1}^{2}, {z}_{3} - 3 {z}_{1} {z}_{2} + 2 {z}_{1}^{3}, \ldots, \sum_{j=0}^{n} \binom{n}{j} (-1)^{j} {z}_{1}^{j} {z}_{n-j} \bigg).
\label{eq:objective functional rearrangement for our problem :eq}
\end{equation}
In particular, we have $\Phi ( t,x,y,z, {z}^{2}, \ldots, {z}^{n} ) = \Psi ( t,x,y,z, 0, \ldots, 0 )$.
For the $( \mathbb{F}, \mathbb{P} )$-martingales 
\begin{equation*}
\{ {m}^{ \hat{u} }_{j} ( v, {X}^{ t,x, \hat{u} }_{v} ) = \mathbb{E}_{v} [ ( {X}^{ t,x, \hat{u} }_{T} )^{j} ] \}_{ v \in [ t,T ] } \quad and \quad
\{ \mathbb{E}_{v} [ \Phi ( s, y, {X}^{ t,x, \hat{u} }_{T}, \vec{z} ) ] \}_{ v \in [ t,T ] } 
\end{equation*}
for any $( t,x ), ( s,y, \vec{z} )$ and appropriate $\hat{u}$,
Feynman-Kac formula gives their PDE representation (see, e.g., \cite[Theorem 7.4.1, p. 373]{Yong-Zhou-1999}).
As an analogue, for any $\tilde{u} \in \mathcal{U}$, applying It\^o's rule to ${m}^{(j)} ( v, {X}^{ t,x, \tilde{u} }_{v} )$ and ${U}^{ s,y, \vec{z} } ( v, {X}^{ t,x, \tilde{u} }_{v} )$ produces the following lemma.
The proof is straightforward, so we omit it due to the page limit.

\begin{lemma}\label{lem:Feynman-Kac representation}
Fix $\tilde{u} \in \mathcal{U}$ and $( s,y, \vec{z} ) \in [ 0,T ) \times \mathbb{R} \times \mathbb{R}^{n}$.
Assume that ${U}^{ s,y, \vec{z} }$ and $\vec{m} := ( {m}^{(1)}, \ldots, {m}^{(n)} )$ are the classical solutions of the following PDEs on $[ 0,T ] \times \mathbb{R}$:
\begin{align}
& 0 = \mathcal{D}_{ \tilde{u} ( t,x ) } {U}^{ s,y, \vec{z} } ( t,x ), && s.t. \qquad {U}^{ s,y, \vec{z} } ( T,x ) = \Phi ( s,y,x, \vec{z} ); \label{eq:PDE U :eq}
\\
& 0 = \mathcal{D}_{ \tilde{u} ( t,x ) } {m}^{(j)} ( t,x ), && s.t. \qquad {m}^{(j)} ( T,x ) = {x}^{j}, \quad \forall j = 1,2,\ldots, n, \label{eq:PDE m :eq}
\end{align}
respectively; and for any $( t,x ) \in [ 0,T ) \times \mathbb{R}$,
\begin{align}
& \mathbb{E} \bigg[ \int_{t}^{T} \big| {U}^{ s,y, \vec{z} }_{x} ( v, {X}^{ t,x, \tilde{u} }_{v} )
                                       \sigma \big( v, {X}^{ t,x, \tilde{u} }_{v}, \tilde{u} ( v, {X}^{ t,x, \tilde{u} }_{v}) \big) \big|^{2} dv \bigg] < \infty; \label{eq:integrability of U :eq}
\\
& \mathbb{E} \bigg[ \int_{t}^{T} \big| {m}^{(j)}_{x} ( v, {X}^{ t,x, \tilde{u} }_{v} )
                                       \sigma \big( v, {X}^{ t,x, \tilde{u} }_{v}, \tilde{u} ( v, {X}^{ t,x, \tilde{u} }_{v}) \big) \big|^{2} dv \bigg] < \infty,
  \quad \forall j = 1,2,\ldots, n. \label{eq:integrability of m :eq}
\end{align}
Then, ${U}^{ s,y, \vec{z} } ( t,x ) = \mathbb{E}_{t} [ \Phi ( s,y, {X}^{ t,x, \tilde{u} }_{T}, \vec{z} ) ]$ and $\vec{m} ( t,x ) = \vec{m}^{ \tilde{u} } ( t,x )$.
Furthermore, if $\tilde{u}$ is a CNEC, then $V ( t,x ) = {U}^{ t,x, \vec{m} ( t,x ) } ( t,x )$ is an EVF.
\end{lemma}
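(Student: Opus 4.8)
The plan is to obtain both representations from a single It\^o-expansion argument and then assemble the equilibrium-value-function identity from them. First I would fix the parameters $( s,y, \vec{z} )$ together with the initial pair $( t,x )$, and apply It\^o's rule to the process $v \mapsto {U}^{ s,y, \vec{z} } ( v, {X}^{ t,x, \tilde{u} }_{v} )$ on $[ t,T ]$. Since ${U}^{ s,y, \vec{z} }$ is a classical (hence $C^{1,2}$) solution and ${X}^{ t,x, \tilde{u} }$ solves \cref{eq:controlled SDE :eq} under the feedback $\tilde{u}$, the drift coefficient collected by It\^o's formula is exactly $\mathcal{D}_{ \tilde{u} ( v, {X}^{ t,x, \tilde{u} }_{v} ) } {U}^{ s,y, \vec{z} } ( v, {X}^{ t,x, \tilde{u} }_{v} )$, i.e. the PDE operator evaluated along the path, which vanishes identically by \cref{eq:PDE U :eq}. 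Hence only the martingale part survives:
\begin{equation*}
{U}^{ s,y, \vec{z} } ( T, {X}^{ t,x, \tilde{u} }_{T} ) - {U}^{ s,y, \vec{z} } ( t, x ) = \int_{t}^{T} {U}^{ s,y, \vec{z} }_{x} ( v, {X}^{ t,x, \tilde{u} }_{v} ) \sigma \big( v, {X}^{ t,x, \tilde{u} }_{v}, \tilde{u} ( v, {X}^{ t,x, \tilde{u} }_{v} ) \big) \, d {W}_{v}.
\end{equation*}

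The key step is to show that the stochastic integral on the right is a \emph{true} martingale, so that its $\mathbb{E}_{t} [ \cdot ]$ vanishes rather than being merely that of a local martingale; this is precisely what the square-integrability condition \cref{eq:integrability of U :eq} supplies. Taking $\mathbb{E}_{t} [ \cdot ]$ and inserting the terminal condition ${U}^{ s,y, \vec{z} } ( T,x ) = \Phi ( s,y,x, \vec{z} )$ together with ${X}^{ t,x, \tilde{u} }_{t} = x$ yields ${U}^{ s,y, \vec{z} } ( t,x ) = \mathbb{E}_{t} [ \Phi ( s,y, {X}^{ t,x, \tilde{u} }_{T}, \vec{z} ) ]$. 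Repeating the identical computation for each ${m}^{(j)}$, with the drift killed by \cref{eq:PDE m :eq}, the martingale property secured by \cref{eq:integrability of m :eq}, and the terminal condition ${m}^{(j)} ( T,x ) = {x}^{j}$, gives ${m}^{(j)} ( t,x ) = \mathbb{E}_{t} [ ( {X}^{ t,x, \tilde{u} }_{T} )^{j} ] = {m}^{ \tilde{u} }_{j} ( t,x )$ for every $j$, that is, $\vec{m} ( t,x ) = \vec{m}^{ \tilde{u} } ( t,x )$.

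For the final claim I would specialise the first representation to $s = t$, $y = x$, and $\vec{z} = \vec{m} ( t,x )$. Some care is needed here because $\vec{z}$ in ${U}^{ s,y, \vec{z} }$ is a frozen parameter whereas $\vec{m} ( t,x )$ depends on the evaluation point; however, for each fixed $( t,x )$ the quantity $\vec{m} ( t,x )$ is a deterministic constant, so it may legitimately be used as the parameter $\vec{z}$ and pulled outside $\mathbb{E}_{t}$. This produces ${U}^{ t,x, \vec{m} ( t,x ) } ( t,x ) = \mathbb{E}_{t} [ \Phi ( t,x, {X}^{ t,x, \tilde{u} }_{T}, \vec{m} ( t,x ) ) ]$. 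Substituting $\vec{m} ( t,x ) = \vec{m}^{ \tilde{u} } ( t,x )$ from the previous step and comparing with the objective representation \cref{eq:objective functional alternative :eq}, namely $J ( t,x, \tilde{u} ) = \mathbb{E}_{t} [ \Phi ( t,x, {X}^{ t,x, \tilde{u} }_{T}, \vec{m}^{ \tilde{u} } ( t,x ) ) ]$, I conclude ${U}^{ t,x, \vec{m} ( t,x ) } ( t,x ) = J ( t,x, \tilde{u} ) = V ( t,x )$ when $\tilde{u}$ is a CNEC, which is the EVF identity.

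Since the argument is essentially a two-fold application of the Feynman-Kac formula, no serious obstacle arises, consistent with the proof being omitted. The only genuinely delicate points are the upgrade from a local to a true martingale, handled by the assumed integrability \cref{eq:integrability of U :eq,eq:integrability of m :eq}, and the legitimacy of freezing $\vec{z} = \vec{m} ( t,x )$ in the last step; I expect the former to be where one must argue most carefully in a fully rigorous write-up.
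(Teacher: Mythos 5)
Your proposal is correct and is exactly the argument the paper has in mind: it omits the proof as "straightforward," noting only that it follows by applying It\^o's rule to ${m}^{(j)} ( v, {X}^{ t,x, \tilde{u} }_{v} )$ and ${U}^{ s,y, \vec{z} } ( v, {X}^{ t,x, \tilde{u} }_{v} )$, with the PDEs killing the drift, the square-integrability conditions \cref{eq:integrability of U :eq,eq:integrability of m :eq} upgrading the local martingale to a true one, and the EVF identity obtained by freezing $\vec{z} = \vec{m} ( t,x )$ and comparing with \cref{eq:objective functional alternative :eq}. No discrepancy to report.
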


Similarly, for fixed $\tilde{u} \in \mathcal{U}$ and $( s,y )\in [ 0,T ) \times \mathbb{R}$, if $\vec{\lambda }^{s,y}$ is the classical solution of
\begin{align}
& 0 = \mathcal{D}_{ \tilde{u} ( t,x ) } \vec{\lambda }^{s,y} ( t,x ),
\quad s.t. \quad \vec{\lambda }^{s,y} ( T,x ) = {\Phi }_{ \vec{z} } \big( s,y, x, \vec{m} ( s,y ) \big),
\label{eq:PDE lambda :eq}
\\
& \mathbb{E} \bigg[ \int_{t}^{T} \langle \vec{\lambda }^{s,y}_{x} ( v, {X}^{ t,x, \tilde{u} }_{v} ), \vec{\lambda }^{s,y}_{x} ( v, {X}^{ t,x, \tilde{u} }_{v} ) \rangle
                                 \big| \sigma \big( v, {X}^{ t,x, \tilde{u} }_{v}, \tilde{u} ( v, {X}^{ t,x, \tilde{u} }_{v} ) \big) \big|^{2} dv \bigg] < \infty,
\label{eq:integrability of lambda :eq}
\end{align}
then $\vec{\lambda }^{s,y} ( t,x ) = \mathbb{E}_{t} [ {\Phi }_{ \vec{z} } ( s,y, {X}^{ t,x, \tilde{u} }_{T}, \vec{m} ( s,y ) ) ]$.
For the sake of notational brevity, we omit the statement of the dependence of $\tilde{u}$ for $( \vec{m}, {U}^{ s,y, \vec{z} }, \vec{\lambda }^{s,y} )$ in what follows, unless otherwise mentioned.
As a main result, the following theorem shows that a CNEC can be derived from a system including the above-mentioned PDEs and square-integrability conditions with an optimality condition.

\begin{theorem}\label{thm:verification theorem}
For a fixed $\tilde{u} \in \mathcal{U}$ and any $( s,y, \vec{z} ) \in [ 0,T ) \times \mathbb{R} \times \mathbb{R}^{n}$,
assume that $( \vec{m}, {U}^{ s,y, \vec{z} }, \vec{\lambda }^{s,y} )$ are the classical solution of the PDEs \cref{eq:PDE m :eq,eq:PDE U :eq,eq:PDE lambda :eq} with the optimality condition
\begin{equation}
\tilde{u} ( t,x ) \in \argmax_{ \zeta \in \mathcal{R} } \{ \mathcal{D}_{\zeta } {U}^{ t,x, \vec{m} ( t,x ) } ( t,x ) + \langle \vec{\lambda }^{t,x} ( t,x ), \mathcal{D}_{\zeta } \vec{m} ( t,x ) \rangle \}
\label{eq:optimality condition :eq}
\end{equation}
and the square-integrability conditions \cref{eq:integrability of U :eq,eq:integrability of m :eq,eq:integrability of lambda :eq}.
Then, $\tilde{u}$ is a CNEC.
\end{theorem}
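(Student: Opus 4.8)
The plan is to show that, for fixed $(t,x)$ (to be evaluated at $x=\tilde{X}_t$ afterwards), the spike variation changes the objective only at first order in $\varepsilon$, with
\[
J(t,x,\tilde{u}^{t,\varepsilon,\zeta}) = V(t,x) + \varepsilon\big( \mathcal{D}_{\zeta} {U}^{t,x,\vec{m}(t,x)}(t,x) + \langle \vec{\lambda}^{t,x}(t,x), \mathcal{D}_{\zeta}\vec{m}(t,x)\rangle \big) + o(\varepsilon),
\]
where $V(t,x) = J(t,x,\tilde{u}) = {U}^{t,x,\vec{m}(t,x)}(t,x)$ by \Cref{lem:Feynman-Kac representation}. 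Granting this, subtracting from $J(t,x,\tilde{u})=V(t,x)$, dividing by $\varepsilon$, and letting $\varepsilon\downarrow 0$ gives that the $\liminf$ in \cref{eq:closed-loop Nash equilibrium control :eq} equals the negative of the bracketed expression. Since that expression vanishes at $\zeta=\tilde{u}(t,x)$ (because $\mathcal{D}_{\tilde{u}}{U}=0$ and $\mathcal{D}_{\tilde{u}}\vec{m}=0$ by the PDEs \cref{eq:PDE U :eq,eq:PDE m :eq}), the optimality condition \cref{eq:optimality condition :eq} forces the bracket to be $\le 0$ for every admissible constant $\zeta$, which yields \cref{eq:closed-loop Nash equilibrium control :eq}.

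To derive the expansion I would condition at time $t+\varepsilon$: since $\tilde{u}^{t,\varepsilon,\zeta}$ coincides with $\tilde{u}$ on $[t+\varepsilon,T]$, the frozen moment vector carried by the objective becomes $\vec{z}^{\varepsilon} := \vec{m}^{\tilde{u}^{t,\varepsilon,\zeta}}(t,x) = \mathbb{E}_{t}[\vec{m}(t+\varepsilon, X_{t+\varepsilon})]$ (with $X$ the perturbed state), while the inner conditional expectation collapses to ${U}^{t,x,\vec{z}^{\varepsilon}}(t+\varepsilon,X_{t+\varepsilon})$. Two first-order expansions then combine. Applying Dynkin's formula on $[t,t+\varepsilon)$, legitimate because the square-integrability conditions \cref{eq:integrability of U :eq,eq:integrability of m :eq,eq:integrability of lambda :eq} annihilate the It\^o terms in expectation, together with the PDEs \cref{eq:PDE U :eq,eq:PDE m :eq}, gives $\vec{z}^{\varepsilon}-\vec{m}(t,x) = \varepsilon\,\mathcal{D}_{\zeta}\vec{m}(t,x)+o(\varepsilon)$ and $\mathbb{E}_{t}[{U}^{t,x,\vec{z}^{\varepsilon}}(t+\varepsilon,X_{t+\varepsilon})] = {U}^{t,x,\vec{z}^{\varepsilon}}(t,x) + \varepsilon\,\mathcal{D}_{\zeta}{U}^{t,x,\vec{m}(t,x)}(t,x) + o(\varepsilon)$. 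The non-separability surfaces precisely in that the parameter $\vec{z}$ has itself moved, so I must also Taylor-expand ${U}^{t,x,\vec{z}^{\varepsilon}}(t,x)$ about $\vec{z}=\vec{m}(t,x)$. The crucial identity is that differentiating under the expectation gives $\partial_{\vec{z}}{U}^{t,x,\vec{z}}(t,x)\big|_{\vec{z}=\vec{m}(t,x)} = \mathbb{E}_{t}[\Phi_{\vec{z}}(t,x,{X}^{t,x,\tilde{u}}_{T},\vec{m}(t,x))] = \vec{\lambda}^{t,x}(t,x)$ by the adjoint PDE \cref{eq:PDE lambda :eq}; this is exactly what the definition of $\vec{\lambda}$ was engineered to produce. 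Feeding $\vec{z}^{\varepsilon}-\vec{m}(t,x)=\varepsilon\,\mathcal{D}_{\zeta}\vec{m}(t,x)+o(\varepsilon)$ into this Taylor expansion produces the $\langle\vec{\lambda}^{t,x},\mathcal{D}_{\zeta}\vec{m}\rangle$ term and completes the displayed expansion.

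The main obstacle is the rigorous, \emph{uniform} control of the two $o(\varepsilon)$ remainders and their interaction. Concretely I would need to show (i) the Dynkin remainder $\mathbb{E}_{t}[\int_{t}^{t+\varepsilon}(\mathcal{D}_{\zeta}{U}^{t,x,\vec{z}^{\varepsilon}}-\mathcal{D}_{\zeta}{U}^{t,x,\vec{m}(t,x)})(v,X_{v})\,dv]$ is $o(\varepsilon)$, which requires continuity of $\mathcal{D}_{\zeta}{U}$ in $\vec{z}$ and the flow estimate $\mathbb{E}_{t}|X_{v}-x|^{2}=O(\varepsilon)$ for $v\in[t,t+\varepsilon)$ coming from the Lipschitz coefficients $(b,\sigma)$ in the standing assumptions; and (ii) that the first-order Taylor remainder for ${U}$ in $\vec{z}$ is $O(|\vec{z}^{\varepsilon}-\vec{m}(t,x)|^{2})=O(\varepsilon^{2})=o(\varepsilon)$, which follows from the uniform Lipschitz continuity of $\Phi_{\vec{z}}$ (inherited from $\Psi$) and the admissibility moment bound $\mathbb{E}[|{X}_{T}|^{2n-2+\epsilon}]<\infty$ guaranteeing the relevant expectations are finite. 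These are standard SDE flow-regularity facts, so I would invoke them directly from the hypotheses on $(b,\sigma,\Phi)$ rather than reprove them, and I would note that the same bookkeeping shows that $J(t,x,\tilde{u}^{t,\varepsilon,\zeta})-J(t,x,\tilde{u})$ is at most $o(\varepsilon)$ when the bracket vanishes, consistent with the weak (stationary-point) nature of the equilibrium discussed after \Cref{def:closed-loop Nash equilibrium control}.
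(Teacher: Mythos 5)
Your proposal is correct and follows essentially the same route as the paper's proof: conditioning at $t+\varepsilon$ via the tower property, Dynkin expansions for $\vec{m}$ and ${U}$ justified by the PDEs and square-integrability conditions, a first-order expansion in the moment argument whose gradient is precisely $\vec{\lambda }^{t,x} ( t,x )$, and the observation that the bracket vanishes at $\zeta = \tilde{u} ( t,x )$ so the optimality condition forces it to be nonpositive for every $\zeta$. The only cosmetic difference is that you Taylor-expand ${U}^{ t,x, \vec{z} } ( t,x )$ in $\vec{z}$ after differentiating under the expectation, whereas the paper expands $\Phi $ in $\vec{z}$ inside the expectation; both hinge on the same identity $\partial_{\vec{z}} {U}^{ t,x, \vec{z} } ( t,x ) |_{ \vec{z} = \vec{m} ( t,x ) } = \vec{\lambda }^{t,x} ( t,x )$.
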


\begin{proof}
Write ${X}^{ t, \tilde{X}_{t}, \zeta } := {X}^{ t, \tilde{X}_{t}, \hat{u} ( \cdot, \cdot ) }$ with $\hat{u} ( \cdot, \cdot ) \equiv \zeta \in \mathcal{R}$ for short.
This slight abuse of notation is used throughout the rest of this paper, unless otherwise stated.
Based on the previous results for $( \vec{m}, {U}^{ s,y, \vec{z} }, \vec{\lambda }^{s,y} )$, let us proceed with
\begin{equation}
    {m}^{ \tilde{u}^{ t, \varepsilon, \zeta } }_{j} ( t,x )
= \mathbb{E}_{t} [ {m}^{ \tilde{u} }_{j} ( t + \varepsilon, {X}^{ t,x, \zeta }_{ t + \varepsilon } ) ]
= {m}^{ \tilde{u} }_{j} ( t,x ) + \mathcal{D}_{\zeta } {m}^{ \tilde{u} }_{j} ( t,x ) \varepsilon + o ( \varepsilon ),
\label{eq:mean difference in time :eq}
\end{equation}
where the first equality follows from the definition of ${m}^{ \hat{u} }_{j} ( \cdot, \cdot )$ and the tower property (or namely, iterated conditioning)
and the last equality is a straightforward application of Dynkin's formula.
Applying the tower property and Dynkin's formula again yields
\begin{align*}
  \mathbb{E}_{t} [ \Phi ( s,y, {X}^{ t + \varepsilon, {X}^{ t,x, \zeta }_{ t + \varepsilon }, \tilde{u} }_{T}, \vec{z} ) ]
= {U}^{ s,y, \vec{z} } ( t,x ) + \mathcal{D}_{\zeta } {U}^{ s,y, \vec{z} } ( t,x ) \varepsilon + o ( \varepsilon ).
\end{align*}
Then, given the smoothness of $\Phi ( t,x, \cdot, \cdot )$, we have
\begin{align*}
  J ( t,x, \tilde{u}^{ t, \varepsilon, \zeta } )
& = \mathbb{E}_{t} \big[ \Phi \big( t, x, {X}^{ t + \varepsilon, {X}^{ t,x, \zeta }_{ t + \varepsilon }, \tilde{u} }_{T},
                                    \mathbb{E}_{t} [ \vec{m}^{ \tilde{u} } ( t + \varepsilon, {X}^{ t,x, \zeta }_{ t + \varepsilon } ) ] \big) \big] \\
& = J ( t,x, \tilde{u} ) + \big( \mathcal{D}_{\zeta } {U}^{ t,x, \vec{m} ( t,x ) } ( t,x ) + \langle \vec{\lambda }^{t,x} ( t,x ), \mathcal{D}_{\zeta } \vec{m} ( t,x ) \rangle \big) \varepsilon + o ( \varepsilon ) \\
& \le J ( t,x, \tilde{u} ) + o ( \varepsilon ),
\end{align*}
where the last inequality follows from \cref{eq:optimality condition :eq} with the linear combination of the PDEs in \cref{eq:PDE U :eq,eq:PDE m :eq}, that is,
$0 = \mathcal{D}_{ \tilde{u} ( t,x ) } {U}^{ t,x, \vec{m} ( t,x ) } ( t,x ) + \langle \vec{\lambda }^{t,x} ( t,x ), \mathcal{D}_{ \tilde{u} ( t,x ) } \vec{m} ( t,x ) \rangle$.
Consequently, $\liminf_{ \varepsilon \downarrow 0 } \frac{1}{\varepsilon } [ J ( t,x, \tilde{u} ) - J ( t,x, \tilde{u}^{ t, \varepsilon , \zeta } ) ] \ge 0$,
and hence, $\tilde{u}$ is a CNEC.
\end{proof}

In addition to the previous smoothness condition for $( \Psi, \Phi )$, we now assume that $\Phi = \Phi ( t,x,y, \vec{z} )$ is differentiable in $t$ and twice differentiable in $( x, \vec{z} )$.
Then, we are able to establish a Bellman equation for the EVF.

\begin{theorem}[Verification theorem]\label{thm:Bellman equation}
Assume that the following properties hold:
\begin{enumerate}
\item For a fixed $\tilde{u} \in \mathcal{U}$ and any $( s,y, \vec{z} ) \in [ 0,T ) \times \mathbb{R} \times \mathbb{R}^{n}$,
      $( \vec{m}, {U}^{ s,y, \vec{z} }, \vec{\lambda }^{s,y} )$ fulfill the PDEs \cref{eq:PDE U :eq,eq:PDE m :eq,eq:PDE lambda :eq}
      with the square-integrability condition \cref{eq:integrability of U :eq,eq:integrability of m :eq,eq:integrability of lambda :eq}.
\item ${U}^{ s,y, \vec{z} } ( t,x )$ is continuously differentiable in $s$ and twice differentiable in $( y, \vec{z} )$.
\item For $\tilde{U} ( t,x ) := {U}^{ t,x, \vec{m} ( t,x ) } ( t,x )$,
      $V$ is a classical solution of
      \begin{equation}
      \begin{aligned}
      0 = \max_{ \zeta \in \mathcal{R} } \big\{ & \mathcal{D}_{\zeta } V ( t,x ) - \mathcal{D}_{\zeta } \tilde{U} ( t,x ) \\
                                              & + \mathcal{D}_{\zeta } {U}^{ t,x, \vec{m} ( t,x ) } ( t,x ) + \langle \vec{\lambda }^{t,x} ( t,x ), \mathcal{D}_{\zeta } \vec{m} ( t,x ) \rangle \big\},
      \end{aligned} \label{eq:Bellman equation :eq}
      \end{equation}
      whereby $\tilde{u} ( t,x )$ realizes the maximum, with the terminal condition
      \begin{equation}
      V ( T,x ) = \Phi ( T,x,x,x, {x}^{2}, \ldots, {x}^{n} ) \equiv \Psi ( T,x,x,x, 0, \ldots, 0 ). \label{eq:Bellman equation TC :eq}
      \end{equation}
\item For any $( t,x ) \in [ 0,T ) \times \mathbb{R}$ and $f = {V}_{x}, \tilde{U}_{x}$,
      \begin{equation*}
      \mathbb{E} \bigg[ \int_{t}^{T} \Big| f ( v, {X}^{ t,x, \tilde{u} }_{v} ) \sigma \big( v, {X}^{ t,x, \tilde{u} }_{v}, \tilde{u} ( v, {X}^{ t,x, \tilde{u} }_{v}) \big) \Big|^{2} dv \bigg] < \infty.
      \end{equation*}
\end{enumerate}
Then, $\tilde{u}$ is a CNEC, and $V ( t,x ) = J ( t,x, \tilde{u} )$; that is, $V$ is an EVF.
\end{theorem}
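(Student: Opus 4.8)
The plan is to deduce the result from the already-established \Cref{thm:verification theorem} once I identify the candidate $V$ with the ``diagonal'' function $\tilde{U}(t,x) = U^{t,x,\vec{m}(t,x)}(t,x)$. The starting observation is that, by \Cref{lem:Feynman-Kac representation}, $U^{s,y,\vec{z}}(t,x) = \mathbb{E}_{t}[\Phi(s,y,X^{t,x,\tilde{u}}_{T},\vec{z})]$ and $\vec{m} = \vec{m}^{\tilde{u}}$; freezing $(s,y,\vec{z}) = (t,x,\vec{m}(t,x))$ at the evaluation point gives $\tilde{U}(t,x) = \mathbb{E}_{t}[\Phi(t,x,X^{t,x,\tilde{u}}_{T},\vec{m}^{\tilde{u}}(t,x))] = J(t,x,\tilde{u})$. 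Hence proving ``$V$ is an EVF'' is the same as proving $V \equiv \tilde{U}$. Hypothesis (2), together with the extra smoothness now imposed on $\Phi$ and the regularity of $\vec{m}$ inherited from \cref{eq:PDE m :eq}, ensures $\tilde{U}$ is of class $C^{1,2}$, so $\mathcal{D}_{\zeta}\tilde{U}$ is well defined and the Bellman equation \cref{eq:Bellman equation :eq} is meaningful.

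The core step is to show that $g := V - \tilde{U}$ solves a homogeneous linear parabolic equation along $\tilde{u}$ with zero terminal data. Evaluating \cref{eq:Bellman equation :eq} at the maximizer $\zeta = \tilde{u}(t,x)$ makes the bracketed maximum vanish, and the two ``frozen-parameter'' terms $\mathcal{D}_{\tilde{u}(t,x)} U^{t,x,\vec{m}(t,x)}(t,x)$ and $\langle \vec{\lambda}^{t,x}(t,x), \mathcal{D}_{\tilde{u}(t,x)}\vec{m}(t,x)\rangle$ are each zero by \cref{eq:PDE U :eq} and \cref{eq:PDE m :eq} (read at the diagonal point). What remains is $\mathcal{D}_{\tilde{u}(t,x)} g(t,x) = 0$. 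For the terminal condition I would compute $\vec{m}(T,x) = (x,x^{2},\dots,x^{n})$ from \cref{eq:PDE m :eq}, so that $\tilde{U}(T,x) = \Phi(T,x,x,x,x^{2},\dots,x^{n}) = \Psi(T,x,x,x,0,\dots,0) = V(T,x)$, i.e. $g(T,\cdot)\equiv 0$.

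With the PDE and terminal data in hand, I would invoke the Feynman--Kac/Dynkin representation used throughout: $g(t,x) = \mathbb{E}_{t}[g(T,X^{t,x,\tilde{u}}_{T})] = 0$, where the square-integrability hypothesis (4) (applied to $g_{x} = V_{x} - \tilde{U}_{x}$) is exactly what upgrades the It\^o integral to a true martingale and lets the conditional expectation annihilate it. This gives $V = \tilde{U} = J(\cdot,\cdot,\tilde{u})$. Finally, since $V \equiv \tilde{U}$ as $C^{1,2}$ functions, $\mathcal{D}_{\zeta} V = \mathcal{D}_{\zeta} \tilde{U}$ for every $\zeta$, so the first two terms of \cref{eq:Bellman equation :eq} cancel identically and it collapses to $\max_{\zeta}\{\mathcal{D}_{\zeta} U^{t,x,\vec{m}(t,x)}(t,x) + \langle \vec{\lambda}^{t,x}(t,x),\mathcal{D}_{\zeta} \vec{m}(t,x)\rangle\} = 0$ attained at $\tilde{u}(t,x)$ --- precisely the optimality condition \cref{eq:optimality condition :eq}. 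Applying \Cref{thm:verification theorem} then yields that $\tilde{u}$ is a CNEC.

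I expect the main obstacle to be the uniqueness step rather than the algebra: one must guarantee the local martingale $\int_{t}^{\cdot} g_{x}\,\sigma\,dW$ is a genuine martingale so that it drops out under $\mathbb{E}_{t}[\cdot]$, which is why hypothesis (4) is stated separately for $V_{x}$ and $\tilde{U}_{x}$. A secondary point requiring care is the bookkeeping distinction between the frozen-parameter operator $\mathcal{D}_{\zeta} U^{t,x,\vec{m}(t,x)}$ and the total operator $\mathcal{D}_{\zeta} \tilde{U}$ acting on the diagonal: the whole verification hinges on the fact that these agree only after $V = \tilde{U}$ is known, while at $\zeta = \tilde{u}$ the frozen terms already vanish by the PDEs --- this is what both makes $g$ solve a clean homogeneous equation and recovers the Nash optimality condition.
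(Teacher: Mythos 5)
Your proposal is correct and follows essentially the same route as the paper's proof: evaluate the Bellman equation at the maximizer so that the frozen-parameter terms vanish by \cref{eq:PDE U :eq,eq:PDE m :eq}, deduce $\mathcal{D}_{\tilde{u}}(V-\tilde{U})=0$ with matching terminal data, conclude $V\equiv\tilde{U}$ by It\^o's rule under hypothesis (4), and then collapse \cref{eq:Bellman equation :eq} to the optimality condition \cref{eq:optimality condition :eq} so that \Cref{thm:verification theorem} and \Cref{lem:Feynman-Kac representation} finish the argument. The only difference is presentational (working with $g=V-\tilde{U}$ rather than stating the equality of the two operators), so nothing further is needed.
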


\begin{proof}
Plugging the maximizer $\tilde{u} ( t,x )$ with the PDEs in \cref{eq:PDE U :eq,eq:PDE m :eq} into the right-hand side of \cref{eq:Bellman equation :eq} yields
$\mathcal{D}_{ \tilde{u} (t,x) } V ( t,x ) = \mathcal{D}_{ \tilde{u} (t,x) } \tilde{U} ( t,x )$.
Given the square-integrability conditions for $( {V}_{x}, \tilde{U}_{x} )$,
and that $V ( T,x ) = \tilde{U} ( T,x )$ follows from \cref{eq:Bellman equation TC :eq} and the terminal conditions in \cref{eq:PDE U :eq,eq:PDE m :eq},
we can show $V \equiv \tilde{U}$ by applying It\^o's rule to $V ( v, {X}^{ t,x, \tilde{u} }_{v} ) - \tilde{U} ( v, {X}^{ t,x, \tilde{u} }_{v} )$.
Consequently, \cref{eq:Bellman equation :eq} gives \cref{eq:optimality condition :eq}, and then we conclude that $\tilde{u}$ is a CNEC according to \Cref{thm:verification theorem}.
Moreover, due to \Cref{lem:Feynman-Kac representation}, $V ( t,x ) = \tilde{U} ( t,x ) = J ( t,x, \tilde{u} )$ is an EVF.
\end{proof}

Collecting \cref{eq:PDE U :eq,eq:PDE m :eq,eq:PDE lambda :eq,eq:Bellman equation :eq,eq:Bellman equation TC :eq}, we obtain an extended HJB system,
which can be regarded as an extension of that in \cite[Definition 4.1]{Bjork-Khapko-Murgoci-2017}.
Thanks to \Cref{thm:verification theorem}, it is not necessary to solve the Bellman equation \cref{eq:Bellman equation :eq} for deriving a CNEC in our approach.
To end this section, we consider the special case with $\Phi ( \cdot, \cdot, y, \vec{z} ) \equiv \tilde{\Phi } ( y, \vec{z} )$; that is, $\Phi ( t,x,y, \vec{z} )$ is independent of the initial pair $( t,x )$.
The results are used to investigate some particular problems in \Cref{sec:Particular case: linear controlled SDE}.
In this case, with $\vec{m} ( t,x ) = \vec{m}^{ \tilde{u} } ( t,x )$ corresponding to the CNEC $\tilde{u}$,
the row vector $\vec{\lambda }^{s,y} ( t,x ) = \mathbb{E}_{t} [ \tilde{\Phi }_{ \vec{z} } ( {X}^{ t,x, \tilde{u} }_{T}, \vec{m} ( s,y ) ) ]$
and the matrix ${\mu }^{s,y} ( t,x ) := \mathbb{E}_{t} [ \tilde{\Phi }_{ \vec{z} \vec{z} } ( {X}^{ t,x, \tilde{u} }_{T}, \vec{m} ( s,y ) ) ]$,
plugging the result of applying the chain rule to $\mathcal{D}_{\zeta } \tilde{U} ( t,x )$ into the right-hand side of \cref{eq:Bellman equation :eq} yields the Bellman equation
\begin{equation}
\begin{aligned}
0 = \max_{ \zeta \in \mathcal{R} }
    \Big\{ & \mathcal{D}_{\zeta } V ( t,x )
           - \langle \vec{\lambda }^{t,x}_{x} ( t,x ), \vec{m}_{x} ( t,x ) \rangle \big( \sigma ( t,x, \zeta ) \big)^{2} \\
         & + \frac{1}{2} \langle \vec{m}_{x} ( t,x ) {\mu }^{t,x} ( t,x ), \vec{m}_{x} ( t,x ) \rangle \big( \sigma ( t,x, \zeta ) \big)^{2} \Big\},
\end{aligned}
\label{eq:Bellman equation reduced :eq}
\end{equation}
whereby $\tilde{u} ( t,x )$ realizes the maximum on the right-hand side.

\section{Sufficient condition for ONEC: maximum principle}
\label{sec:Sufficient condition of ONEC: maximum principle}

In this section, we refer to the spike variation method in \cite[pp. 124--140]{Yong-Zhou-1999} to develop a sufficient maximum principle for ONEC problem.
To verify that $\bar{u} \in \mathcal{U}_{0}$ is an ONEC, it suffices to prove that
$J ( t, \bar{X}_{t}, \bar{u} ) - J ( t, \bar{X}_{t}, \bar{u}^{ t, \varepsilon, \zeta } ) \ge o ( \varepsilon )$
for any $t$ and any spike variation $\bar{u}^{ t, \varepsilon, \zeta } \in \mathcal{U}_{0}$ of $\bar{u}$.
We fix $t \in [ 0,T )$ and suppose that the values of the $\mathcal{F}_{t}$-measurable random variables $( \bar{X}_{t}, \zeta )$ are given conditioned on $\mathcal{F}_{t}$,
and then write ${X}^{\varepsilon } := {X}^{ t, \bar{X}_{t}, \bar{u}^{ t, \varepsilon, \zeta } }$ for short.
In addition, for the sake of brevity,
we write $\bar{f} (s) := f ( s, \bar{X}_{s}, \bar{u}_{s} )$ and $\delta f(s) := f ( s, \bar{X}_{s}, \bar{u}_{s} + \zeta ) - \bar{f} (s)$ for $f = b, {b}_{x}, {b}_{xx}, \sigma, {\sigma }_{x}, {\sigma }_{xx}$.
To derive an expansion of $J ( t, \bar{X}_{t}, \bar{u} ) - J ( t, \bar{X}_{t}, \bar{u}^{ t, \varepsilon, \zeta } )$ with respect to $\varepsilon $,
let us proceed with the first-order and the second-order variational equations
\begin{equation}
\left\{ \begin{aligned}
d {y}^{\varepsilon }_{s} & = \bar{b}_{x} (s) {y}^{\varepsilon }_{s} ds
                           + \big( \bar{\sigma }_{x} (s) {y}^{\varepsilon }_{s} + {1}_{\{ s \in [ t, t + \varepsilon ) \}} \delta \sigma (s) \big) d {W}_{s}, \quad
  {y}^{\varepsilon }_{t} = 0; \\
d {z}^{\varepsilon }_{s} & = \Big( \bar{b}_{x} (s) {z}^{\varepsilon }_{s}
                                 + \frac{1}{2} \bar{b}_{xx} (s) ( {y}^{\varepsilon }_{s} )^{2}
                                 + {1}_{\{ s \in [ t, t + \varepsilon ) \}} \delta b(s) \Big) ds \\
                   & \quad + \Big( \bar{\sigma }_{x} (s) {z}^{\varepsilon }_{s}
                                 + \frac{1}{2} \bar{\sigma }_{xx} (s) ( {y}^{\varepsilon }_{s} )^{2}
                                 + {1}_{\{ s \in [ t, t + \varepsilon ) \}} \delta {\sigma }_{x} (s) {y}^{\varepsilon }_{s} \Big) d {W}_{s}, \quad
  {z}^{\varepsilon }_{t} = 0.
\end{aligned} \right.
\label{eq:variational equations :eq}
\end{equation}
Following the same line of proof as for \cite[Theorem 4.4, p. 128]{Yong-Zhou-1999}, we can show that for any positive integer $k$,
\begin{equation}
\left\{ \begin{aligned}
& \begin{aligned}
  &  \sup_{ s \in [ t,T ] } \mathbb{E}_{t} [ | {X}^{\varepsilon }_{s} - \bar{X}_{s} |^{2k} ] = O ( {\varepsilon }^{k} ),
  && \sup_{ s \in [ t,T ] } \mathbb{E}_{t} [ | {y}^{\varepsilon }_{s} |^{2k} ] = O ( {\varepsilon }^{k} ), \\
  & \sup_{ s \in [ t,T ] } \mathbb{E}_{t} [ | {z}^{\varepsilon }_{s} |^{2k} ] = O ( {\varepsilon }^{2k} ),
  && \sup_{ s \in [ t,T ] } \mathbb{E}_{t} [ | {X}^{\varepsilon }_{s} - \bar{X}_{s} - {y}^{\varepsilon }_{s} |^{2k} ] = O ( {\varepsilon }^{2k} ),
  \end{aligned} \\
& \sup_{ s \in [ t,T ] } \mathbb{E}_{t} [ | {X}^{\varepsilon }_{s} - \bar{X}_{s} - {y}^{\varepsilon }_{s} - {z}^{\varepsilon }_{s} |^{2k} ] = o ( {\varepsilon }^{2k} ),
\end{aligned} \right.
\label{eq:moment estimates :eq}
\end{equation}
and hence, for $i = 0, 1, \ldots, n$,
\begin{equation}
  \mathbb{E}_{t} [ ( {X}^{\varepsilon }_{T} )^{i} ]
= \mathbb{E}_{t} [ ( \bar{X}_{T} )^{i} ]
+ i \mathbb{E}_{t} [ ( \bar{X}_{T} )^{i-1} ( {y}^{\varepsilon }_{T} + {z}^{\varepsilon }_{T} ) ]
+ \binom{i}{2} \mathbb{E}_{t} [ ( \bar{X}_{T} )^{i-2} ( {y}^{\varepsilon }_{T} )^{2} ] + o ( \varepsilon ).
\label{eq:difference in moments of spike variation :eq}
\end{equation}

From \cref{eq:moment estimates :eq}, we have
$\sup_{ s \in [ t,T ] } | \mathbb{E}_{t} [ {y}^{\varepsilon }_{s} ] |^{2} \le \sup_{ s \in [ t,T ] } \mathbb{E}_{t} [ | {y}^{\varepsilon }_{s} |^{2} ] = O ( \varepsilon )$,
which implies that $\mathbb{E}_{t} [ {y}^{\varepsilon }_{s} ]$ and $\sqrt{\varepsilon }$ might be infinitesimals of the same order.
As a consequence, the terms in the form of $\mathbb{E}_{t} [ A {y}^{\varepsilon }_{T} ] \mathbb{E}_{t} [ B {y}^{\varepsilon }_{T} ]$
in the expansion of $J ( t, \bar{X}_{t}, \bar{u}^{ t, \varepsilon, \zeta } ) - J ( t, \bar{X}_{t}, \bar{u} )$ would seriously interfere with the establishment of adjoint equations.
Fortunately, the following lemma ensures that the above-mentioned dilemma does not occur.

\begin{lemma}\label{lem:estimate for square of mean}
Fix $\tau \in ( t,T ]$. Suppose that $\xi $ is an $( \varepsilon, \zeta )$-independent $\mathcal{F}_{\tau }$-measurable random variable, and
$\mathbb{E} [ | \xi |^{ 2 + \varrho } ] < \infty $ for some $\varrho > 0$.
Then, $| \mathbb{E}_{t} [ \xi {y}^{\varepsilon }_{\tau } ] |^{2} = o ( \varepsilon )$, and $\sup_{ \tau \in [ t,T ] } | \mathbb{E}_{t} [ {y}^{\varepsilon }_{\tau } ] |^{2} = o ( \varepsilon )$.
\end{lemma}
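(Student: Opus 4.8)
The plan is to beat the naive Cauchy--Schwarz bound $|\mathbb{E}_t[\xi {y}^{\varepsilon}_\tau]|^2 \le \mathbb{E}_t[\xi^2]\,\mathbb{E}_t[|{y}^{\varepsilon}_\tau|^2] = O(\varepsilon)$, which by itself only yields $O(\varepsilon)$ rather than the required $o(\varepsilon)$. The gain must come from the one structural feature of \cref{eq:variational equations :eq}: the only $(\varepsilon,\zeta)$-dependent source term, $1_{\{s\in[t,t+\varepsilon)\}}\delta\sigma(s)$, is supported on the short interval $[t,t+\varepsilon)$ and enters through a stochastic (hence mean-zero) integral. The idea is to isolate that mean-zero increment and show that its correlation against $\xi$ is of strictly smaller order than its $L^2$-size $O(\sqrt{\varepsilon})$, thanks to the continuity in time of conditional expectations.

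First I would solve the first-order variational equation by variation of constants. Let $\Phi$ be the fundamental solution of the homogeneous linear SDE $d\Phi_s = \bar{b}_x(s)\Phi_s\,ds + \bar{\sigma}_x(s)\Phi_s\,dW_s$ with $\Phi_t = 1$; note that $\Phi$ and $\Phi^{-1}$ are $(\varepsilon,\zeta)$-independent and, under the standing assumptions together with admissibility, possess finite moments of every order on $[t,T]$. Since the inhomogeneity is supported on $[t,t+\varepsilon)$, for $\tau\ge t+\varepsilon$ one obtains $ {y}^{\varepsilon}_\tau = \Phi_\tau ( M_\varepsilon + R_\varepsilon )$, where $M_\varepsilon := \int_t^{t+\varepsilon}\Phi_s^{-1}\delta\sigma(s)\,dW_s$ and $R_\varepsilon := -\int_t^{t+\varepsilon}\Phi_s^{-1}\bar{\sigma}_x(s)\delta\sigma(s)\,ds$ are both $\mathcal{F}_{t+\varepsilon}$-measurable, and the Lipschitz bound $|\delta\sigma(s)|\le L|\zeta|$ (which is exactly why the additive spike variation was chosen) gives $\mathbb{E}_t[M_\varepsilon^2]=O(\varepsilon)$ and $\mathbb{E}_t[R_\varepsilon^2]=O(\varepsilon^2)$. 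The drift correction is harmless: with $\eta := \xi\Phi_\tau$ one has $|\mathbb{E}_t[\eta R_\varepsilon]| \le \|\eta\|_{2}\,\|R_\varepsilon\|_{2} = O(\varepsilon)$, whose square is $o(\varepsilon)$. Here $\eta\in L^2$ follows from $\xi\in L^{2+\varrho}$ combined with the higher moments of $\Phi_\tau$ via H\"older, which is precisely the role played by the extra exponent $\varrho$.

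The crux is the martingale term $\mathbb{E}_t[\eta M_\varepsilon]$. Since $M_\varepsilon$ is $\mathcal{F}_{t+\varepsilon}$-measurable, conditioning gives $\mathbb{E}_t[\eta M_\varepsilon] = \mathbb{E}_t[\,g\,M_\varepsilon]$ with $g:=\mathbb{E}_{t+\varepsilon}[\eta]$. Writing $g = \mathbb{E}_t[g] + (g-\mathbb{E}_t[g])$ and using $\mathbb{E}_t[M_\varepsilon]=0$, the leading $O(\sqrt\varepsilon)$ piece $\mathbb{E}_t[g]\,\mathbb{E}_t[M_\varepsilon]$ vanishes, leaving $\mathbb{E}_t[\eta M_\varepsilon] = \mathbb{E}_t[(g-\mathbb{E}_t[g])M_\varepsilon]$, and Cauchy--Schwarz yields $|\mathbb{E}_t[\eta M_\varepsilon]| \le \big(\mathbb{E}_t[(g-\mathbb{E}_t[g])^2]\big)^{1/2}\big(\mathbb{E}_t[M_\varepsilon^2]\big)^{1/2}$. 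The second factor is $O(\sqrt\varepsilon)$, and since $\mathbb{E}_t[g]=\mathbb{E}_t[\eta]$, the first factor equals $\|\mathbb{E}_{t+\varepsilon}[\eta]-\mathbb{E}_t[\eta]\|_{L^2}$, which tends to $0$ as $\varepsilon\downarrow 0$ because $s\mapsto\mathbb{E}_s[\eta]$ is a right-continuous, $L^2$-bounded martingale (using that $\mathbb{F}$ is right-continuous and $\eta\in L^2$). Hence $|\mathbb{E}_t[\eta M_\varepsilon]| = o(\sqrt\varepsilon)\cdot o(1)$... more precisely $o(\sqrt\varepsilon)$, and combining with the $R_\varepsilon$ estimate gives $|\mathbb{E}_t[\xi {y}^{\varepsilon}_\tau]|=o(\sqrt\varepsilon)$, i.e. the first assertion.

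For the second assertion I would take expectations in \cref{eq:variational equations :eq}: the full stochastic integral over $[t,\tau]$ is a mean-zero martingale, so $\mathbb{E}_t[{y}^{\varepsilon}_\tau] = \int_t^\tau \mathbb{E}_t[\bar{b}_x(s){y}^{\varepsilon}_s]\,ds$ for every $\tau$. On $[t,t+\varepsilon)$ the integrand is $O(\sqrt\varepsilon)$ by \cref{eq:moment estimates :eq}, contributing $O(\varepsilon^{3/2})$; on $[t+\varepsilon,\tau]$ I apply the first assertion pointwise with $\xi=\bar{b}_x(s)$ (which is $(\varepsilon,\zeta)$-independent and $\mathcal{F}_s$-measurable) to get an $o(\sqrt\varepsilon)$ integrand, and integrate. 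The uniformity in $\tau$ follows because the estimates above depend on $s$ only through the moments of $\Phi,\Phi^{-1}$ and $\|\bar{b}_x(s)\|_{2+\varrho}$, all bounded on $[t,T]$. I expect the main obstacle to be exactly this uniform control: verifying that $\|\mathbb{E}_{t+\varepsilon}[\eta]-\mathbb{E}_t[\eta]\|_{L^2}\to 0$ holds uniformly over the relevant family of weights $\xi$, and securing the moment bounds on $\Phi$ and $\Phi^{-1}$ that guarantee $\eta\in L^2$ — it is here that both the integrability exponent $2-2+\varrho$ in \Cref{def:admissible control} and the standing Lipschitz/boundedness hypotheses are genuinely used.
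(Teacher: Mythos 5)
Your proposal is, at its core, the paper's own argument in different clothing: the paper's integrating factor $\mathcal{E}_s$ is exactly your $\Phi_s^{-1}$, and both proofs rest on the same representation of $\Phi_\tau^{-1}y^{\varepsilon}_{\tau}$ as a stochastic integral supported on $[t,t+\varepsilon)$ plus a drift correction of conditional $L^2$-norm $O(\varepsilon)$. The one genuine difference is how the martingale correlation $\mathbb{E}_t[\eta M_\varepsilon]$, with $\eta:=\xi\Phi_\tau$, is shown to be $o(\sqrt{\varepsilon})$: the paper applies the martingale representation theorem, writing $\eta=\mathbb{E}_t[\eta]+\int_t^\tau\gamma_{s,\tau}\,dW_s$, and uses the It\^o isometry to reduce the claim to $\mathbb{E}_t\int_t^{(t+\varepsilon)\wedge\tau}|\gamma_{s,\tau}|^2\,ds\to 0$ (absolute continuity of the integral plus dominated convergence); you instead condition on $\mathcal{F}_{t+\varepsilon}$, center, and invoke $L^2$-continuity at $t^{+}$ of the martingale $s\mapsto\mathbb{E}_s[\eta]$ (L\'evy's downward theorem, using right-continuity of $\mathbb{F}$). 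These are the same estimate in disguise, since by the It\^o isometry $\mathbb{E}_t\big[(\mathbb{E}_{t+\varepsilon}[\eta]-\mathbb{E}_t[\eta])^2\big]=\mathbb{E}_t\int_t^{t+\varepsilon}|\gamma_{s,\tau}|^2\,ds$; your packaging avoids the explicit representation theorem, while the paper's produces an explicit majorant that it can make uniform in $\tau$, which matters for the second assertion.

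Two points need repair. First, a minor one: after conditional Cauchy--Schwarz the controlling factor is the conditional quantity $(\mathbb{E}_t[(g-\mathbb{E}_t[g])^2])^{1/2}$, a random variable, not the unconditional norm $\|\mathbb{E}_{t+\varepsilon}[\eta]-\mathbb{E}_t[\eta]\|_{L^2}$; since the lemma is a $\mathbb{P}$-a.s.\ statement about $\mathbb{E}_t[\cdot]$, you must show this random variable tends to $0$ a.s., which does follow from L\'evy's theorem together with conditional dominated convergence, dominating $\sup_{\varepsilon}|\mathbb{E}_{t+\varepsilon}[\eta]|^2$ by Doob's maximal function of the $L^2$-martingale. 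Second, the more substantive one, which you flag but do not close: in the second assertion you integrate a pointwise $o(\sqrt{\varepsilon})$ bound in $s$, and uniform boundedness of the moments of $\Phi$, $\Phi^{-1}$ and $\bar{b}_x$ supplies a uniform \emph{majorant} but not a uniform \emph{rate}, so by itself it does not justify that the integral is $o(\sqrt{\varepsilon})$. The missing step is a dominated-convergence argument in $s$: your bound has the form $|\mathbb{E}_t[\bar{b}_x(s)y^{\varepsilon}_s]|\le C\sqrt{\varepsilon}\,\big(h_\varepsilon(s)+\sqrt{\varepsilon}\big)$, where $h_\varepsilon(s)$ (your martingale-continuity factor for $\xi=\bar{b}_x(s)$, the paper's $(\mathbb{E}_t\int_t^{(t+\varepsilon)\wedge s}|\gamma_{v,s}|^2\,dv)^{1/2}$) is bounded uniformly in $(s,\varepsilon)$ and tends to $0$ pointwise in $s$; dominated convergence then gives $\int_t^T h_\varepsilon(s)\,ds\to 0$, hence $\sup_{\tau}|\mathbb{E}_t[y^{\varepsilon}_\tau]|\le\int_t^T|\mathbb{E}_t[\bar{b}_x(s)y^{\varepsilon}_s]|\,ds=o(\sqrt{\varepsilon})$. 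This is exactly how the paper closes the loop: it integrates its squared bound over $\tau$, dominates $\int_t^\tau\mathbb{E}_t[|\gamma_{s,\tau}|^2]\,ds$ uniformly in $\tau$ by $4(\sup|b_x|)^2\,\mathbb{E}_t[\sup_s|\mathcal{E}_s|^{-2}]$, applies dominated convergence, and only then takes the supremum over $\tau$ via Cauchy--Schwarz and \cref{eq:variational equations :eq}. With these two repairs your argument is complete.
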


\begin{proof}
Inspired by the proof for \cite[Proposition 3.1]{Buckdahn-Djehiche-Li-2011}, we introduce the factor
\begin{equation*}
\mathcal{E}_{s} := \exp \bigg( \frac{1}{2} \int_{t}^{s} \big( \bar{\sigma }_{x} (v) \big)^{2} dv - \int_{t}^{s} \bar{\sigma }_{x} (v) d {W}_{v} - \int_{t}^{s} \bar{b}_{x} (v) dv \bigg),
\end{equation*}
which solves the SDE $d \mathcal{E}_{s} = \mathcal{E}_{s} ( \bar{\sigma }_{x} (s) )^{2} ds - \mathcal{E}_{s} ( \bar{b}_{x} (s) ds + \bar{\sigma }_{x} (s) d {W}_{s} )$ and results in
\begin{equation*}
  \mathcal{E}_{\tau } {y}^{\varepsilon }_{\tau }
= \int_{t}^{\tau } {1}_{\{ s \in [ t, t + \varepsilon ) \}} \mathcal{E}_{s} \delta \sigma (s) \big( d {W}_{s} - \bar{\sigma }_{x} (s) ds \big)
= \int_{t}^{ \tau \wedge \varepsilon } \mathcal{E}_{s} \delta \sigma (s) \big( d {W}_{s} - \bar{\sigma }_{x} (s) ds \big).
\end{equation*}
By Doob's maximal inequality, for any $p > 1$, we obtain
\begin{align*}
      \mathbb{E} \bigg[ \sup_{ s \in [ t,T ] } | \mathcal{E}_{s} |^{p} \bigg]
& \le {e}^{ p T \sup | {b}_{x} | + p T \sup | {\sigma }_{x} |^{2} }
      \mathbb{E} \bigg[ \sup_{ s \in [ t,T ] } \Big| {e}^{ \int_{t}^{s} \bar{b}_{x} (v) dv - \int_{t}^{s} ( \bar{\sigma }_{x} (v) )^{2} dv } \mathcal{E}_{s} \Big|^{p} \bigg] \\
& \le \Big( \frac{p}{p-1} \Big)^{p} {e}^{ p T \sup | {b}_{x} | + \frac{1}{2} ( {p}^{2} + p ) T \sup | {\sigma }_{x} |^{2} }.
\end{align*}
In the same manner, we can show that
\begin{align*}
& \mathbb{E} \bigg[ \sup_{ s \in [ t,T ] } | \mathcal{E}_{s} |^{-p} \bigg]
  \le \Big( \frac{p}{p-1} \Big)^{p} {e}^{ p T \sup | \bar{b}_{x} | + \frac{1}{2} ( {p}^{2} - p ) T \sup | \bar{\sigma }_{x} |^{2} }, \quad \forall p > 1; \\
& \mathbb{E} \bigg[ \sup_{ s \in [ t,T ] } \Big| \frac{ \mathcal{E}_{s} }{ \mathcal{E}_{\tau } } \Big|^{p} \bigg]
  \le \Big( \frac{|p|}{|p|-1} \Big)^{|p|} {e}^{ |p| T \sup | \bar{b}_{x} | + \frac{1}{2} ( |p|^{2} + |p| ) T \sup | \bar{\sigma }_{x} |^{2} }, \quad \forall \tau \in [ t,T ], ~ |p| > 1.
\end{align*}
Applying the martingale representation theorem to $\{ \mathbb{E}_{s} [ \xi / \mathcal{E}_{\tau } ] \}_{ s \in [ t, \tau ] }$,
we conclude that there exists a square-integrable $\mathbb{F}$-predictable process $\{ {\gamma }_{ s, \tau } \}_{ s \in [ t, \tau ] }$
such that $\xi / \mathcal{E}_{\tau }  = \mathbb{E}_{t} [ \xi / \mathcal{E}_{\tau } ] + \int_{t}^{\tau } {\gamma }_{ s, \tau } d {W}_{s}$.
Consequently,
\begin{equation*}
  \mathbb{E}_{t} [ \xi {y}^{\varepsilon }_{\tau } ]
= \mathbb{E}_{t} \bigg[ \frac{\xi }{ \mathcal{E}_{\tau } } ( \mathcal{E}_{\tau } {y}^{\varepsilon }_{\tau } ) \bigg]
= \mathbb{E}_{t} \bigg[ \int_{t}^{ \tau \wedge \varepsilon } {\gamma }_{ s, \tau } \mathcal{E}_{s} \delta \sigma (s) ds
                      - \int_{t}^{ \tau \wedge \varepsilon } \xi \frac{ \mathcal{E}_{s} }{ \mathcal{E}_{\tau } } \delta \sigma (s) \bar{\sigma }_{x} (s) ds \bigg],
\end{equation*}
for which we have the following moment estimates with some appropriate constants $( {K}_{1}, {K}_{2}, {K}_{3}, {K}_{4} )$ independent of $\tau $:
\begin{align*}
\bigg| \mathbb{E}_{t} \bigg[ \int_{t}^{ \tau \wedge \varepsilon } {\gamma }_{ s, \tau } \mathcal{E}_{s} \delta \sigma (s) ds \bigg] \bigg|^{2}
& \le {K}_{1} \mathbb{E}_{t} \bigg[ \int_{t}^{ \tau \wedge \varepsilon } | \mathcal{E}_{s} |^{2} ds \bigg]
              \mathbb{E}_{t} \bigg[ \int_{t}^{ \tau \wedge \varepsilon } | {\gamma }_{ s, \tau } |^{2} ds \bigg] \\
& \le {K}_{2} \varepsilon \mathbb{E}_{t} \bigg[ \int_{t}^{  \tau \wedge \varepsilon } | {\gamma }_{ s, \tau } |^{2} ds \bigg], \\
  \bigg| \mathbb{E}_{t} \bigg[ \int_{t}^{ \tau \wedge \varepsilon } \xi \frac{ \mathcal{E}_{s} }{ \mathcal{E}_{\tau } } \delta \sigma (s) \bar{\sigma }_{x} (s) ds \bigg] \bigg|^{2}
& \le {K}_{3} \mathbb{E}_{t} \bigg[ \int_{t}^{ \tau \wedge \varepsilon } \Big| \frac{ \mathcal{E}_{s} }{ \mathcal{E}_{\tau } } \Big|^{2} ds \bigg]
              \mathbb{E}_{t} \bigg[ \int_{t}^{ \tau \wedge \varepsilon } {\xi }^{2} ds \bigg]
  \le {K}_{4} {\varepsilon }^{2}.
\end{align*}
Hence, $| \mathbb{E}_{t} [ \xi {y}^{\varepsilon }_{\tau } ] |^{2}
        \le 2 ( {K}_{2} \vee {K}_{4} ) ( \varepsilon \mathbb{E}_{t} [ \int_{t}^{ \tau \wedge \varepsilon } | {\gamma }_{ s, \tau } |^{2} ds ] + {\varepsilon }^{2} ) = o ( \varepsilon )$.
Furthermore, setting $\xi = \bar{b}_{x} ( \tau )$ yields
$| \mathbb{E}_{t} [ {y}^{\varepsilon }_{\tau } \bar{b}_{x} ( \tau ) ] |^{2}
\le {K}_{5} ( \varepsilon \mathbb{E}_{t} [ \int_{t}^{ \tau \wedge \varepsilon } | {\gamma }_{ s, \tau } |^{2} ds ] + {\varepsilon }^{2} )$
for some constant ${K}_{5}$ independent of $\tau $. Consequently,
\begin{equation*}
    \int_{t}^{T} | \mathbb{E}_{t} [ {y}^{\varepsilon }_{s} \bar{b}_{x} (s) ] |^{2} ds
\le {K}_{5} \bigg( \varepsilon \int_{t}^{T} \int_{t}^{ \tau \wedge \varepsilon } \mathbb{E}_{t} [ | {\gamma }_{ s, \tau } |^{2} ] ds d \tau + T {\varepsilon }^{2} \bigg)
  = o ( \varepsilon ),
\end{equation*}
where the last equality follows from the dominated convergence theorem with
\begin{equation*}
  \int_{t}^{\tau } \mathbb{E}_{t} [ | {\gamma }_{ s, \tau } |^{2} ] ds
= \mathbb{E}_{t} \bigg[ \Big| \frac{ \bar{b}_{x} ( \tau ) }{ \mathcal{E}_{\tau } } - \mathbb{E}_{t} \Big[ \frac{ \bar{b}_{x} ( \tau ) }{ \mathcal{E}_{\tau } } \Big] \Big|^{2} \bigg]
\le 4 ( \sup | {b}_{x} | )^{2} \mathbb{E}_{t} \bigg[ \sup_{ s \in [ t,T ] } | \mathcal{E}_{s} |^{-2} \bigg].
\end{equation*}
Finally, by \cref{eq:variational equations :eq} and Cauchy-Schwarz inequality, we obtain
\begin{equation*}
\sup_{ \tau \in [ t,T ] } | \mathbb{E}_{t} [ {y}^{\varepsilon }_{\tau } ] |^{2}
= \sup_{ \tau \in [ t,T ] } \bigg| \int_{t}^{\tau } \mathbb{E}_{t} [ {y}^{\varepsilon }_{s} \bar{b}_{x} (s) ] ds \bigg|^{2}
\le T \int_{t}^{T} | \mathbb{E}_{t} [ {y}^{\varepsilon }_{s} \bar{b}_{x} (s) ] |^{2} ds
  = o ( \varepsilon ),
\end{equation*}
and thus, this proof is complete.
\end{proof}

\begin{remark}
In comparison, \cite[Lemma 3.2]{Buckdahn-Djehiche-Li-2011} for proving Proposition 3.1 therein has used $\mathbb{L}^{p}$-integrability for all $p \ge 1$, which is much stronger than our condition.
\Cref{lem:estimate for square of mean} shows that 
the condition slightly stronger than square-integrability is sufficient to derive the first-order expansion of $J ( t, \bar{X}_{t}, \bar{u}^{ t, \varepsilon, \zeta } ) - J ( t, \bar{X}_{t}, \bar{u} )$ for our problem.
By \Cref{lem:estimate for square of mean}, one can further obtain the estimates
\begin{align*}
&   | \mathbb{E}_{t} [ \xi {y}^{\varepsilon }_{\tau } ] \mathbb{E}_{t} [ {z}^{\varepsilon }_{\tau } ] |
\le \frac{1}{2} ( \mathbb{E}_{t} [ \xi {y}^{\varepsilon }_{\tau } ] )^{2} + \frac{1}{2} \mathbb{E}_{t} [ | {z}^{\varepsilon }_{\tau } |^{2} ]
  = o ( \varepsilon ), \\
&   | \mathbb{E}_{t} [ {y}^{\varepsilon }_{\tau } ] \mathbb{E}_{t} [ \xi {z}^{\varepsilon }_{\tau } ] |
\le \frac{1}{2} ( \mathbb{E}_{t} [ {y}^{\varepsilon }_{\tau } ] )^{2} + \frac{1}{2} \mathbb{E}_{t} [ {\xi }^{2} ] \mathbb{E}_{t} [ | {z}^{\varepsilon }_{\tau } |^{2} ]
  = o ( \varepsilon ), \quad etc.
\end{align*}
As a result, many terms in the expansion of $J ( t, \bar{X}_{t}, \bar{u}^{ t, \varepsilon, \zeta } ) - J ( t, \bar{X}_{t}, \bar{u} )$ can be shown to be higher order infinitesimals of $\varepsilon $.
\end{remark}

For the sake of brevity, let ${M}^{u}_{j} ( t,x ) := \mathbb{E}_{t} [ ( {X}^{ t,x,u }_{T} - \mathbb{E}_{t} [ {X}^{ t,x,u }_{T} ] )^{j} ]$ for the $j$-th conditional central moment,
and $\vec{M}^{u} ( t,x ) := ( \mathbb{E}_{t} [ {X}^{ t,x,u }_{T} ], {M}^{u}_{2} ( t,x ), \ldots, {M}^{u}_{n} ( t,x ) )$.
In addition, for $f = \Psi, {\Psi }_{y}, {\Psi }_{yy}, {\Psi }_{ y {z}_{j} }, {\Psi }_{ {z}_{j} }, {\Psi }_{ {z}_{i} {z}_{j} }$,
we write $\bar{f} := f ( t, \bar{X}_{t}, \bar{X}_{T}, \vec{M}^{ \bar{u} } ( t, \bar{X}_{t} ) )$.

\begin{lemma}\label{lem:difference in objective function for spike variation}
For an arbitrarily fixed $\bar{u} \in \mathcal{U}_{0}$, any $t \in [ 0,T )$,
a sufficiently small $\varepsilon \in ( 0, T-t ]$ and an $\mathcal{F}_{t}$-measurable random variable $\zeta $ such that $\bar{u}^{ t, \varepsilon, \zeta } \in \mathcal{U}_{0}$, we have
\begin{align}
    J ( t, \bar{X}_{t}, \bar{u}^{ t, \varepsilon, \zeta } )
& = J ( t, \bar{X}_{t}, \bar{u} )
  + \mathbb{E}_{t} \Big[ \bar{\Psi }_{y} ( {y}^{\varepsilon }_{T} + {z}^{\varepsilon }_{T} ) + \frac{1}{2} \bar{\Psi }_{yy} ( {y}^{\varepsilon }_{T} )^{2} \Big]
  + \mathbb{E}_{t} [ \bar{\Psi }_{ {z}_{1} } ] \mathbb{E}_{t} [ {y}^{\varepsilon }_{T} + {z}^{\varepsilon }_{T} ] \notag \\
& \quad + \sum_{j=2}^{n} j \mathbb{E}_{t} [ \bar{\Psi }_{ {z}_{j} } ]
                           \mathbb{E}_{t} \big[ \big( ( \bar{X}_{T} - \mathbb{E}_{t} [ \bar{X}_{T} ] )^{j-1} - {M}^{ \bar{u} }_{j-1} ( t,x ) \big)
                                                ( {y}^{\varepsilon }_{T} + {z}^{\varepsilon }_{T} ) \big]
\label{eq:difference in objective function for spike variation :eq}
\\
& \quad + \sum_{j=2}^{n} \binom{j}{2} \mathbb{E}_{t} [ \bar{\Psi }_{ {z}_{j} } ] \mathbb{E}_{t} \big[ ( \bar{X}_{T} - \mathbb{E}_{t} [ \bar{X}_{T} ] )^{j-2} ( {y}^{\varepsilon }_{T} )^{2} \big]
  + o ( \varepsilon ). \notag
\end{align}
\end{lemma}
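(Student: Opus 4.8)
The plan is to Taylor-expand the integrand $\Psi$ to second order about the base trajectory and then substitute the moment expansions already available from \cref{eq:moment estimates :eq,eq:difference in moments of spike variation :eq}, discarding at the end every contribution that \Cref{lem:estimate for square of mean} (and the estimates in the subsequent remark) shows to be $o ( \varepsilon )$. Throughout I would exploit that, conditioned on $\mathcal{F}_{t}$, the mean and the central moments $\vec{M}$ are deterministic, so any increment of a $\vec{z}$-argument factors out of $\mathbb{E}_{t} [ \cdot ]$.

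First I would record the perturbations of the arguments of $\Psi$. Writing $a := {X}^{\varepsilon }_{T} - \bar{X}_{T}$ and $\delta {z}_{j}$ for the $\mathcal{F}_{t}$-measurable increment of the $j$-th component of $\vec{M}$, the estimates \cref{eq:moment estimates :eq} give $a = {y}^{\varepsilon }_{T} + {z}^{\varepsilon }_{T}$ up to an $L^{2}$-error of order $o ( \varepsilon )$. The delicate point is the central moments for $j \ge 2$: since centering is now about the perturbed mean, I would write ${X}^{\varepsilon }_{T} - \mathbb{E}_{t} [ {X}^{\varepsilon }_{T} ] = ( \bar{X}_{T} - \mathbb{E}_{t} [ \bar{X}_{T} ] ) + ( a - \delta \mu )$ with $\delta \mu := \mathbb{E}_{t} [ a ]$, expand the $j$-th power by the binomial theorem, and take $\mathbb{E}_{t} [ \cdot ]$. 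Using $\mathbb{E}_{t} [ ( \bar{X}_{T} - \mathbb{E}_{t} [ \bar{X}_{T} ] )^{j-1} ] = {M}^{ \bar{u} }_{j-1}$ and $\delta \mu = \mathbb{E}_{t} [ {y}^{\varepsilon }_{T} + {z}^{\varepsilon }_{T} ] + o ( \varepsilon )$, the $k=1$ binomial term collapses to $j \mathbb{E}_{t} [ ( ( \bar{X}_{T} - \mathbb{E}_{t} [ \bar{X}_{T} ] )^{j-1} - {M}^{ \bar{u} }_{j-1} ) ( {y}^{\varepsilon }_{T} + {z}^{\varepsilon }_{T} ) ]$ and the $k=2$ term to $\binom{j}{2} \mathbb{E}_{t} [ ( \bar{X}_{T} - \mathbb{E}_{t} [ \bar{X}_{T} ] )^{j-2} ( {y}^{\varepsilon }_{T} )^{2} ]$, while every term carrying a factor $\delta \mu$ alongside another $\mathbb{E}_{t} [ \cdot\, {y}^{\varepsilon }_{T} ]$ factor, as well as every $k \ge 3$ term, is a product of two quantities each of size $o ( \sqrt{\varepsilon } )$ (or of genuine order $O ( {\varepsilon }^{k/2} )$) and hence $o ( \varepsilon )$ by \Cref{lem:estimate for square of mean}.

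Next I would Taylor-expand $\Psi ( t, \bar{X}_{t}, \bar{X}_{T} + a, \vec{M}^{ \bar{u} } + \delta \vec{z} )$ to second order about $( \bar{X}_{T}, \vec{M}^{ \bar{u} } )$, take $\mathbb{E}_{t} [ \cdot ]$, and evaluate the derivatives at the base point (the barred quantities $\bar{\Psi }_{y}, \bar{\Psi }_{yy}, \bar{\Psi }_{ {z}_{j} }$). The first-order-in-$\Psi$ terms reproduce $\mathbb{E}_{t} [ \bar{\Psi }_{y} ( {y}^{\varepsilon }_{T} + {z}^{\varepsilon }_{T} ) ]$, the $j=1$ contribution $\mathbb{E}_{t} [ \bar{\Psi }_{ {z}_{1} } ] \mathbb{E}_{t} [ {y}^{\varepsilon }_{T} + {z}^{\varepsilon }_{T} ]$, and—after inserting the central-moment expansion from the previous step into $\sum_{j \ge 2} \mathbb{E}_{t} [ \bar{\Psi }_{ {z}_{j} } ] \delta {z}_{j}$—the two sums over $j \ge 2$. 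The only surviving genuinely second-order contribution is $\tfrac{1}{2} \mathbb{E}_{t} [ \bar{\Psi }_{yy} ( {y}^{\varepsilon }_{T} )^{2} ]$, obtained from $\tfrac{1}{2} \mathbb{E}_{t} [ \bar{\Psi }_{yy} a^{2} ]$ after noting $a^{2} = ( {y}^{\varepsilon }_{T} )^{2} + o ( \varepsilon )$ in $L^{1}$ via Cauchy--Schwarz. All remaining second-order Taylor terms are discarded: the mixed terms $\mathbb{E}_{t} [ \bar{\Psi }_{ y {z}_{j} } a ] \delta {z}_{j}$ and the pure moment terms $\mathbb{E}_{t} [ \bar{\Psi }_{ {z}_{i} {z}_{j} } ] \delta {z}_{i} \delta {z}_{j}$ are products of two factors of order $o ( \sqrt{\varepsilon } )$—with $\mathbb{E}_{t} [ \bar{\Psi }_{ y {z}_{j} } a ] = o ( \sqrt{\varepsilon } )$ and $\delta {z}_{j} = o ( \sqrt{\varepsilon } )$ both following from \Cref{lem:estimate for square of mean}—hence $o ( \varepsilon )$.

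The main obstacle is exactly this bookkeeping of orders: because the central moments are non-separable functions of $\bar{X}_{T}$ and of its conditional mean, the naive expansion generates numerous cross terms of the schematic form $\mathbb{E}_{t} [ {\xi }_{1} {y}^{\varepsilon }_{T} ] \mathbb{E}_{t} [ {\xi }_{2} {y}^{\varepsilon }_{T} ]$ which are only $O ( \varepsilon )$ a priori and would later obstruct the separation into adjoint equations; the whole force of \Cref{lem:estimate for square of mean} is that each such factor is in fact $o ( \sqrt{\varepsilon } )$, so their products fall below order $\varepsilon $. The secondary point requiring care is integrability: applying \Cref{lem:estimate for square of mean} with $\xi = ( \bar{X}_{T} - \mathbb{E}_{t} [ \bar{X}_{T} ] )^{j-1}$ for $j$ up to $n$ needs $\mathbb{E} [ | \bar{X}_{T} |^{ 2 ( n-1 ) + \varrho } ] < \infty $, which is precisely why the admissibility condition $\mathbb{E} [ | {X}_{T} |^{ 2n - 2 + \epsilon } ] < \infty $ was imposed; one must check that each such $\xi $, together with each barred derivative of $\Psi $ entering the estimates, meets the hypothesis $\mathbb{E} [ | \xi |^{ 2 + \varrho } ] < \infty $ of that lemma.
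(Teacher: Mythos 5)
Your overall strategy is the one the paper follows (moment expansions plus \Cref{lem:estimate for square of mean} to kill cross products), but there is one genuine flaw in how you organize the expansion of $\Psi$: you perform a \emph{joint} second-order Taylor expansion in $( y, \vec{z} )$ and then discard the terms $\mathbb{E}_{t} [ \bar{\Psi }_{ y {z}_{j} } a ] \, \delta {z}_{j}$ and $\mathbb{E}_{t} [ \bar{\Psi }_{ {z}_{i} {z}_{j} } ] \, \delta {z}_{i} \delta {z}_{j}$. Under the standing assumptions of \Cref{sec:Model formulation}, $\Psi$ is only continuously differentiable in $\vec{z}$ and twice differentiable in $y$, with $( \Psi, {\Psi }_{y}, {\Psi }_{yy}, {\Psi }_{ {z}_{1} }, \ldots, {\Psi }_{ {z}_{n} } )$ Lipschitz in $( y, \vec{z} )$; the mixed and pure second derivatives ${\Psi }_{ y {z}_{j} }$ and ${\Psi }_{ {z}_{i} {z}_{j} }$ are \emph{not} assumed to exist, so the remainder terms you propose to estimate need not be well defined. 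The paper's proof avoids exactly this: it splits the increment into two pieces, first $\mathbb{E}_{t} [ \Psi ( t, \bar{X}_{t}, {X}^{\varepsilon }_{T}, \vec{M}^{ \bar{u}^{ t, \varepsilon, \zeta } } ) ] - \mathbb{E}_{t} [ \Psi ( t, \bar{X}_{t}, \bar{X}_{T}, \vec{M}^{ \bar{u}^{ t, \varepsilon, \zeta } } ) ]$ (a second-order expansion in $y$ alone, with the perturbed moments frozen in the $\vec{z}$-slots), then $\mathbb{E}_{t} [ \Psi ( t, \bar{X}_{t}, \bar{X}_{T}, \vec{M}^{ \bar{u}^{ t, \varepsilon, \zeta } } ) ] - \mathbb{E}_{t} [ \bar{\Psi } ]$ (a first-order expansion in $\vec{z}$ alone, at the base $\bar{X}_{T}$). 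The cross effects are then controlled purely by Lipschitz continuity of the \emph{first} derivatives: replacing $\vec{M}^{ \bar{u}^{ t, \varepsilon, \zeta } }$ by $\vec{M}^{ \bar{u} }$ inside ${\Psi }_{y}, {\Psi }_{yy}$ costs at most a Lipschitz constant times $\| \delta \vec{z} \| \cdot \mathbb{E}_{t} [ | {y}^{\varepsilon }_{T} | + \cdots ] = o ( \sqrt{\varepsilon } ) \cdot O ( \sqrt{\varepsilon } ) = o ( \varepsilon )$, and similarly for the evaluation point of ${\Psi }_{ {z}_{j} }$. Your magnitude bounds (each factor $o ( \sqrt{\varepsilon } )$) are the right ones, so the repair is local — phrase the discarded contributions as Lipschitz remainders of first derivatives rather than as second-order Taylor terms — but as written the key discarding step rests on objects the hypotheses do not furnish.

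Apart from this, your argument matches the paper's content. Your direct binomial expansion of $( \bar{X}_{T} - \mathbb{E}_{t} [ \bar{X}_{T} ] ) + ( a - \delta \mu )$ is a slightly more streamlined route to the central-moment expansion than the paper's, which first expands the products $( \mathbb{E}_{t} [ {X}^{\varepsilon }_{T} ] )^{j-i} \mathbb{E}_{t} [ ( {X}^{\varepsilon }_{T} )^{i} ]$ via \cref{eq:difference in moments of spike variation :eq} and reassembles central moments from original ones; both yield the same intermediate identity and rely on \cref{eq:moment estimates :eq} and \Cref{lem:estimate for square of mean} in the same way. Your final observation — that applying \Cref{lem:estimate for square of mean} with $\xi = ( \bar{X}_{T} - \mathbb{E}_{t} [ \bar{X}_{T} ] )^{j-1}$, $j \le n$, is what forces the admissibility requirement $\mathbb{E} [ | {X}_{T} |^{ 2n - 2 + \epsilon } ] < \infty$ — is correct and is indeed the reason that condition appears in \Cref{def:admissible control}.
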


\begin{proof}
It follows from \cref{eq:difference in moments of spike variation :eq} and \Cref{lem:estimate for square of mean} that
\begin{align*}
& ( \mathbb{E}_{t} [ {X}^{\varepsilon }_{T} ] )^{j-i} \mathbb{E}_{t} [ ( {X}^{\varepsilon }_{T} )^{i} ]
- ( \mathbb{E}_{t} [ \bar{X}_{T} ] )^{j-i} \mathbb{E}_{t} [ ( \bar{X}_{T} )^{i} ] \\
& = (j-i) {1}_{\{ i \le j-1 \}} ( \mathbb{E}_{t} [ \bar{X}_{T} ] )^{j-i-1}
    \mathbb{E}_{t} [ ( \bar{X}_{T} )^{i} ] \mathbb{E}_{t} [ {y}^{\varepsilon }_{T} + {z}^{\varepsilon }_{T} ] \\
& + {1}_{\{ i \ge 1 \}} i ( \mathbb{E}_{t} [ \bar{X}_{T} ] )^{j-i}
    \mathbb{E}_{t} [ ( \bar{X}_{T} )^{i-1} ( {y}^{\varepsilon }_{T} + {z}^{\varepsilon }_{T} ) ] \\
& + {1}_{\{ i \ge 2 \}} \binom{i}{2} ( \mathbb{E}_{t} [ \bar{X}_{T} ] )^{j-i}
    \mathbb{E}_{t} [ ( \bar{X}_{T} )^{i-2} ( {y}^{\varepsilon }_{T} )^{2} ]
  + o ( \varepsilon ),
\end{align*}
which leads to
\begin{align}
  {M}^{ \bar{u}^{ t, \varepsilon, \zeta } }_{j} ( t, \bar{X}_{t} )
& = {M}^{\bar{u} }_{j} ( t, \bar{X}_{t} )
  + j \mathbb{E}_{t} \Big[ \big( ( \bar{X}_{T} - \mathbb{E}_{t} [ \bar{X}_{T} ] )^{j-1} - {M}_{j-1}^{\bar{u}} ( t, \bar{X}_{t} ) \big)
                           ( {y}^{\varepsilon }_{T} + {z}^{\varepsilon }_{T} ) \Big] \notag \\
& \quad + \binom{j}{2} \mathbb{E}_{t} \big[ ( \bar{X}_{T} - \mathbb{E}_{t} [ \bar{X}_{T} ] )^{j-2} ( {y}^{\varepsilon }_{T} )^{2} \big]
        + o ( \varepsilon ), \quad j \ge 2.
\label{eq:difference in central moments for spike variation :eq}
\end{align}
On the one hand, in the same line of proof as for \cite[(4.31), p. 128]{Yong-Zhou-1999}, one can obtain
\begin{align*}
& \mathbb{E}_{t} \big[ \Psi \big( t, \bar{X}_{t}, {X}^{\varepsilon }_{T}, \vec{M}^{ \bar{u}^{ t, \varepsilon, \zeta } } ( t, \bar{X}_{t} ) \big) \big]
- \mathbb{E}_{t} \big[ \Psi \big( t, \bar{X}_{t}, \bar{X}_{T}, \vec{M}^{ \bar{u}^{ t, \varepsilon, \zeta } } ( t, \bar{X}_{t} ) \big) \big] \\
& = \mathbb{E}_{t} \bigg[ {\Psi }_{y} \big( t, \bar{X}_{t}, \bar{X}_{T}, \vec{M}^{ \bar{u}^{ t, \varepsilon, \zeta } } ( t, \bar{X}_{t} ) \big) ( {y}^{\varepsilon }_{T} + {z}^{\varepsilon }_{T} ) \bigg] \\
& + \frac{1}{2} \mathbb{E}_{t} \bigg[ {\Psi }_{yy} \big( t, \bar{X}_{t}, \bar{X}_{T}, \vec{M}^{ \bar{u}^{ t, \varepsilon, \zeta } } ( t, \bar{X}_{t} ) \big) ( {y}^{\varepsilon }_{T} )^{2} \bigg]
  + o ( \varepsilon ).
\end{align*}
By \cref{eq:moment estimates :eq}, \cref{eq:difference in central moments for spike variation :eq}, \Cref{lem:estimate for square of mean}
and the Lipschitz continuity of $( \Phi, {\Phi }_{y} )$, we obtain
\begin{align*}
& \mathbb{E}_{t} \big[ {\Psi }_{y} \big( t, \bar{X}_{t}, \bar{X}_{T}, \vec{M}^{ \bar{u}^{ t, \varepsilon, \zeta } } ( t, \bar{X}_{t} ) \big) {y}^{\varepsilon }_{T} \big]
- \mathbb{E}_{t} [ \bar{\Psi }_{y} {y}^{\varepsilon }_{T} \big]
= o ( \varepsilon ), \\
& \mathbb{E}_{t} \big[ {\Psi }_{y} \big( t, \bar{X}_{t}, \bar{X}_{T}, \vec{M}^{ \bar{u}^{ t, \varepsilon, \zeta } } ( t, \bar{X}_{t} ) \big) {z}^{\varepsilon }_{T} \big]
- \mathbb{E}_{t} [ \bar{\Psi }_{y} {z}^{\varepsilon }_{T} ]
= o ( \varepsilon ), \\
& \mathbb{E}_{t} \big[ {\Psi }_{yy} \big( t, \bar{X}_{t}, \bar{X}_{T}, \vec{M}^{ \bar{u}^{ t, \varepsilon, \zeta } } ( t, \bar{X}_{t} ) \big) ( {y}^{\varepsilon }_{T} )^{2} \big]
- \mathbb{E}_{t} [ \bar{\Psi }_{yy} ( {y}^{\varepsilon }_{T} )^{2} ]
= o ( \varepsilon ).
\end{align*}
On the other hand, by Taylor's expansion with the use of \cref{eq:moment estimates :eq}, \cref{eq:difference in central moments for spike variation :eq} and \Cref{lem:estimate for square of mean}, we can show that
\begin{align*}
& \mathbb{E}_{t} \big[ \Psi \big( t, \bar{X}_{t}, \bar{X}_{T}, \vec{M}^{ \bar{u}^{ t, \varepsilon, \zeta } } ( t, \bar{X}_{t} ) \big) \big] - \mathbb{E}_{t} [ \bar{\Psi } ]
  - \mathbb{E}_{t} [ \bar{\Psi }_{ {z}_{1} } ] \mathbb{E}_{t} [ {y}^{\varepsilon }_{T} + {z}^{\varepsilon }_{T} ] \\
& = \sum_{j=2}^{n} j \mathbb{E}_{t} [ \bar{\Psi }_{ {z}_{j} } ]
                     \mathbb{E}_{t} \big[ \big( ( \bar{X}_{T} - \mathbb{E}_{t} [ \bar{X}_{T} ] )^{j-1} - {M}^{ \bar{u} }_{j-1} ( t, \bar{X}_{t} ) \big) ( {y}^{\varepsilon }_{T} + {z}^{\varepsilon }_{T} ) \big] \\
& \quad + \sum_{j=2}^{n} \frac{ j (j-1) }{2}
                         \mathbb{E}_{t} [ \bar{\Psi }_{ {z}_{j} } ]
                         \mathbb{E}_{t} \big[ ( \bar{X}_{T} - \mathbb{E}_{t} [ \bar{X}_{T} ] )^{j-2} ( {y}^{\varepsilon }_{T} )^{2} \big]
        + o ( \varepsilon ).
\end{align*}
Therefore, plugging the above-mentioned expansions into the decomposition
\begin{align*}
& J ( t, \bar{X}_{t}, \bar{u}^{ t, \varepsilon, \zeta } ) - J ( t, \bar{X}_{t}, \bar{u} ) \\
& = \Big( \mathbb{E}_{t} \big[ \Psi \big( t, \bar{X}_{t}, {X}^{\varepsilon }_{T}, \vec{M}^{ \bar{u}^{ t, \varepsilon, \zeta } } ( t, \bar{X}_{t} ) \big) \big]
        - \mathbb{E}_{t} \big[ \Psi \big( t, \bar{X}_{t}, \bar{X}_{T}, \vec{M}^{ \bar{u}^{ t, \varepsilon, \zeta } } ( t, \bar{X}_{t} ) \big) \big] \Big) \\
& \quad + \Big( \mathbb{E}_{t} \big[ \Psi \big( t, \bar{X}_{t}, \bar{X}_{T}, \vec{M}^{ \bar{u}^{ t, \varepsilon, \zeta } } ( t, \bar{X}_{t} ) \big) \big]
              - \mathbb{E}_{t} [ \bar{\Psi } ] \Big)
\end{align*}
immediately yields the desired result.
\end{proof}

Now we are able to state the maximum principles for our control problems.

\begin{theorem}\label{thm:sufficient maximum principle}
Suppose that corresponding to $\bar{u} \in \mathcal{U}_{0}$,
the following FBSDE admits a square-integrable solution $( \bar{X}, {Y}^{t}, \mathcal{Y}^{t}, {Z}^{t}, \mathcal{Z}^{t} )$ for a.e. $t \in [ 0,T )$:
\begin{equation}
\left\{ \begin{aligned}
d \bar{X}_{s} & = \bar{b} (s) ds + \bar{\sigma } (s) d {W}_{s}, ~ \forall s \in [ 0,T ], \quad
  \bar{X}_{0}   = {x}_{0}; \\
d {Y}^{t}_{s} & = - \big( {Y}^{t}_{s} \bar{b}_{x} (s) + \mathcal{Y}^{t}_{s} \bar{\sigma }_{x} (s) \big) ds
                  + \mathcal{Y}^{t}_{s} d {W}_{s}, ~ \forall s \in [ t,T ], \\
  {Y}^{t}_{T} & = \bar{\Psi }_{y}
                + \mathbb{E}_{t} [ \bar{\Psi }_{ {z}_{1} } ]
                + \sum_{j=2}^{n} j \mathbb{E}_{t} [ \bar{\Psi }_{ {z}_{j} } ] \big( ( \bar{X}_{T} - \mathbb{E}_{t} [ \bar{X}_{T} ] )^{j-1} - {M}^{ \bar{u} }_{j-1} ( t, \bar{X}_{t} ) \big); \\
d {Z}^{t}_{s} & = - \big( 2 {Z}^{t}_{s} \bar{b}_{x} (s) + {Z}^{t}_{s} | \bar{\sigma }_{x} (s) |^{2} + 2 \mathcal{Z}^{t}_{s} \bar{\sigma }_{x} (s)
                        + {Y}^{t}_{s} \bar{b}_{xx} (s) + \mathcal{Y}^{t}_{s} \bar{\sigma }_{xx} (s) \big) ds \\
              & \quad + \mathcal{Z}^{t}_{s} d {W}_{s}, ~ \forall s \in [ t,T ], \\
  {Z}^{t}_{T} & = \bar{\Psi }_{yy}
                + \sum_{j=2}^{n} j (j-1) \mathbb{E}_{t} [ \bar{\Psi }_{ {z}_{j} } ] ( \bar{X}_{T} - \mathbb{E}_{t} [ \bar{X}_{T} ] )^{j-2},
\end{aligned} \right. 
\label{eq:FBSDE :eq}
\end{equation}
such that
\begin{equation*}
\limsup_{ s \downarrow t } \mathbb{E}_{t} \bigg[ \frac{1}{2} {Z}^{t}_{s} \big( | \sigma ( s, \bar{X}_{s}, \bar{u}_{s} + \zeta ) |^{2} - | \bar{\sigma } (s) |^{2} \big)
                                               + {Y}^{t}_{s} \delta b(s) + \big( \mathcal{Y}^{t}_{s} - {Z}^{t}_{s} \bar{\sigma } (s) \big) \delta \sigma (s) \bigg] \le 0
\end{equation*}
for any $\mathcal{F}_{t}$-measurable random variable $\zeta $ satisfying $\bar{u}^{ t, \varepsilon, \zeta } \in \mathcal{U}_{0}$.
Then, $\bar{u}$ is an ONEC.
\end{theorem}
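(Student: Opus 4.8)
The plan is to verify the defining inequality \cref{eq:open-loop Nash equilibrium control :eq} by producing a first-order expansion of $J ( t, \bar{X}_{t}, \bar{u}^{ t, \varepsilon, \zeta } ) - J ( t, \bar{X}_{t}, \bar{u} )$ whose leading term is exactly $\varepsilon $ times the quantity inside the $\limsup $ of the theorem, and then to pass to the limit by Lebesgue differentiation. The starting point is \Cref{lem:difference in objective function for spike variation}. First I would observe that the terminal conditions of the FBSDE \cref{eq:FBSDE :eq} have been chosen precisely so that the first-order and second-order parts of that expansion collapse. Since $\mathbb{E}_t [ \bar{\Psi }_{ {z}_{j} } ]$ and $M^{\bar{u}}_{j-1}(t,\bar{X}_t)$ are $\mathcal{F}_t$-measurable, gathering the coefficient of $y^\varepsilon_T + z^\varepsilon_T$ reproduces $Y^t_T$, while gathering the coefficient of $(y^\varepsilon_T)^2$ (using $\binom{j}{2} = j(j-1)/2$) reproduces $\tfrac12 Z^t_T$. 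Hence \Cref{lem:difference in objective function for spike variation} should read
\begin{equation*}
J ( t, \bar{X}_{t}, \bar{u}^{ t, \varepsilon, \zeta } ) - J ( t, \bar{X}_{t}, \bar{u} ) = \mathbb{E}_t [ Y^t_T ( y^\varepsilon_T + z^\varepsilon_T ) ] + \tfrac12 \mathbb{E}_t [ Z^t_T ( y^\varepsilon_T )^2 ] + o ( \varepsilon ).
\end{equation*}

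Next I would dualise these two terminal expectations against the variational equations \cref{eq:variational equations :eq} via It\^o's product rule. The computation of $d ( Y^t_s y^\varepsilon_s )$ cancels all drift terms except the spike contribution, giving $\mathbb{E}_t [ Y^t_T y^\varepsilon_T ] = \mathbb{E}_t [ \int_t^{t+\varepsilon} \mathcal{Y}^t_s \delta\sigma(s)\, ds ]$. The computation of $d ( Y^t_s z^\varepsilon_s )$ yields a nonlocal term $\tfrac12 \int_t^T ( Y^t_s \bar{b}_{xx}(s) + \mathcal{Y}^t_s \bar{\sigma}_{xx}(s) ) ( y^\varepsilon_s )^2\, ds$ plus a spike term on $[t,t+\varepsilon)$; and the computation of $d ( Z^t_s ( y^\varepsilon_s )^2 )$ — in which the drift prescribed for $Z^t$ in \cref{eq:FBSDE :eq} is exactly what annihilates the $\bar{b}_x$, $|\bar{\sigma}_x|^2$ and $\mathcal{Z}^t_s \bar{\sigma}_x$ contributions — yields the opposite nonlocal term $-\int_t^T ( Y^t_s \bar{b}_{xx}(s) + \mathcal{Y}^t_s \bar{\sigma}_{xx}(s) ) ( y^\varepsilon_s )^2\, ds$ together with further spike terms. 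The crucial point is that, upon forming $\mathbb{E}_t [ Y^t_T z^\varepsilon_T ] + \tfrac12 \mathbb{E}_t [ Z^t_T ( y^\varepsilon_T )^2 ]$, the two nonlocal integrals over $[t,T]$ cancel identically; this cancellation is the design principle behind the second adjoint equation and is the heart of the argument.

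What survives are integrals over the short interval $[t,t+\varepsilon)$. Several of these carry a factor $y^\varepsilon_s$ (for instance $\mathcal{Y}^t_s \delta\sigma_x(s) y^\varepsilon_s$, $Z^t_s \bar{\sigma}_x(s) y^\varepsilon_s \delta\sigma(s)$ and $\mathcal{Z}^t_s y^\varepsilon_s \delta\sigma(s)$); by Cauchy--Schwarz together with $\sup_s \mathbb{E}_t [ | y^\varepsilon_s |^2 ] = O(\varepsilon)$ from \cref{eq:moment estimates :eq} and the square-integrability of the adjoint processes, each such integral is $O(\varepsilon^{3/2}) = o(\varepsilon)$. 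The stochastic-integral parts generated by It\^o's rule vanish in conditional expectation by the same square-integrability. Collecting the remaining terms $\mathcal{Y}^t_s \delta\sigma(s)$, $Y^t_s \delta b(s)$ and $\tfrac12 Z^t_s ( \delta\sigma(s) )^2$, and rewriting $\tfrac12 Z^t_s ( \delta\sigma(s) )^2 = \tfrac12 Z^t_s ( | \sigma(s,\bar{X}_s,\bar{u}_s+\zeta) |^2 - | \bar{\sigma}(s) |^2 ) - Z^t_s \bar{\sigma}(s) \delta\sigma(s)$, I would arrive at
\begin{equation*}
J ( t, \bar{X}_{t}, \bar{u}^{ t, \varepsilon, \zeta } ) - J ( t, \bar{X}_{t}, \bar{u} ) = \mathbb{E}_t \bigg[ \int_t^{t+\varepsilon} \Big( \tfrac12 Z^t_s ( | \sigma(s,\bar{X}_s,\bar{u}_s+\zeta) |^2 - | \bar{\sigma}(s) |^2 ) + Y^t_s \delta b(s) + ( \mathcal{Y}^t_s - Z^t_s \bar{\sigma}(s) ) \delta\sigma(s) \Big)\, ds \bigg] + o ( \varepsilon ),
\end{equation*}
whose integrand is exactly the expression in the hypothesis.

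Finally, dividing by $\varepsilon $ and using $\frac1\varepsilon \int_t^{t+\varepsilon} g(s)\, ds \le \sup_{s \in [t,t+\varepsilon]} g(s)$, I obtain $\limsup_{\varepsilon \downarrow 0} \frac1\varepsilon \mathbb{E}_t [ \int_t^{t+\varepsilon} (\cdots)\, ds ] \le \limsup_{s \downarrow t} \mathbb{E}_t [ (\cdots) ] \le 0$ by assumption, so $\liminf_{\varepsilon \downarrow 0} \frac1\varepsilon ( J ( t, \bar{X}_{t}, \bar{u} ) - J ( t, \bar{X}_{t}, \bar{u}^{ t, \varepsilon, \zeta } ) ) \ge 0$, which is \cref{eq:open-loop Nash equilibrium control :eq}. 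I expect the main obstacle to be the bookkeeping in the dualisation step: verifying that the drift of $Z^t$ in \cref{eq:FBSDE :eq} produces the exact cancellation of the nonlocal $[t,T]$ integrals, and securing enough integrability both to discard the martingale parts and to bound the short-interval remainders as $o(\varepsilon)$ — the latter resting on \Cref{lem:estimate for square of mean} and \cref{eq:moment estimates :eq}.
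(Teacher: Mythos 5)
Your proposal is correct and follows essentially the same route as the paper's proof: identify the FBSDE terminal conditions with the coefficients in \Cref{lem:difference in objective function for spike variation} to get $\mathbb{E}_{t} [ {Y}^{t}_{T} ( {y}^{\varepsilon }_{T} + {z}^{\varepsilon }_{T} ) ] + \frac{1}{2} \mathbb{E}_{t} [ {Z}^{t}_{T} ( {y}^{\varepsilon }_{T} )^{2} ] + o ( \varepsilon )$, convert this by duality against the variational equations into an integral over $[ t, t + \varepsilon )$ of ${Y}^{t}_{s} \delta b (s) + \mathcal{Y}^{t}_{s} \delta \sigma (s) + \frac{1}{2} {Z}^{t}_{s} | \delta \sigma (s) |^{2}$, then rewrite the quadratic term and invoke the hypothesis. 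The only difference is that the paper compresses your explicit It\^o-product-rule bookkeeping (including the cancellation of the nonlocal $\bar{b}_{xx}$, $\bar{\sigma }_{xx}$ integrals and the $o ( \varepsilon )$ estimates for the spike terms carrying a factor ${y}^{\varepsilon }_{s}$) into a citation of the duality analysis in Yong and Zhou, Section 3.4.3, so your write-up simply makes that cited step explicit.
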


\begin{proof}
According to the FBSDE \cref{eq:FBSDE :eq}, \cref{eq:difference in objective function for spike variation :eq} can be re-expressed as
\begin{equation*}
  J ( t, \bar{X}_{t}, \bar{u}^{ t, \varepsilon, \zeta } ) - J ( t, \bar{X}_{t}, \bar{u} )
= \mathbb{E}_{t} [ {Y}^{t}_{T} ( {y}^{\varepsilon }_{T} + {z}^{\varepsilon }_{T} ) ]
+ \frac{1}{2} \mathbb{E}_{t} [ {Z}^{t}_{T} ( {y}^{\varepsilon }_{T} )^{2} ] + o ( \varepsilon ).
\end{equation*}
In the same line of duality analysis as in \cite[Section 3.4.3, pp. 134-137]{Yong-Zhou-1999}, we obtain
\begin{equation}
\begin{aligned}
& J ( t, \bar{X}_{t}, \bar{u}^{ t, \varepsilon, \zeta } ) - J ( t, \bar{X}_{t}, \bar{u} ) \\
& = \mathbb{E}_{t} \bigg[ \int_{t}^{ t + \varepsilon } \Big( {Y}^{t}_{s} \delta b (s) + \mathcal{Y}^{t}_{s} \delta \sigma (s) + \frac{1}{2} {Z}^{t}_{s} | \delta \sigma (s) |^{2} \Big) ds \bigg]
  + o ( \varepsilon ) \\
& = \mathbb{E}_{t} \bigg[ \int_{t}^{ t + \varepsilon } \Big( \frac{1}{2} {Z}^{t}_{s} \big( | \sigma ( s, \bar{X}_{s}, \bar{u}_{s} + \zeta ) |^{2} - | \bar{\sigma } (s) |^{2} \big) \\
& \qquad \qquad \qquad                                     + {Y}^{t}_{s} \delta b(s) + \big( \mathcal{Y}^{t}_{s} - {Z}^{t}_{s} \bar{\sigma } (s) \big) \delta \sigma (s) \Big) ds \bigg]
  + o ( \varepsilon )
\le o ( \varepsilon ),
\end{aligned}
\label{eq:difference in objective function of spike variation :eq}
\end{equation}
which implies that $\bar{u}$ is an ONEC.
\end{proof}

\begin{theorem}\label{thm:sufficient maximum principle for linear control problem}
Suppose that $b ( t,x,u )$ and $\sigma ( t,x,u )$ are both affine in $u$.
Then, $\bar{u} \in \mathcal{U}_{0}$ is an ONEC, if for a.e. $t \in [ 0,T )$, the FBSDE \cref{eq:FBSDE :eq} admits a square-integrable solution $( \bar{X}, {Y}^{t}, \mathcal{Y}^{t}, {Z}^{t}, \mathcal{Z}^{t} )$ such that
\begin{equation}
\lim_{s \downarrow t} \mathbb{E}_{t} [ {Y}^{t}_{s} {b}_{u} ( s, \bar{X}_{s}, u ) + \mathcal{Y}^{t}_{s} {\sigma }_{u} ( s, \bar{X}_{s}, u ) ] = 0, \quad
\lim_{ s \downarrow t } \mathbb{E}_{t} [ {Z}^{t}_{s} ] \le 0, \quad \mathbb{P}-a.s.
\label{eq:optimality condition for ONEC of linear control problem :eq}
\end{equation}
\end{theorem}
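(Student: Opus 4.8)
The plan is to obtain \Cref{thm:sufficient maximum principle for linear control problem} as a direct specialization of \Cref{thm:sufficient maximum principle}, so that the entire task reduces to verifying the $\limsup$-inequality hypothesized in the latter, now under the affinity assumption together with \cref{eq:optimality condition for ONEC of linear control problem :eq}. First I would exploit affinity. Since $b(s,\cdot,\cdot)$ and $\sigma(s,\cdot,\cdot)$ are affine in $u$, their partial derivatives $b_u,\sigma_u$ are independent of the $u$-argument, and for every $u$ one has $\delta b(s)=b_u(s,\bar X_s,u)\zeta$ and $\delta\sigma(s)=\sigma_u(s,\bar X_s,u)\zeta$. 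Moreover $\sigma(s,\bar X_s,\bar u_s+\zeta)=\bar\sigma(s)+\delta\sigma(s)$, whence
\[
|\sigma(s,\bar X_s,\bar u_s+\zeta)|^2-|\bar\sigma(s)|^2=2\bar\sigma(s)\delta\sigma(s)+|\delta\sigma(s)|^2.
\]
Substituting this identity into the integrand of the $\limsup$-condition of \Cref{thm:sufficient maximum principle}, the term $Z^t_s\bar\sigma(s)\delta\sigma(s)$ produced by $\frac{1}{2}Z^t_s\cdot 2\bar\sigma(s)\delta\sigma(s)$ cancels against the $-Z^t_s\bar\sigma(s)\delta\sigma(s)$ coming from $(\mathcal Y^t_s-Z^t_s\bar\sigma(s))\delta\sigma(s)$, leaving the clean expression
\[
\frac{1}{2}Z^t_s|\delta\sigma(s)|^2+\big(Y^t_s b_u(s,\bar X_s,u)+\mathcal Y^t_s\sigma_u(s,\bar X_s,u)\big)\zeta.
\]

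Then I would take $\mathbb E_t[\cdot]$ and pass to the limit $s\downarrow t$ term by term. For the linear part, since $\zeta$ is $\mathcal F_t$-measurable it factors out of the conditional expectation, and the first identity in \cref{eq:optimality condition for ONEC of linear control problem :eq} forces $\mathbb E_t[Y^t_s b_u(s,\bar X_s,u)+\mathcal Y^t_s\sigma_u(s,\bar X_s,u)]\to0$; hence this contribution vanishes in the limit. For the quadratic part I would write $|\delta\sigma(s)|^2=|\sigma_u(s,\bar X_s,u)\zeta|^2$ and replace the $s$-dependent, nonnegative weight $|\sigma_u(s,\bar X_s,u)\zeta|^2$ by its $\mathcal F_t$-measurable limit $|\sigma_u(t,\bar X_t,u)\zeta|^2$. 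Pulling that limit out of $\mathbb E_t[\cdot]$ and invoking the second part of \cref{eq:optimality condition for ONEC of linear control problem :eq}, namely $\lim_{s\downarrow t}\mathbb E_t[Z^t_s]\le0$, together with the nonnegativity of the weight, yields $\limsup_{s\downarrow t}\frac{1}{2}\mathbb E_t[Z^t_s|\delta\sigma(s)|^2]\le0$. Combining the two limits shows the hypothesis of \Cref{thm:sufficient maximum principle} is satisfied, so $\bar u$ is an ONEC.

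The hard part will be the interchange in the quadratic term, that is, justifying
\[
\limsup_{s\downarrow t}\mathbb E_t\big[Z^t_s|\sigma_u(s,\bar X_s,u)\zeta|^2\big]=|\sigma_u(t,\bar X_t,u)\zeta|^2\,\lim_{s\downarrow t}\mathbb E_t[Z^t_s],
\]
because the weight sits inside the conditional expectation and depends on $s$ through $\bar X_s$. I would control the error $\mathbb E_t[Z^t_s(|\sigma_u(s,\bar X_s,u)\zeta|^2-|\sigma_u(t,\bar X_t,u)\zeta|^2)]$ via the Cauchy--Schwarz inequality, using the square-integrability of $\{Z^t_s\}$ supplied by the FBSDE solution and the uniform continuity of $\sigma_u$ in $(s,x)$ from the standing assumptions, so that $|\sigma_u(s,\bar X_s,u)\zeta|^2\to|\sigma_u(t,\bar X_t,u)\zeta|^2$ in $L^2$ as $s\downarrow t$ along the continuous path of $\bar X$. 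This makes the error $o(1)$ and legitimizes factoring out the limiting weight; the nonnegativity of that weight is precisely what converts the one-sided sign condition $\lim_{s\downarrow t}\mathbb E_t[Z^t_s]\le0$ into the required $\limsup\le0$, completing the verification.
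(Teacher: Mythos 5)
Your proposal is correct and follows essentially the same route as the paper: both specialize the spike-variation expansion \cref{eq:difference in objective function of spike variation :eq} under affinity (so that $\delta b(s) = {b}_{u} ( s, \bar{X}_{s}, u ) \zeta$ and $\delta \sigma (s) = {\sigma }_{u} ( s, \bar{X}_{s}, u ) \zeta$), kill the linear term using the first condition in \cref{eq:optimality condition for ONEC of linear control problem :eq}, and dispose of the quadratic term $\tfrac{1}{2} {Z}^{t}_{s} | \delta \sigma (s) |^{2}$ via the sign condition $\lim_{ s \downarrow t } \mathbb{E}_{t} [ {Z}^{t}_{s} ] \le 0$. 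The only differences are presentational: you verify the $\limsup$ hypothesis of \Cref{thm:sufficient maximum principle} pointwise in $s$ and then invoke that theorem, whereas the paper works directly with the time-averaged integral $\frac{1}{\varepsilon } \int_{t}^{ t + \varepsilon } \mathbb{E}_{t} [ \cdot ] \, ds$ and passes to the limit via the integrability of ${Y}^{t}_{s} {b}_{u} + \mathcal{Y}^{t}_{s} {\sigma }_{u}$; in addition, your Cauchy--Schwarz/continuity argument for factoring the nonnegative weight $| {\sigma }_{u} \zeta |^{2}$ out of $\mathbb{E}_{t} [ {Z}^{t}_{s} \, \cdot \, ]$ makes explicit an interchange step that the paper treats as immediate.
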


\begin{proof}
Given that $\lim_{ s \downarrow t } \mathbb{E}_{t} [ {Z}^{t}_{s} ] \le 0$,
from the first equality in \cref{eq:difference in objective function of spike variation :eq} we have
\begin{equation*}
\lim_{ \varepsilon \downarrow 0 } \frac{1}{\varepsilon } \big( J ( t, \bar{X}_{t}, \bar{u}^{ t, \varepsilon, \zeta } ) - J ( t, \bar{X}_{t}, \bar{u} ) \big)
\le \lim_{ \varepsilon \downarrow 0 } \frac{\zeta }{\varepsilon } \int_{t}^{ t + \varepsilon } \mathbb{E}_{t} [ {Y}^{t}_{s} {b}_{u} ( s, \bar{X}_{s}, u ) + \mathcal{Y}^{t}_{s} {\sigma }_{u} ( s, \bar{X}_{s}, u ) ] ds.
\end{equation*}
Due to the square-integrability of $( {Y}^{t}, \mathcal{Y}^{t} )$ and the Lipschitz continuity of $b ( t,x, \cdot )$ and $\sigma ( t,x, \cdot )$,
we obtain $\mathbb{E}_{t} \int_{t}^{T} | {Y}^{t}_{s} {b}_{u} ( s, \bar{X}_{s}, u ) + \mathcal{Y}^{t}_{s} {\sigma }_{u} ( s, \bar{X}_{s}, u ) | ds < \infty $.
Hence,
\begin{equation*}
    \lim_{ \varepsilon \downarrow 0 } \frac{1}{\varepsilon } \big( J ( t, \bar{X}_{t}, \bar{u}^{ t, \varepsilon, \zeta } ) - J ( t, \bar{X}_{t}, \bar{u} ) \big)
\le \zeta \lim_{s \downarrow t} \mathbb{E}_{t} [ {Y}^{t}_{s} {b}_{u} ( s, \bar{X}_{s}, u ) + \mathcal{Y}^{t}_{s} {\sigma }_{u} ( s, \bar{X}_{s}, u ) ]
  = 0,
\end{equation*}
which implies that $\bar{u}$ is an ONEC.
\end{proof}

\begin{example}\label{ex:non-linear law-dependent preferences}
There is a recent work \cite{Liang-Xia-Yuan-2023} that characterized open-loop equilibria for dynamic portfolio problem with a fairly general objective function $g ( \mathbb{P}^{t}_{ {X}_{T} } )$ named non-linear law-dependent preference, 
where $\mathbb{P}^{t}_{ {X}_{T} }$ denotes the regular conditional distribution of ${X}_{T}$ given $\mathcal{F}_{t}$.
To make a comparison with \cite{Liang-Xia-Yuan-2023}, we consider $\Psi ( \vec{z} ) = g ( {p}^{t} ( \cdot, \vec{z} ) )$, 
where ${p}^{t} ( \cdot, \vec{z} )$ is a $\vec{z}$-indexed regular conditional law given $\mathcal{F}_{t}$ under $\mathbb{P}$.
This setting captures some non-linear law-dependent preferences as in \cite{Liang-Xia-Yuan-2023}, 
as ${p}^{t} ( \cdot, \vec{M}^{u} ( t,x ) )$ could be the conditional law of ${X}^{t,x,u}_{T}$ characterized by its conditional expectation and higher-order central moments.
Let $\mathcal{B} ( \mathbb{R} )$ be the Borel $\sigma $-algebra generated by all open intervals in $\mathbb{R}$,
$\mathcal{P}$ be the Banach space of all finite signed measures on $( \mathbb{R}, \mathcal{B} ( \mathbb{R} ) )$,
the non-empty convex set $\mathcal{P}_{0} \subseteq \mathcal{P}$ be the domain of the fairly general functional $g$,
and $\mathcal{L} ( \mathcal{S}_{1}, \mathcal{S}_{2} )$ be the collection of all continuous linear functions from $\mathcal{S}_{1}$ to $\mathcal{S}_{2}$.
In addition to the basic assumptions for $\Psi $ in \Cref{sec:Model formulation}, we shall introduces the following assumptions that can be viewed as continuous G\^ateaux differentiability:
\begin{enumerate}
\item There exists $\nabla g: \mathcal{P}_{0} \times \mathbb{R} \to \mathbb{R}$ such that for any $p,q \in \mathcal{P}_{0}$,
      \begin{equation*}
      \lim_{\varepsilon \downarrow 0} \frac{ g ( p + \varepsilon (q-p) ) - g ( p ) }{\varepsilon } =: dg ( p, q-p ) = \int_{\mathbb{R}} \nabla g ( p, x ) \big( q(dx) - p(dx) \big),
      \end{equation*}
      $\int_{\mathbb{R}} | \nabla g ( p,x ) | q(dx) < \infty $, and $p \mapsto dg ( p, \cdot ) \in \mathcal{L} ( \mathcal{P}, \mathbb{R} )$ is continuous.
\item For any $p \in \mathcal{P}_{0}$, there exists ${p}_{\vec{z}}: \mathcal{B} ( \mathbb{R} ) \times \mathbb{R}^{n} \to \mathbb{R}^{n}$ such that for any $\vec{z}, \vec{h} \in \mathbb{R}^{n}$,
      \begin{equation*}
      \lim_{\varepsilon \downarrow 0} \frac{ p ( B, \vec{z} + \varepsilon \vec{h} ) - p ( B, \vec{z} ) }{\varepsilon } = \langle {p}_{\vec{z}} ( B, \vec{z} ), \vec{h} \rangle, \quad \forall B \in \mathcal{B} ( \mathbb{R} ),
      \end{equation*}
      and $\vec{z} \mapsto \int_{\mathbb{R}} \nabla g ( q, x ) \langle {p}_{\vec{z}} ( dx, \vec{z} ), \cdot \rangle \in \mathcal{L} ( \mathbb{R}^{n}, \mathbb{R}^{n} )$ is continuous for any $q \in \mathcal{P}_{0}$.
\end{enumerate} 
Notably, $\nabla g$ is not unique due to $\int_{\mathbb{R}} ( q (dx) - p (dx) ) = 0$ for probability measures $( p,q )$.
By the chain rule, ${\Psi }_{ \vec{z} } ( \vec{z} ) = \int_{\mathbb{R}} \nabla g ( {p}^{t} ( \cdot, \vec{z} ), x ) {p}^{t}_{ \vec{z} } (dx)$ follows.
Then, according to \Cref{thm:sufficient maximum principle,thm:sufficient maximum principle for linear control problem}, the ONEC for $\Psi ( \vec{M}^{u} ( t,x ) ) = g ( {p}^{t} ( \cdot, \vec{M}^{u} ( t,x ) ) )$ can be characterized by \cref{eq:FBSDE :eq} with 
\begin{equation}
\left\{ \begin{aligned}
& {Y}^{t}_{T} = \int_{\mathbb{R} } \nabla g \Big( {p}^{t} \big( \cdot, \vec{M}^{\bar{u}} ( t,x ) \big), x \Big) \big\langle {p}^{t}_{\vec{z}} \big( dx, \vec{M}^{\bar{u}} ( t,x ) \big), \vec{Q} \big( \bar{X}_{T}, \vec{M}^{\bar{u}} ( t,x ) \big) \big\rangle, \\
& {Z}^{t}_{T} = \int_{\mathbb{R} } \nabla g \Big( {p}^{t} \big( \cdot, \vec{M}^{\bar{u}} ( t,x ) \big), x \Big) \big\langle {p}^{t}_{\vec{z}} \big( dx, \vec{M}^{\bar{u}} ( t,x ) \big), \vec{Q}_{x} \big( \bar{X}_{T}, \vec{M}^{\bar{u}} ( t,x ) \big) \big\rangle,
\end{aligned} \right.
\label{eq:FBSDE terminal condition for law-dependent preference :eq}
\end{equation}
where $\vec{Q} ( x, \vec{z} ) := ( {Q}_{1} ( x, \vec{z} ), \ldots, {Q}_{n} ( x, \vec{z} ) )$
for ${Q}_{1} ( x, \vec{z} ) := 1$, ${Q}_{2} ( x, \vec{z} ) := 2 ( x - {z}_{1} )$ and other ${Q}_{j} ( x, \vec{z} ) := j ( x - {z}_{1} )^{j-1} - j {z}_{j-1}$,
and $\vec{Q}_{x} ( x, \vec{z} ) := {\partial }_{x} \vec{Q} ( x, \vec{z} )$.
On the other hand, we write $\vec{M}^{t}_{X} := ( \mathbb{E}_{t} [X], \mathbb{E}_{t} [ ( X - \mathbb{E}_{t} [X] )^{2} ], \ldots, \mathbb{E}_{t} [ ( X - \mathbb{E}_{t} [X] )^{n} ] )$ analogous to $\vec{M}^{u} ( t, {X}_{t} )$,
and then obtain
\begin{equation*}
\lim_{\varepsilon \to 0} \frac{1}{\varepsilon } \bigg\| \vec{M}^{t}_{ X + \varepsilon \Delta } - \vec{M}^{t}_{X} - \mathbb{E}_{t} \bigg[ \int_{X}^{ X + \varepsilon \Delta } \vec{Q} ( v, \vec{M}^{t}_{X} ) \rangle dv \bigg] \bigg\| = 0,
\end{equation*}
where $\| \cdot \|$ is the Euclidean norm, from
\begin{equation*}
  \mathbb{E}_{t} [ ( X + \varepsilon \Delta - \mathbb{E}_{t} [ X + \varepsilon \Delta ] )^{j} ] - \mathbb{E}_{t} [ ( X - \mathbb{E}_{t} [X] )^{j} ]
= \mathbb{E}_{t} \bigg[ \int_{X}^{ X + \varepsilon \Delta } {Q}_{j} ( v, \vec{M}_{X} ) dv \bigg] + o ( \varepsilon ).
\end{equation*}
Consequently, for an arbitrarily fixed $c \in \mathbb{R}$,
\begin{align*}
&   g \big( {p}^{t} ( \cdot, \vec{M}^{t}_{ X + \varepsilon \Delta } ) \big) - g \big( {p}^{t} ( \cdot, \vec{M}^{t}_{X} ) \big) \\
& = \mathbb{E}_{t} \bigg[ \int_{X}^{ X + \varepsilon \Delta } dv \int_{\mathbb{R}} \nabla g \big( {p}^{t} ( \cdot, \vec{M}^{t}_{X} ), x \big) \langle {p}^{t}_{\vec{z}} ( dx, \vec{M}^{t}_{X} ), \vec{Q} ( v, \vec{M}^{t}_{X} ) \rangle \bigg] + o ( \varepsilon ) \\
& = \int_{\mathbb{R}} \bigg( \int_{c}^{s} \int_{\mathbb{R}} \nabla g \big( {p}^{t} ( \cdot, \vec{M}^{t}_{X} ), x \big) \langle {p}^{t}_{\vec{z}} ( dx, \vec{M}^{t}_{X} ), \vec{Q} ( v, \vec{M}^{t}_{X} ) \rangle dv \bigg)
                      \big( \mathbb{P}^{t}_{ X + \varepsilon \Delta } (ds) - \mathbb{P}^{t}_{X} (ds) \big) \\
& \quad + o ( \varepsilon )
\end{align*}
where $\mathbb{P}^{t}_{X}$ denotes the regular conditional law of $X$ given $\mathcal{F}_{t}$ under $\mathbb{P}$. 
This implies that if fixing ${p}^{t}$ and writing $G ( \mathbb{P}^{t}_{X} ) = g ( {p}^{t} ( \cdot, \vec{M}^{t}_{X} ) )$, then one can choose
\begin{equation*}
\nabla G ( \mathbb{P}^{t}_{X}, x ) = \int_{c}^{x} \int_{\mathbb{R}} \nabla g \big( {p}^{t} ( \cdot, \vec{M}^{t}_{X} ), s \big) \langle {p}^{t}_{\vec{z}} ( ds, \vec{M}^{t}_{X} ), \vec{Q} ( v, \vec{M}^{t}_{X} ) \rangle dv, \quad \forall c \in \mathbb{R},
\end{equation*}
so that \cref{eq:FBSDE terminal condition for law-dependent preference :eq} can be re-expressed as 
\begin{equation*}
{Y}^{t}_{T} = {\partial }_{x} \nabla G ( \mathbb{P}^{t}_{ \bar{X}_{T} }, x ) |_{ x = \bar{X}_{T} }, \quad 
{Z}^{t}_{T} = {\partial }_{x}^{2} \nabla G ( \mathbb{P}^{t}_{ \bar{X}_{T} }, x ) |_{ x = \bar{X}_{T} }.
\end{equation*}
Furthermore, if ${p}^{t} ( \cdot, \vec{M}^{t}_{\bar{X}_{T}} ) = \mathbb{P}^{t}_{\bar{X}_{T}}$, then $G \equiv g$, and the corresponding verification result in \cite{Liang-Xia-Yuan-2023} gets reproduced to some extent.
Notably, taking advantage of the second-order adjoint equation for ${Z}^{t}$, we do not need to impose the concavity condition on $\nabla g ( \mathbb{P}^{t}_{\bar{X}_{T}}, \cdot )$ as in \cite[Assumption 3.1]{Liang-Xia-Yuan-2023}.
For instance, if the functions $( b, \sigma )$ are affine in each of $( x,u )$ as in \cite{Liang-Xia-Yuan-2023},
then it follows from \cref{eq:FBSDE :eq} that 
\begin{equation*}
{Z}^{t}_{s} = \mathbb{E}_{s} \Big[ {Z}^{t}_{T} {e}^{ 2 \int_{s}^{T} \bar{\sigma }_{x} (v) d {W}_{v} - \int_{s}^{T} | \bar{\sigma }_{x} (v) |^{2} dv + 2 \int_{s}^{T} \bar{b}_{x} (v) dv } \Big],
\end{equation*}
which automatically satisfies that $\lim_{ s \downarrow t } \mathbb{E}_{t} [ {Z}^{t}_{s} ] \le 0$ under the concavity assumption of $\nabla g ( \mathbb{P}^{t}_{\bar{X}_{T}}, \cdot )$.  
\end{example}

\section{Particular cases with linear controlled SDE}
\label{sec:Particular case: linear controlled SDE}

Let $\mathcal{R} = \mathbb{R}$, $b ( t,x,u ) = {A}_{t} x + {B}_{t} u + {C}_{t}$ and $\sigma ( t,x,u ) = {D}_{t} u + {F}_{t}$,
where $A, B, C, D, F: [ 0,T ] \to \mathbb{R}$ are deterministic continuous functions with $| {D}_{t} | > 0$ for every $t$, and the constant $\kappa \ge 0$.
Suppose that $\Psi ( t, x, {X}_{T}, \vec{z} )$ is independent of $x$ and ${X}_{T}$ and affine in ${z}_{1}$, and write it as $\kappa {z}_{1} + \psi ( t, {z}_{2}, \ldots, {z}_{n} )$.
That is, the objective function is
\begin{equation}
J ( t,x,u ) = \kappa \mathbb{E}_{t} [ {X}^{t,x,u}_{T} ] + \psi \big( t, {M}^{u}_{2} ( t,x ), \ldots, {M}^{u}_{n} ( t,x ) \big),
\label{eq:objective functional particular :eq}
\end{equation}
and the dynamics of ${X}^{t,x,u}$ evolve as the linear controlled SDE:
\begin{equation}
d {X}^{t,x,u}_{s} = ( {A}_{s} {X}^{t,x,u}_{s} + {B}_{s} {u}_{s} + {C}_{s} ) ds + ( {D}_{s} {u}_{s} + {F}_{s} ) d {W}_{s}, \quad
  {X}^{t,x,u}_{t} = x.
\label{eq:linear controlled SDE :eq}
\end{equation}
In addition, for the sake of brevity, we let $\vec{M}_{-1}^{u} ( t,x ) := ( {M}^{u}_{2} ( t,x ), \ldots, {M}^{u}_{n} ( t,x ) )$,
${\alpha }_{0} ( \cdot ) = 1$, ${\alpha }_{1} ( \cdot ) = 0$,
$\vec{\alpha } (y) := ( {\alpha }_{2} (y), \ldots, {\alpha }_{n} (y) )$ for ${\alpha }_{j} (y) := {1}_{ \{ j \in 2 \mathbb{N} \} } (j-1)!! {y}^{j/2}$,
${y}^{\beta }_{t} := \int_{t}^{T} | {D}_{s} {\beta }_{s} |^{2} ds$,
\begin{equation*}
{\theta }_{t} := \int_{t}^{T} \Big| \frac{ {B}_{s} }{ {D}_{s} } \Big|^{2} ds, \quad and \quad
\Theta ( t,x ) := x {e}^{ \int_{t}^{T} {A}_{v} dv } + \int_{t}^{T} {e}^{ \int_{s}^{T} {A}_{v} dv } \Big( {C}_{s} - \frac{ {B}_{s} }{ {D}_{s} } {F}_{s} \Big) ds.
\end{equation*}
Here $(j-1)!!$ denotes the double factorial of $j-1$, or namely, the product of all the positive integers up to $j-1$ that have the same parity (odd or even) as $j-1$.

\begin{remark}
The objective function \cref{eq:objective functional particular :eq} arises from the theory of efficient frontier. See also \cite{Li-Ng-2000,Zhou-Li-2000}.
Generally, the investors try to make trade-offs between the mean and the combination of higher-order central moments, which respectively stand for reward and risk.
To see an economic insight, one can let the state $X$ be the household/individual wealth,
and $u$ be the investment amount on share, with the interest rate $A$, the volatility rate $D$, the market price of risk $B / D$, the income rate $C$ and the white noise coefficient $F$.
Notably, in the coming \Cref{thm:CNEC particular}, the CNEC is not only independent of the wealth level $X$, but also independent of the income rate $C$.
\end{remark}

\subsection{Closed-loop Nash equilibrium control}

According to \cref{eq:objective functional rearrangement for our problem :eq},
\begin{equation}
\Phi ( t, \cdot, \cdot, \vec{z} )
\equiv \phi ( t, \vec{z} )
 := \kappa {z}_{1} + \psi \bigg( t, {z}_{2} - {z}_{1}^{2}, \ldots, \sum_{j=0}^{n} \binom{n}{j} (-1)^{j} {z}_{1}^{j} {z}_{n-j} \bigg).
\label{eq:objective functional rearrangement specific :eq}
\end{equation}
Then, we conclude that ${U}^{ s, \cdot, \vec{z} } ( \cdot, \cdot ) \equiv \phi ( s, \vec{z} )$
and $\vec{\lambda }^{s,y} ( \cdot, \cdot ) \equiv {\phi }_{ \vec{z} } ( s, \vec{m}^{ \tilde{u} } ( s,y ) )$ for \Cref{thm:verification theorem}.
Therefore, to determine a CNEC, it suffices to find a solution for
\begin{equation}
\left\{ \begin{aligned}
& \mathcal{D}_{ \tilde{u} ( t,x ) } \vec{m}  ( t,x ) = 0, \quad s.t. \quad \vec{m} ( T,x ) = ( x, {x}^{2}, \ldots, {x}^{n} ); \\
& \tilde{u} ( t,x ) = - \frac{ {B}_{t} }{ | {D}_{t} |^{2} }
                        \frac{ \big\langle {\phi }_{ \vec{z} } \big( t, \vec{m} ( t,x ) \big), \vec{m}_{x} ( t,x ) \big\rangle }
                             { \big\langle {\phi }_{ \vec{z} } \big( t, \vec{m} ( t,x ) \big), \vec{m}_{xx} ( t,x ) \big\rangle }
                      - \frac{ {F}_{t} }{ {D}_{t} }; \\
& 0 > \big\langle {\phi }_{ \vec{z} } \big( t, \vec{m} ( t,x ) \big), \vec{m}_{xx} ( t,x ) \big\rangle.
\end{aligned} \right.
\label{eq:EHJBS specific :eq}
\end{equation}

\begin{theorem}\label{thm:CNEC particular}
Let $\{ {\beta }_{t} \}_{ t \in [ 0,T ] }$ fulfill the integral equation
\begin{equation}
\frac{ \kappa {B}_{t} }{ | {D}_{t} |^{2} } + {\beta }_{t} \sum_{ 1 \le j \le \frac{n}{2} } 2j (2j-1) {\alpha }_{2j-2} ( {y}^{\beta }_{t} ) {\psi }_{ {z}_{2j} } \big( t, \vec{\alpha } ( {y}^{\beta }_{t} ) \big) = 0,
\label{eq:integral equation for CNEC particular :eq}
\end{equation}
and satisfy the condition
\begin{equation}
\sum_{ 1 \le j \le \frac{n}{2} } j (2j-1) {\alpha }_{2j-2} ( {y}^{\beta }_{t} ) {\psi }_{ {z}_{2j} } \big( t, \vec{\alpha } ( {y}^{\beta }_{t} ) \big) < 0.
\label{eq:concavity condition for CNEC particular :eq}
\end{equation}
Then, $( \tilde{u}, \vec{m}^{ \tilde{u} } )$ determined by
\begin{equation}
\tilde{u} ( t,x ) = {\beta }_{t} {e}^{ - \int_{t}^{T} {A}_{v} dv } - \frac{ {F}_{t} }{ {D}_{t} }
\label{eq:CNEC particular :eq}
\end{equation}
solves the PDE system \cref{eq:EHJBS specific :eq},
and hence, $\tilde{u}$ is a CNEC for the control problem with the objective function \cref{eq:objective functional particular :eq} and the controlled SDE \cref{eq:linear controlled SDE :eq}.
\end{theorem}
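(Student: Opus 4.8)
The plan is to verify directly that the pair $( \tilde{u}, \vec{m}^{ \tilde{u} } )$ given by \cref{eq:CNEC particular :eq} solves the reduced system \cref{eq:EHJBS specific :eq}, since the discussion preceding the theorem has already reduced the search for a CNEC (via \Cref{thm:verification theorem}, using $U^{s,\cdot,\vec{z}} \equiv \phi(s,\vec{z})$ and $\vec{\lambda}^{s,y} \equiv \phi_{\vec{z}}(s,\vec{m}(s,y))$) to solving \cref{eq:EHJBS specific :eq}. The decisive first observation is that the candidate feedback $\tilde{u}(t,x) = \beta_t e^{-\int_t^T A_v dv} - F_t/D_t$ does not depend on $x$. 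Substituting it into \cref{eq:linear controlled SDE :eq} collapses the diffusion coefficient to the deterministic quantity $D_s \beta_s e^{-\int_s^T A_v dv}$, so under $\tilde{u}$ the state solves a linear SDE with deterministic diffusion and affine drift; hence $X^{t,x,\tilde{u}}_T$ is conditionally Gaussian given $\mathcal{F}_t$. Integrating the mean and variance ODEs (the exponential factors cancelling exactly in the variance) I would obtain $\mathbb{E}_t[X_T] = \Theta(t,x) + \int_t^T B_s \beta_s\, ds$, affine in $x$ with slope $p(t) := e^{\int_t^T A_v dv}$, and $\Var_t(X_T) = y^\beta_t$.

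Since $X_T$ is Gaussian with variance $y^\beta_t$, every conditional central moment equals the corresponding Gaussian moment $M^{\tilde{u}}_k(t,x) = \alpha_k(y^\beta_t)$, which is $x$-independent. Thus $\vec{m} = \vec{m}^{\tilde{u}}(t,x)$ are the raw moments of $N(\mathbb{E}_t[X_T], y^\beta_t)$; the martingale property of $\{ m^{\tilde{u}}_j(v, X^{t,x,\tilde{u}}_v) \}$ (Feynman–Kac, as in \Cref{lem:Feynman-Kac representation}) gives the first line of \cref{eq:EHJBS specific :eq} with its terminal condition, while the Gaussian tails make \cref{eq:integrability of m :eq} and the admissibility $\tilde{u} \in \mathcal{U}$ immediate. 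Writing $\partial_x = p(t)\,\partial_\mu$ with $\mu = \mathbb{E}_t[X_T]$ and using $\partial_\mu m_j = j m_{j-1}$ (with $m_0 = 1$), I get $\partial_x m_j = p\, j m_{j-1}$ and $\partial_x^2 m_j = p^2 j(j-1) m_{j-2}$. Because $\phi(t,\vec{m}) = \kappa\,\mathbb{E}_t[X_T] + \psi(t, \vec{\alpha}(y^\beta_t))$ is affine in $x$, I would read off $\langle \phi_{\vec{z}}(t,\vec{m}), \vec{m}_x \rangle = \partial_x[\phi(t,\vec{m})] = \kappa p$ at once.

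The crux is the second-order quantity $\langle \phi_{\vec{z}}(t,\vec{m}), \vec{m}_{xx} \rangle = p^2 \sum_{j=2}^n j(j-1) m_{j-2}\, \phi_{z_j}(t,\vec{m})$. Differentiating $\phi = \kappa z_1 + \psi(t, \mu_2(\vec{z}), \dots, \mu_n(\vec{z}))$ by the chain rule, the $\kappa z_1$ part and every $j=1$ term are annihilated by the factor $j(j-1)$, leaving $\sum_{k=2}^n \psi_{z_k} \big( \sum_j j(j-1) m_{j-2} \partial_{z_j} \mu_k \big)$. I would evaluate the inner operator on the central-moment polynomial $\mu_k(\vec{z}) = \sum_{i=0}^k \binom{k}{i}(-1)^i z_1^i z_{k-i}$: using $\partial_{z_j}\mu_k|_{z_1 = \mu} = \binom{k}{j}(-1)^{k-j}\mu^{k-j}$ for $2 \le j \le k$ together with $j(j-1)\binom{k}{j} = k(k-1)\binom{k-2}{j-2}$, the inner sum collapses to $k(k-1)\sum_{l=0}^{k-2}\binom{k-2}{l}(-\mu)^{(k-2)-l} m_l$, and the remaining sum is precisely the $(k-2)$-th central moment of $X_T$, equal to $\alpha_{k-2}(y^\beta_t)$ by Gaussianity. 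This combinatorial collapse — recognising the weighted raw-moment sum as $k(k-1)$ times the $(k-2)$-th central moment — is the step I expect to demand the most care. It yields $\langle \phi_{\vec{z}}, \vec{m}_{xx} \rangle = p^2 \sum_k k(k-1)\alpha_{k-2}\psi_{z_k} = 2 p^2 \sum_{1 \le j \le n/2} j(2j-1)\alpha_{2j-2}(y^\beta_t)\psi_{z_{2j}}(t, \vec{\alpha}(y^\beta_t))$, only the even indices $k = 2j$ surviving since $\alpha_{k-2}$ vanishes for odd $k$.

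Finally I would close the loop. Condition \cref{eq:concavity condition for CNEC particular :eq} forces this last expression to be strictly negative (the factor $2p^2 > 0$), which is exactly the third line of \cref{eq:EHJBS specific :eq}. Inserting $\langle \phi_{\vec{z}}, \vec{m}_x \rangle = \kappa p$ and the expression just found for $\langle \phi_{\vec{z}}, \vec{m}_{xx} \rangle$ into the feedback line of \cref{eq:EHJBS specific :eq} and requiring it to reproduce \cref{eq:CNEC particular :eq} (whose leading term is $\beta_t / p$) reduces, after cancelling the common factor $p$, precisely to the scalar integral equation \cref{eq:integral equation for CNEC particular :eq} for $\beta_t$. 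Hence all three lines of \cref{eq:EHJBS specific :eq} are satisfied, and \Cref{thm:verification theorem} concludes that $\tilde{u}$ is a CNEC.
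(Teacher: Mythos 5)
Your proposal is correct and takes essentially the same approach as the paper's proof: substitute the state-independent candidate into the SDE to get a conditionally Gaussian terminal state, compute $\partial_x m^{\tilde{u}}_j$ and $\partial_x^2 m^{\tilde{u}}_j$ from the deterministic factor $e^{\int_t^T A_v\,dv}$, collapse the chain-rule double sum via $j(j-1)\binom{i}{j}=i(i-1)\binom{i-2}{j-2}$ into even-order Gaussian central moments $\alpha_{i-2}(y^\beta_t)$, and then read off the third line of \cref{eq:EHJBS specific :eq} from \cref{eq:concavity condition for CNEC particular :eq} and the feedback line from \cref{eq:integral equation for CNEC particular :eq}. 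The only differences are presentational: you make the combinatorial identity and the verification of the moment PDEs and integrability conditions explicit where the paper leaves them implicit.
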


\begin{proof}
Plugging \cref{eq:CNEC particular :eq} into the SDE in \cref{eq:linear controlled SDE :eq} yields
\begin{equation*}
d {X}^{ t,x, \tilde{u} }_{s} = \Big( {A}_{s} {X}^{ t,x, \tilde{u} }_{s} + {B}_{s} {\beta }_{s} {e}^{ - \int_{s}^{T} {A}_{v} dv } + {C}_{s} - \frac{ {B}_{s} }{ {D}_{s} } {F}_{s} \Big) ds
                             + {D}_{s} {\beta }_{s} {e}^{ - \int_{s}^{T} {A}_{v} dv } d {W}_{s},
\end{equation*}
which leads to ${X}^{ t, x, \tilde{u} }_{T} = \Theta ( t,x ) + \int_{t}^{T} {B}_{s} {\beta }_{s} ds + \int_{t}^{T} {D}_{s} {\beta }_{s} d {W}_{s}$.
Since ${\partial }_{x} {X}^{ t,x, \tilde{u} }_{T} = \exp ( \int_{t}^{T} {A}_{v} dv )$ is deterministic, then
\begin{equation}
\left\{ \begin{aligned}
    {\partial }_{x} {m}^{ \tilde{u} }_{j} ( t,x )
& = j \mathbb{E}_{t} [ ( {X}^{ t,x, \tilde{u} }_{T} )^{j-1} ] {\partial }_{x} {X}^{ t,x, \tilde{u} }_{T}
  = j {m}^{ \tilde{u} }_{j-1} ( t,x ) {e}^{ \int_{t}^{T} {A}_{v} dv }, \\
    {\partial }_{x}^{2} {m}^{ \tilde{u} }_{j} ( t,x )
& = j {\partial }_{x} {m}^{ \tilde{u} }_{j-1} ( t,x ) {e}^{ \int_{t}^{T} {A}_{v} dv }
  = j (j-1) {m}^{ \tilde{u} }_{j-2} ( t,x ) {e}^{ 2 \int_{t}^{T} {A}_{v} dv }.
\end{aligned} \right.
\label{eq:partial derivatives of moment functions :eq}
\end{equation}
Since ${X}^{ t,x, \tilde{u} }_{T} - \mathbb{E}_{t} [ {X}^{ t,x, \tilde{u} }_{T} ] \sim N ( 0, {y}^{\beta }_{t} )$,
we can show that
\begin{equation}
\sum_{j=0}^{i} \binom{i}{j} (-1)^{i-j} \big( {m}^{ \tilde{u} }_{1} ( t,x ) \big)^{i-j} {m}^{ \tilde{u} }_{j} ( t,x )
\equiv {M}^{ \tilde{u} }_{i} ( t,x ) = {\alpha }_{i} ( {y}^{\beta }_{t} ), \quad \forall i \in \mathbb{N}.
\label{eq:central moment :eq}
\end{equation}
Consequently, $\phi \big( t, \vec{m}^{ \tilde{u} } ( t,x ) \big) = \kappa {m}^{ \tilde{u} }_{1} ( t,x ) + \psi \big( t, \vec{\alpha } ( {y}^{\beta }_{t} ) \big)$ follows from \cref{eq:objective functional particular :eq},
and hence,
\begin{equation*}
  \big\langle {\phi }_{ \vec{z} } \big( t, \vec{m}^{ \tilde{u} } ( t,x ) \big), \vec{m}^{ \tilde{u} }_{x} ( t,x ) \big\rangle
= {\partial }_{x} \phi \big( t, \vec{m}^{ \tilde{u} } ( t,x ) \big)
= \kappa {\partial }_{x} {m}^{ \tilde{u} }_{1} ( t,x )
= \kappa {e}^{ \int_{t}^{T} {A}_{v} dv },
\end{equation*}
while it follows from \cref{eq:objective functional rearrangement specific :eq}, \cref{eq:partial derivatives of moment functions :eq,eq:central moment :eq} that
\begin{align*}
    \big\langle {\phi }_{ \vec{z} } \big( t, \vec{m}^{ \tilde{u} } ( t,x ) \big), \vec{m}^{ \tilde{u} }_{xx} ( t,x ) \big\rangle
& = \sum_{j=2}^{n} \sum_{i=j}^{n} \binom{i}{j} \big( - {m}^{ \tilde{u} }_{1} ( t,x ) \big)^{i-j}
                                  {\psi }_{ {z}_{i} } \big( t, \vec{\alpha } ( {y}^{\beta }_{t} ) \big)
                                  {\partial }_{x}^{2} {m}^{ \tilde{u} }_{j} ( t,x ) \\
& = {e}^{ 2 \int_{t}^{T} {A}_{v} dv }
    \sum_{i=2}^{n} i (i-1) {\alpha }_{i-2} ( {y}^{\beta }_{t} ) {\psi }_{ {z}_{i} } \big( t, \vec{\alpha } ( {y}^{\beta }_{t} ) \big) \\
& = {e}^{ 2 \int_{t}^{T} {A}_{v} dv }
    \sum_{ 1 \le j \le \frac{n}{2} } 2j (2j-1) {\alpha }_{2j-2} ( {y}^{\beta }_{t} ) {\psi }_{ {z}_{2j} } \big( t, \vec{\alpha } ( {y}^{\beta }_{t} ) \big).
\end{align*}
Therefore, $\langle {\phi }_{ \vec{z} } ( t, \vec{m} ( t,x ) ), \vec{m}^{ \tilde{u} }_{xx} ( t,x ) \rangle < 0$ immediately arises from the condition \cref{eq:concavity condition for CNEC particular :eq}.
On the other hand, it follows from the equation \cref{eq:integral equation for CNEC particular :eq} that
\begin{equation*}
- \frac{ {B}_{t} }{ | {D}_{t} |^{2} }
  \frac{ \big\langle {\phi }_{ \vec{z} } \big( t, \vec{m} ( t,x ) \big), \vec{m}_{x} ( t,x ) \big\rangle }
       { \big\langle {\phi }_{ \vec{z} } \big( t, \vec{m} ( t,x ) \big), \vec{m}_{xx} ( t,x ) \big\rangle }
= {\beta }_{t} {e}^{ - \int_{t}^{T} {A}_{v} dv }.
\end{equation*}
Hence, $( \tilde{u}, \vec{m}^{ \tilde{u} } )$ determined by \cref{eq:CNEC particular :eq} solves the PDE system \cref{eq:EHJBS specific :eq}.
\end{proof}
 
\begin{remark}\label{rem:independence for odd central moments}
In view of \cref{eq:integral equation for CNEC particular :eq,eq:concavity condition for CNEC particular :eq}, one can find that 
corresponding to the CNEC $\tilde{u}$ given by \cref{eq:CNEC particular :eq}, ${X}^{ t,x, \tilde{u} }_{T}$ is normally distributed, and hence all of its $( 2j+1 )$-th conditional central moment ${\alpha }_{i} ( {y}_{t}^{\beta } )$ vanish.
Consequently, $\tilde{u}$ does not vary as the preferences on those odd-order central moments change.
In other words, if $\tilde{u}$ given by \cref{eq:integral equation for CNEC particular :eq,eq:concavity condition for CNEC particular :eq,eq:CNEC particular :eq}
is a CNEC for the objective function \cref{eq:objective functional particular :eq},
then it is also a CNEC for
\begin{equation*}
J ( t,x,u ) = \kappa \mathbb{E}_{t} [ {X}^{t,x,u}_{T} ] + \psi \big( t, {M}^{u}_{2} ( t,x ), 0, {M}^{u}_{4} ( t,x ), 0, \ldots, {1}_{ \{ n \in 2 \mathbb{N} \} } {M}^{u}_{n} ( t,x ) \big),
\end{equation*}
Furthermore, for any odd polynomial function $P ( t, \cdot )$, $\tilde{u}$ is also a CNEC for
\begin{equation*}
J ( t,x,u ) = \kappa \mathbb{E}_{t} [ {X}^{t,x,u}_{T} ] + \psi \big( t, \vec{M}^{u}_{-1} ( t,x ) \big) + \mathbb{E}_{t} \big[ P ( t, {X}^{t,x,u}_{T} - \mathbb{E}_{t} [ {X}^{t,x,u}_{T} ] ) \big].
\end{equation*}
Intuitively speaking, the independence of the preferences on the odd-order central moments for the CNEC $\tilde{u}$ meets the risk attitude 
that a positive odd-order central moment has both benefits (e.g., a long right tail) and drawbacks (e.g., a small mode or median), 
as we have mentioned in \Cref{sec:Introduction}.
As far as results are concerned, the CNEC does not only sacrifice those benefits, but also avoids those drawbacks, so that agents with different risk attitudes towards odd-order central moments can get the same time-consistent strategy.
However, this explanation strongly relies on the artificial structure of our objective function \cref{eq:objective functional particular :eq}, and is likely to fail for more general problems. 
For instance, if we only change the constant $\kappa $ therein to some function $\kappa (x)$, 
then the proof of \Cref{thm:CNEC particular} no longer holds, and one may obtain an $x$-dependent $\tilde{u} ( t,x )$ distinct from \cref{eq:CNEC particular :eq}.
Thus, this leads to a loss of the independence of the preferences on the odd-order central moments for the CNEC $\tilde{u}$.
Summing up, for the sake of rigor, we prefer to think of this independence as a mathematical consequence, rather than a necessary result for some intuitive characteristics.
\end{remark}

If $\kappa = 0$, then $\beta \equiv 0$ solves \cref{eq:integral equation for CNEC particular :eq}.
Moreover, if ${\psi }_{ {z}_{2} } ( t, \vec{0} ) < 0$ arising from \cref{eq:concavity condition for CNEC particular :eq} holds for every $t$, 
then $\tilde{u} ( t,x ) = - {F}_{t} / {D}_{t}$ gives a CNEC such that the corresponding ${X}_{T}$ is deterministic.
The case with $\kappa > 0$ is slightly more complicated.
The following theorem provides a sufficient condition for the existence and uniqueness of solution for \cref{eq:integral equation for CNEC particular :eq} with $\kappa > 0$.

\begin{theorem}\label{thm:uniqueness of solution to integral equation}
Suppose that
\begin{equation*}
f ( t,y ) := - \bigg( \sum_{ 1 \le j \le \frac{n}{2} } 2j ( 2j-1 ) {\alpha }_{2j-2} (y) {\psi }_{ {z}_{2j} } \big( t, \vec{\alpha } (y) \big) \bigg)^{-1}
\end{equation*}
is strictly positive and uniformly bounded over $[ 0,T ] \times [ 0, \infty )$.
Then, \cref{eq:integral equation for CNEC particular :eq} with $\kappa > 0$ admits a unique classical solution $\beta $.
\end{theorem}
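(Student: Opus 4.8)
The plan is to recast the integral equation \cref{eq:integral equation for CNEC particular :eq} as a scalar terminal-value ODE for the quantity $y^\beta_t = \int_t^T | {D}_s {\beta}_s |^2\, ds$, and then to invoke the classical Cauchy--Lipschitz theory. First I would note that, since $f(t,y) = -\big( \sum_{ 1 \le j \le \frac{n}{2} } 2j(2j-1) \alpha_{2j-2}(y)\, \psi_{z_{2j}}(t, \vec{\alpha}(y)) \big)^{-1}$ is by hypothesis strictly positive and finite, the bracketed sum never vanishes, so \cref{eq:integral equation for CNEC particular :eq} is equivalent to
\begin{equation*}
\beta_t = \frac{ \kappa {B}_t }{ | {D}_t |^2 }\, f ( t, y^\beta_t ).
\end{equation*}
Substituting this into the identity $\tfrac{d}{dt} y^\beta_t = - | {D}_t \beta_t |^2$ and using $y^\beta_T = 0$ yields the terminal-value problem
\begin{equation*}
\dot{y}_t = - \kappa^2\, \frac{ {B}_t^2 }{ | {D}_t |^2 }\, \big( f ( t, y_t ) \big)^2, \qquad y_T = 0.
\end{equation*}

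Next I would establish that solving \cref{eq:integral equation for CNEC particular :eq} for a continuous $\beta$ is equivalent to solving this ODE for a $C^1$ function $y$. Given a $C^1$ solution $y$, define $\beta_t := \frac{\kappa {B}_t}{|{D}_t|^2} f(t,y_t)$; then $|{D}_t\beta_t|^2 = -\dot y_t$, so integrating from $t$ to $T$ with $y_T=0$ gives $y^\beta_t = y_t$, confirming that $\beta$ solves \cref{eq:integral equation for CNEC particular :eq}. Conversely, any continuous solution $\beta$ produces a $C^1$ function $y^\beta$ solving the ODE. To confine solutions to a compact region, I would use the a priori bound: since $0 < f \le C$ for some constant $C$ and ${B}_t^2/|{D}_t|^2$ is continuous, hence bounded, on $[0,T]$, one has $|\dot y_t| \le M := \kappa^2 C^2 \sup_{t} {B}_t^2/|{D}_t|^2$, whence every solution satisfies $0 \le y_t \le M(T-t) \le MT$. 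Thus it suffices to analyse the ODE on the compact set $[0,T]\times[0,MT]$.

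Finally I would apply the Picard--Lindel\"of theorem (after reversing time) on this compact set. The right-hand side $F(t,y) := -\kappa^2 ({B}_t^2/|{D}_t|^2)\,(f(t,y))^2$ is continuous, since $f$ is the reciprocal of a continuous nowhere-vanishing function, so existence already follows. The main obstacle is uniqueness, because the hypothesis supplies only positivity and boundedness of $f$, not its Lipschitz continuity. I would extract the Lipschitz property from the ambient regularity: the standing assumptions on $\Psi$ (and hence $\psi$) make each $\psi_{z_{2j}}(t,\cdot)$ Lipschitz in $\vec z$ uniformly in $t$, while $\alpha_{2j-2}(y)$ and $\vec\alpha(y)$ are polynomials in $y$, hence Lipschitz on $[0,MT]$; therefore the sum $-1/f(t,y) = \sum_{ 1 \le j \le \frac{n}{2} } 2j(2j-1)\alpha_{2j-2}(y)\psi_{z_{2j}}(t,\vec\alpha(y))$ is Lipschitz in $y$ on $[0,MT]$, uniformly in $t$. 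Because $f$ is bounded, this sum is bounded away from $0$, so $f$ and then $f^2$ are Lipschitz in $y$ on $[0,MT]$, making $F(t,\cdot)$ Lipschitz there. Cauchy--Lipschitz then delivers a unique $C^1$ solution $y$ of the terminal-value problem on $[0,T]$, and the associated $\beta_t = \frac{\kappa {B}_t}{|{D}_t|^2} f(t,y_t)$ is the unique continuous (classical) solution of \cref{eq:integral equation for CNEC particular :eq}.
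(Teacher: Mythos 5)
Your proposal is correct and follows essentially the same route as the paper's proof: both recast \cref{eq:integral equation for CNEC particular :eq} as the terminal-value ODE $|D_t|^2\dot y_t + |\kappa B_t f(t,y_t)|^2=0$, $y_T=0$, confine $y$ to a compact interval via the boundedness of $f$, deduce that $f^2(t,\cdot)$ is Lipschitz there from the Lipschitz continuity of its reciprocal (the sum) together with the boundedness of $f$, and conclude by Picard--Lindel\"of before recovering $\beta_t=\kappa B_t|D_t|^{-2}f(t,y_t)$. Your write-up merely makes explicit two points the paper leaves implicit, namely the two-way equivalence between the ODE and the integral equation and the source of the Lipschitz bound in the standing assumptions on $\psi$.
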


\begin{proof}
\cref{eq:integral equation for CNEC particular :eq} can be re-expressed as $| {D}_{t} |^{2} {\beta }_{t} = \kappa {B}_{t} f ( t, {y}^{\beta }_{t} )$.
If $\beta $ solves it, then
\begin{align*}
& 0 \le {\beta }_{t} = \kappa {B}_{t} | {D}_{t} |^{-2} f ( t, {y}^{\beta }_{t} ) \le \kappa {B}_{t} | {D}_{t} |^{-2} \sup_{ y \ge 0 } f ( t,y ) < \infty, \\
& 0 \le {y}^{\beta }_{t} \le \int_{0}^{T} | {D}_{s} {\beta }_{s} |^{2} ds \le T {\kappa }^{2} ( \inf D )^{-2} ( \sup |B|^{2} ) ( \sup |f|^{2} ) < \infty,
\end{align*}
and $y = {y}^{\beta }$ solves ${y}_{t}' + | \kappa {B}_{t} |^{2} | {D}_{t} |^{-2} | f ( t, {y}_{t} ) |^{2} = 0$.
Conversely, we conclude that $| f ( t, \cdot ) |^{2}$ is Lipschitz continuous on $[ 0, T {\kappa }^{2} ( \inf D )^{-2} ( \sup |B|^{2} ) ( \sup |f|^{2} ) ]$, since
\begin{equation*}
\big| \big( f ( t,y ) \big)^{2} - \big( f ( t,z ) \big)^{2} \big| = | f ( t,y ) f ( t,z ) | | f ( t,y ) + f ( t,z ) | \Big| \frac{1}{ f ( t,y ) } - \frac{1}{ f ( t,z ) } \Big|,
\quad \forall y,z \ge 0.
\end{equation*}
Then, due to the existence and uniqueness theorem for first-order ordinary differential equation, there exists a unique $\{ {y}_{t} \}_{ t \in [ 0,T ] }$ such that
$| {D}_{t} |^{2} {y}_{t}' + | \kappa {B}_{t} f ( t, {y}_{t} ) |^{2} = 0$ with ${y}_{T} = 0$.
Hence, ${\beta }_{t} = \kappa {B}_{t} | {D}_{t} |^{-2} f ( t, {y}_{t} )$ gives the unique solution to (\ref{eq:integral equation for CNEC particular :eq}).
\end{proof}

Apart from the result given by \Cref{thm:CNEC particular},
according to \Cref{lem:Feynman-Kac representation} and \Cref{thm:Bellman equation}, seeking a CNEC can be reduced to solving the following Bellman equation:
\begin{align}
0 = \max_{ \zeta \in \mathbb{R} }
    \bigg\{ & {V}_{t} ( t,x )
            + {V}_{x} ( t,x ) ( {A}_{t} x + {B}_{t} u + {C}_{t} )
            + \frac{1}{2} {V}_{xx} ( t,x ) ( {D}_{t} \zeta + {F}_{t} )^{2} \notag \\
          & - {\phi }_{t} \big( t, \vec{m}^{ \tilde{u} } ( t,x ) \big)
            - \frac{1}{2} ( {D}_{t} \zeta + {F}_{t} )^{2}
              \big\langle \vec{m}^{ \tilde{u} }_{x} ( t,x ) {\phi }_{ \vec{z} \vec{z} } \big( t, \vec{m}^{ \tilde{u} } ( t,x ) \big), \vec{m}^{ \tilde{u} }_{x} ( t,x ) \big\rangle \bigg\},
\label{eq:Bellman equation for specific problem :eq}
\end{align}
subject to the terminal condition $V ( T,x ) = \kappa x + \psi ( T, \vec{0} )$.

\begin{theorem}\label{thm:solving Bellman equation for specific problem}
Suppose that the integral equation \cref{eq:integral equation for CNEC particular :eq} admits the solution $\beta $ such that \cref{eq:concavity condition for CNEC particular :eq} holds,
$\int_{t}^{T} {B}_{s} {\beta }_{s} ds < \infty $ and ${y}^{\beta }_{t} \equiv \int_{t}^{T} | {D}_{s} {\beta }_{s} |^{2} ds < \infty $.
Then,
\begin{equation}
V ( t,x ) = \kappa \Theta ( t,x ) + \kappa \int_{t}^{T} {B}_{s} {\beta }_{s} ds + \psi \big( t, \vec{\alpha } ( {y}^{\beta }_{t} ) \big)
\label{eq:value function a.w. specific problem :eq}
\end{equation}
fulfills \cref{eq:Bellman equation for specific problem :eq}, and \cref{eq:CNEC particular :eq} realizes the maximum on the right-hand side of \cref{eq:Bellman equation for specific problem :eq}.
\end{theorem}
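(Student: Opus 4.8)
The plan is to verify directly that the candidate $V$ in \cref{eq:value function a.w. specific problem :eq} solves the Bellman equation \cref{eq:Bellman equation for specific problem :eq} with $\tilde{u}$ from \cref{eq:CNEC particular :eq} attaining the maximum on the right-hand side. Writing $E_t := e^{\int_t^T A_v dv}$ for brevity, I would first record the derivatives of $V$. Since $\Theta(t,x)$ is affine in $x$ with $\Theta_x \equiv E_t$, while the remaining summands of \cref{eq:value function a.w. specific problem :eq} are $x$-free, one gets $V_x(t,x) = \kappa E_t$ and, crucially, $V_{xx}(t,x) \equiv 0$, so the term $\tfrac{1}{2} V_{xx}(D_t\zeta+F_t)^2$ in \cref{eq:Bellman equation for specific problem :eq} drops out.

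The second ingredient is an identity for the Hessian term. By \cref{eq:central moment :eq} we have $\phi(t,\vec{m}^{\tilde{u}}(t,x)) = \kappa m_1^{\tilde{u}}(t,x) + \psi(t,\vec{\alpha}(y_t^\beta))$, and since $m_1^{\tilde{u}}$ is affine in $x$ while $\psi(t,\vec{\alpha}(y_t^\beta))$ is $x$-free, the map $x\mapsto\phi(t,\vec{m}^{\tilde{u}}(t,x))$ is affine. Differentiating it twice by the chain rule gives
\begin{equation*}
\langle\vec{m}^{\tilde{u}}_{x}(t,x)\,\phi_{\vec{z}\vec{z}}(t,\vec{m}^{\tilde{u}}(t,x)),\vec{m}^{\tilde{u}}_{x}(t,x)\rangle = -\langle\phi_{\vec{z}}(t,\vec{m}^{\tilde{u}}(t,x)),\vec{m}^{\tilde{u}}_{xx}(t,x)\rangle,
\end{equation*}
whose right-hand side was already evaluated in the proof of \Cref{thm:CNEC particular} as $E_t^2 S_t$, where I write $S_t := \sum_{1\le j\le n/2}2j(2j-1)\alpha_{2j-2}(y_t^\beta)\psi_{z_{2j}}(t,\vec{\alpha}(y_t^\beta))$. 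Hence the bracketed $\zeta$-objective in \cref{eq:Bellman equation for specific problem :eq} becomes a quadratic in $\zeta$ whose $\zeta^2$-coefficient is $\tfrac12 E_t^2 S_t D_t^2$, negative by the concavity condition \cref{eq:concavity condition for CNEC particular :eq} (which says exactly $S_t<0$); so the maximization is concave and well posed.

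I would then solve the first-order condition, which forces $D_t\zeta^\ast+F_t = -\kappa B_t E_t^{-1}/(S_t D_t)$. Matching this with $D_t\tilde{u}+F_t = D_t\beta_t E_t^{-1}$ read off from \cref{eq:CNEC particular :eq} reduces to $D_t^2\beta_t S_t = -\kappa B_t$, which is precisely the integral equation \cref{eq:integral equation for CNEC particular :eq}; thus $\tilde{u}$ is the maximizer. It then remains to check that the maximal value equals $0$. The only delicate computation is $V_t$, and within it $\partial_t\psi(t,\vec{\alpha}(y_t^\beta))$: using $\tfrac{d}{dt}y_t^\beta = -|D_t\beta_t|^2$ together with the identity $\alpha_{2j}'(y) = j(2j-1)\alpha_{2j-2}(y)$ produces a term $-\tfrac12 D_t^2\beta_t^2 S_t$ that exactly cancels the quadratic contribution $\tfrac12 E_t^2 S_t(D_t\tilde{u}+F_t)^2 = \tfrac12 D_t^2\beta_t^2 S_t$ at $\zeta=\tilde{u}$. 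Substituting $\Theta_t = -A_t x E_t - E_t(C_t - B_t F_t/D_t)$ and expanding the drift term $V_x(A_t x+B_t\tilde{u}+C_t)$, all remaining summands (the $\psi_t$ against $-\phi_t=-\psi_t$, the $\kappa B_t\beta_t$ pair, and the $x$- and $C_t$-terms) cancel pairwise, leaving value $0$. The terminal condition is immediate since $\Theta(T,x)=x$, $\int_T^T B_s\beta_s\,ds=0$ and $y_T^\beta=0$ give $\vec{\alpha}(0)=\vec{0}$, so $V(T,x)=\kappa x+\psi(T,\vec{0})$.

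The main obstacle is the bookkeeping inside $V_t$ — recognizing that the time-derivative of $\psi(t,\vec{\alpha}(y_t^\beta))$ generates precisely the quadratic term needed to cancel the Hessian contribution — together with the observation that the first-order condition for $\zeta$ coincides with the integral equation \cref{eq:integral equation for CNEC particular :eq}, which is what ties this verification back to the characterization of $\tilde{u}$ in \Cref{thm:CNEC particular}.
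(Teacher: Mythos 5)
Your proposal is correct, and its overall strategy is the same as the paper's: plug the candidate $V$ from \cref{eq:value function a.w. specific problem :eq} into \cref{eq:Bellman equation for specific problem :eq}, reduce the bracket to a concave quadratic in $\zeta$ (concavity being exactly \cref{eq:concavity condition for CNEC particular :eq}), identify the first-order condition with the integral equation \cref{eq:integral equation for CNEC particular :eq} so that $\tilde{u}$ from \cref{eq:CNEC particular :eq} is the maximizer, and check that all remaining terms cancel. The one genuine difference is how the Hessian term is evaluated. Writing ${E}_{t} := {e}^{ \int_{t}^{T} {A}_{v} dv }$ and ${S}_{t} := \sum_{ 1 \le j \le n/2 } 2j (2j-1) {\alpha }_{2j-2} ( {y}^{\beta }_{t} ) {\psi }_{ {z}_{2j} } ( t, \vec{\alpha } ( {y}^{\beta }_{t} ) )$, the paper first proves a general second-derivative identity for compositions $H ( \vec{z} ) = h ( {z}_{1}, {z}_{2} - {z}_{1}^{2}, \ldots )$ (a double-sum computation over the Hessian entries) and then specializes it to $\phi $ to get $\langle \vec{m}^{ \tilde{u} }_{x} {\phi }_{ \vec{z} \vec{z} }, \vec{m}^{ \tilde{u} }_{x} \rangle = - {E}_{t}^{2} {S}_{t}$; you instead observe that $x \mapsto \phi ( t, \vec{m}^{ \tilde{u} } ( t,x ) ) = \kappa {m}^{ \tilde{u} }_{1} ( t,x ) + \psi ( t, \vec{\alpha } ( {y}^{\beta }_{t} ) )$ is affine in $x$, so the chain rule forces $\langle \vec{m}^{ \tilde{u} }_{x} {\phi }_{ \vec{z} \vec{z} }, \vec{m}^{ \tilde{u} }_{x} \rangle = - \langle {\phi }_{ \vec{z} }, \vec{m}^{ \tilde{u} }_{xx} \rangle$, an inner product already computed in the proof of \Cref{thm:CNEC particular}. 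That is a cleaner derivation of the same formula and it ties this verification more tightly to \Cref{thm:CNEC particular}. A second, organizational difference: the paper invokes \cref{eq:integral equation for CNEC particular :eq} twice (once to rewrite ${\partial }_{t} \psi ( t, \vec{\alpha } ( {y}^{\beta }_{t} ) ) - {\psi }_{t} ( t, \vec{\alpha } ( {y}^{\beta }_{t} ) ) = \frac{1}{2} \kappa {B}_{t} {\beta }_{t}$ before substitution, and once again after inserting the maximizer into \cref{eq:expression of value function under substitution :eq}), whereas you use it only for the first-order-condition matching, since your cancellation of $- \frac{1}{2} | {D}_{t} {\beta }_{t} |^{2} {S}_{t}$ from ${V}_{t}$ against $\frac{1}{2} {E}_{t}^{2} {S}_{t} ( {D}_{t} \tilde{u} + {F}_{t} )^{2}$ is a direct algebraic identity once $\zeta = \tilde{u}$. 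Both bookkeeping schemes are valid and yield the same conclusion; yours is marginally more economical, while the paper's differentiation identity is more general and self-contained. You also explicitly check the terminal condition, which the paper leaves implicit.
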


\begin{proof}
For twice differentiable functions $H ( \vec{z} )$ and $h ( \vec{z} )$ with
\begin{equation*}
H ( \vec{z} ) = h \bigg( {z}_{1}, {z}_{2} - {z}_{1}^{2}, {z}_{3} - 3 {z}_{2} {z}_{1} + 2 {z}_{1}^{3}, \ldots, \sum_{j=0}^{n} \binom{n}{j} (-1)^{j} {z}_{1}^{j} {z}_{n-j} \bigg),
\end{equation*}
by straightforward calculation, one can obtain
\begin{equation*}
\sum_{i=1}^{n} \sum_{j=1}^{n} i j {H}_{ {z}_{i} {z}_{j} } ( \vec{z} ) {z}_{i-1} {z}_{j-1}
= {h}_{ {z}_{1} {z}_{1} } - \sum_{k=2}^{n} k (k-1) {h}_{ {z}_{k} } \sum_{i=0}^{k-2} \binom{k-2}{i} (-1)^{k-2-i} {z}_{1}^{k-2-i} {z}_{i},
\end{equation*}
where ${h}_{ {z}_{1} {z}_{1} } = {h}_{ {z}_{1} {z}_{1} } ( {z}_{1}, {z}_{2} - {z}_{1}^{2}, \ldots, \sum_{j=0}^{n} \binom{n}{j} (-1)^{j} {z}_{1}^{j} {z}_{n-j} )$ and so on for brevity.
Applying this differentiation result to \cref{eq:objective functional rearrangement specific :eq}, with the first identity in \cref{eq:partial derivatives of moment functions :eq}, we have
\begin{align*}
& \big\langle \vec{m}^{ \tilde{u} }_{x} ( t,x ) {\phi }_{ \vec{z} \vec{z} } \big( t, \vec{m}^{ \tilde{u} } ( t,x ) \big), \vec{m}^{ \tilde{u} }_{x} ( t,x ) \big\rangle \\
& = {e}^{ 2 \int_{t}^{T} {A}_{v} dv } \sum_{i=1}^{n} \sum_{j=1}^{n} i j {\phi }_{ {z}_{i} {z}_{j} } \big( t, \vec{m}^{ \tilde{u} } ( t,x ) \big) {m}^{ \tilde{u} }_{i-1} ( t,x ) {m}^{ \tilde{u} }_{j-1} ( t,x ) \\
& = - {e}^{ 2 \int_{t}^{T} {A}_{v} dv } \sum_{k=2}^{n} k (k-1) {M}^{ \tilde{u} }_{k-2} ( t,x ) {\psi }_{ {z}_{k} } \big( t, \vec{M}^{ \tilde{u} }_{-1} ( t,x ) \big) \\
& = - {e}^{ 2 \int_{t}^{T} {A}_{v} dv } \sum_{ 1 \le j \le \frac{n}{2} } 2j (2j-1) {\alpha }_{2j-2} ( {y}^{\beta }_{t} ) {\psi }_{ {z}_{2j} } \big( t, \vec{\alpha } ( {y}^{\beta }_{t} ) \big).
\end{align*}
Then, plugging the above result with ${\phi }_{t} ( t, \vec{m}^{ \tilde{u} } ( t,x ) ) = {\psi }_{t} ( t, \vec{M}^{ \tilde{u} }_{-1} ( t,x ) ) = {\psi }_{t} ( t, \vec{\alpha } ( {y}^{\beta }_{t} ) )$,
$\frac{1}{2} \kappa {B}_{t} {\beta }_{t} = {\partial }_{t} \psi \big( t, \vec{\alpha } ( {y}^{\beta }_{t} ) \big) - {\psi }_{t} \big( t, \vec{\alpha } ( {y}^{\beta }_{t} ) \big)$
arising from \cref{eq:integral equation for CNEC particular :eq}
and the EVF candidate \cref{eq:value function a.w. specific problem :eq} into the Bellman equation \cref{eq:Bellman equation for specific problem :eq}, yields
\begin{align}
0 = \max_{ \zeta \in \mathbb{R} }
    \bigg\{ & {e}^{ \int_{t}^{T} {A}_{v} dv } \Big( \zeta + \frac{ {F}_{t} }{ {D}_{t} } \Big) \kappa {B}_{t} - \frac{1}{2} \kappa {B}_{t} {\beta }_{t} \notag \\
          & + \frac{1}{2} {e}^{ 2 \int_{t}^{T} {A}_{v} dv } \Big( \zeta + \frac{ {F}_{t} }{ {D}_{t} } \Big)^{2} | {D}_{t} |^{2}
              \sum_{ 1 \le j \le \frac{n}{2} } 2j (2j-1) {\alpha }_{2j-2} ( {y}^{\beta }_{t} ) {\psi }_{ {z}_{2j} } \big( t, \vec{\alpha } ( {y}^{\beta }_{t} ) \big) \bigg\}.
\label{eq:expression of value function under substitution :eq}
\end{align}
Notably, the rest of this proof is to show the validness of \cref{eq:expression of value function under substitution :eq}
and that $\tilde{u} ( t,x )$ given by \cref{eq:CNEC particular :eq} realizes the maximum on the right-hand side of \cref{eq:expression of value function under substitution :eq}.
Since \cref{eq:concavity condition for CNEC particular :eq} holds, the maximization condition for \cref{eq:expression of value function under substitution :eq} can be re-expressed as
\begin{equation*}
0 = \frac{ \kappa {B}_{t} }{ | {D}_{t} |^{2} }
  + \Big( {\zeta }^{*} + \frac{ {F}_{t} }{ {D}_{t} } \Big) {e}^{ \int_{t}^{T} {A}_{v} dv }
    \sum_{ 1 \le j \le \frac{n}{2} } 2j (2j-1) {\alpha }_{2j-2} ( {y}^{\beta }_{t} ) {\psi }_{ {z}_{2j} } \big( t, \vec{\alpha } ( {y}^{\beta }_{t} ) \big),
\end{equation*}
where ${\zeta }^{*}$ represents the maximizer.
In comparison with \cref{eq:integral equation for CNEC particular :eq}, ${\zeta }^{*} = \tilde{u} ( t,x )$ follows.
Then, substituting the maximizer ${\zeta }^{*} = \tilde{u} ( t,x )$ back into \cref{eq:expression of value function under substitution :eq},
by using \cref{eq:integral equation for CNEC particular :eq} again, one can find that the right-hand side of \cref{eq:expression of value function under substitution :eq} vanishes.
So we complete this proof.
\end{proof}

\begin{example}\label{ex:linear combination of central moments}
For the following objective functions arising from mean-variance-skewness problems with arbitrarily fixed ${\kappa }_{2} > 0$ and ${\kappa }_{3}$
(which degenerates to a mean-variance objective function when ${\kappa }_{3} = 0$),
\begin{equation*}
{J}_{3} ( t,x,u ) = \mathbb{E}_{t} [ {X}^{t,x,u}_{T} ] + \sum_{j=2}^{3} (-1)^{j+1} \frac{ {\kappa }_{j} }{j!} \mathbb{E}_{t} \big[ ( {X}^{t,x,u}_{T} - \mathbb{E}_{t} [ {X}^{t,x,u}_{T} ] )^{j} \big],
\end{equation*}
\Cref{thm:CNEC particular} provides a $( {\kappa }_{3}, x )$-independent CNEC, as the following:
\begin{equation*}
\tilde{u}_{MV} ( t,x ) = \frac{ {B}_{t} }{ {\kappa }_{2} | {D}_{t} |^{2} } {e}^{ - \int_{t}^{T} {A}_{v} dv } - \frac{ {F}_{t} }{ {D}_{t} },
\end{equation*}
which due to \Cref{rem:independence for odd central moments} is also a CNEC for 
\begin{align*}
{J}_{MVS} ( t,x,u ) & = \mathbb{E}_{t} [ {X}^{t,x,u}_{T} ] - \frac{ {\kappa }_{2} }{2} \mathbb{E}_{t} \big[ ( {X}^{t,x,u}_{T} - \mathbb{E}_{t} [ {X}^{t,x,u}_{T} ] )^{2} \big] \\
                    & \quad + \frac{ {\kappa }_{3} }{6} \frac{ \mathbb{E}_{t} \big[ ( {X}^{t,x,u}_{T} - \mathbb{E}_{t} [ {X}^{t,x,u}_{T} ] )^{3} \big] }
                                                             { \big( \mathbb{E}_{t} \big[ ( {X}^{t,x,u}_{T} - \mathbb{E}_{t} [ {X}^{t,x,u}_{T} ] )^{2} \big] \big)^{ \frac{3}{2} } }.
\end{align*}
For the objective function related to mean-variance-skewness-kurtosis problems,
\begin{equation*}
{J}_{4} ( t,x,u ) = \mathbb{E}_{t} [ {X}^{t,x,u}_{T} ] + \sum_{j=2}^{4} (-1)^{j+1} \frac{ {\kappa }_{j} }{j!} \mathbb{E}_{t} \big[ ( {X}^{t,x,u}_{T} - \mathbb{E}_{t} [ {X}^{t,x,u}_{T} ] )^{j} \big],
\end{equation*}
where ${\kappa }_{2}, {\kappa }_{4} \ge 0$ with $| {\kappa }_{2} | + | {\kappa }_{4} | > 0$, the CNEC given by \Cref{thm:CNEC particular} is
\begin{equation*}
\tilde{u}_{4} ( t,x ) = \Big( {\kappa }_{2}^{3} + \frac{3}{2} {\kappa }_{4} {\theta }_{t} \Big)^{ - \frac{1}{3} } \frac{ {B}_{t} }{ | {D}_{t} |^{2} } {e}^{ - \int_{t}^{T} {A}_{v} dv } - \frac{ {F}_{t} }{ {D}_{t} },
\end{equation*}
which is also $( {\kappa }_{3}, x )$-independent.
In fact, according to \Cref{thm:CNEC particular} and the definition of $( {\theta }_{t}, {y}^{\beta }_{t} )$, 
we are supposed to solve ${\theta }_{t}' = {y}_{t}' ( {\kappa }_{2} + \frac{1}{2} {\kappa }_{4} {y}_{t} )^{2}$ for ${y}_{t}$.
Integrating the both sides over $[ t,T ]$ with rearranging the terms yields ${\kappa }_{2} + \frac{1}{2} {\kappa }_{4} {y}_{t} = ( {\kappa }_{2}^{3} + \frac{3}{2} {\kappa }_{4} {\theta }_{t} )^{1/3}$,
which leads to the desired result by differentiating the both sides w.r.t. $t$.
More generally, for the objective function
\begin{equation}
{J}_{n} ( t,x,u ) = \kappa \mathbb{E}_{t} [ {X}^{t,x,u}_{T} ] + \sum_{j=2}^{n} (-1)^{j+1} \frac{ {\kappa }_{j} }{j!} \mathbb{E}_{t} \big[ ( {X}^{t,x,u}_{T} - \mathbb{E}_{t} [ {X}^{t,x,u}_{T} ] )^{j} \big],
\label{eq:linear combination objective function :eq}
\end{equation}
as a linear combination of mean and central moments, where all ${\kappa }_{2j} \ge 0$ and at least one ${\kappa }_{2j} > 0$,
\Cref{thm:CNEC particular} suggests that seeking a CNEC can be reduced to solving
\begin{equation*}
0 = \Big| \frac{ \kappa {B}_{t} }{ {D}_{t} } \Big|^{2} + {y}_{t}' \bigg| \sum_{ 1 \le j \le \frac{n}{2} } \frac{ {\kappa }_{2j} }{ (2j-2)!! } ( {y}_{t} )^{j-1} \bigg|^{2}
\end{equation*}
as well as the integral equation ${y}_{t} = {y}^{\beta }_{t}$.
Integrating the both sides of the above differential equation over $[ t,T ]$ yields the algebraic equation
\begin{equation}
{\kappa }^{2} {\theta }_{t} = \int_{0}^{ {y}_{t} } \bigg| \sum_{ 0 \le j \le \frac{n}{2} - 1 } {\kappa }_{2j+2} \frac{ {z}^{j} }{ (2j)!!} \bigg|^{2} dz
\equiv 2 \int_{0}^{ \frac{1}{2} {y}_{t} } \bigg| \sum_{ 0 \le j \le \frac{n}{2} - 1 } {\kappa }_{2j+2} \frac{ {z}^{j} }{j!} \bigg|^{2} dz.
\label{eq:algebraic equation :eq}
\end{equation}
If $\kappa = 0$, then \cref{eq:algebraic equation :eq} has the unique solution $y \equiv 0$, which leads to $\beta \equiv 0$.
Otherwise, the unique solution ${y}_{t}$ of \cref{eq:algebraic equation :eq} is decreasing and continuously differentiable.
Then, \cref{eq:CNEC particular :eq} with ${\beta }_{t} = ( {D}_{t} )^{-1} \sqrt{ - {y}_{t}' }$ gives the CNEC.
\end{example}

\begin{example}
Interested readers may focus on the mean-variance-standardized moments objective function
\begin{align*}
{J}_{MVSM} ( t,x,u ) & = \mathbb{E}_{t} [ {X}^{t,x,u}_{T} ] - \frac{ {\kappa }_{2} }{2} \mathbb{E}_{t} \big[ ( {X}^{t,x,u}_{T} - \mathbb{E}_{t} [ {X}^{t,x,u}_{T} ] )^{2} \big] \\
                     & \quad + \sum_{j=3}^{n} (-1)^{j+1} \frac{ {\kappa }_{j} }{j!} \frac{ \mathbb{E}_{t} \big[ ( {X}^{t,x,u}_{T} - \mathbb{E}_{t} [ {X}^{t,x,u}_{T} ] )^{j} \big] }
                                                                                         { \big( \mathbb{E}_{t} \big[ ( {X}^{t,x,u}_{T} - \mathbb{E}_{t} [ {X}^{t,x,u}_{T} ] )^{2} \big] \big)^{ \frac{j}{2} } };
\end{align*}
that is, all higher-order central moments except the variance in \cref{eq:linear combination objective function :eq} are replaced by standardized moments of the corresponding order.
For this situation, we let $\psi ( t, {z}_{2}, \ldots, {z}_{n} ) = - \frac{ {\kappa }_{2} }{2} {z}_{2} + \sum_{j=3}^{n} (-1)^{j+1} \frac{ {\kappa }_{j} }{j!} {z}_{j} | {z}_{2} |^{ - j / 2 }$.
Consequently, the left-hand side of \cref{eq:integral equation for CNEC particular :eq} equals to ${B}_{t} | {D}_{t} |^{-2} - {\kappa }_{2} {\beta }_{t}$.
Therefore, $\tilde{u}_{MV}$ in the last \Cref{ex:linear combination of central moments} is also a CNEC for ${J}_{MVSM}$, if ${\kappa }_{2} > 0$.
\end{example}

\subsection{Open-loop Nash equilibrium control}

Now we show that the state-independent CNEC given by \cref{eq:CNEC particular :eq} is exactly a path-independent ONEC.
According to \Cref{thm:sufficient maximum principle for linear control problem}, seeking an ONEC can be reduced to solving the linear FBSDE
\begin{equation}
\left\{ \begin{aligned}
d \bar{X}_{s} & = ( {A}_{s} \bar{X}_{s} + {B}_{s} \bar{u}_{s} + {C}_{s} ) ds + ( {D}_{s} \bar{u}_{s} + {F}_{s} ) d {W}_{s}, ~ \forall s \in [ 0,T ], ~
  \bar{X}_{0} = {x}_{0}; \\
d {Y}^{t}_{s} & = - {A}_{s} {Y}^{t}_{s} ds + \mathcal{Y}^{t}_{s} d {W}_{s}, ~ \forall s \in [ t,T ], \\
  {Y}^{t}_{T} & = \kappa + \sum_{j=2}^{n} j {\psi }_{ {z}_{j} } \big( t, \vec{M}^{ \bar{u} }_{-1} ( t, \bar{X}_{t} ) \big)
                                          \big( ( \bar{X}_{T} - \mathbb{E}_{t} [ \bar{X}_{T} ] )^{j-1} - {M}^{ \bar{u} }_{j-1} ( t, \bar{X}_{t} ) \big); \\
d {Z}^{t}_{s} & = - 2 {A}_{s} {Z}^{t}_{s} ds + \mathcal{Z}^{t}_{s} d {W}_{s}, ~ \forall s \in [ t,T ], \\
  {Z}^{t}_{T} & = \sum_{j=2}^{n} j (j-1) {\psi }_{ {z}_{j} } \big( t, \vec{M}^{ \bar{u} }_{-1} ( t, \bar{X}_{t} ) \big) ( \bar{X}_{T} - \mathbb{E}_{t} [ \bar{X}_{T} ] )^{j-2},
\end{aligned} \right. \label{eq:FBSDE particular :eq}
\end{equation}
with the following optimality condition:
\begin{equation}
\lim_{s \downarrow t} \mathbb{E}_{t} [ {B}_{s} {Y}^{t}_{s} + {D}_{s} \mathcal{Y}^{t}_{s} ] = 0, \quad \lim_{s \downarrow t} \mathbb{E}_{t} [ {Z}^{t}_{s} ] \le 0,
\quad \mathbb{P}-a.s., ~ a.e. ~ t \in [ 0,T ).
\label{eq:SMP maximization condition particular :eq}
\end{equation}

\begin{lemma}\label{lem:solution to FBSDE particular}
Suppose that $\bar{u}_{t} = {\beta }_{t} \exp ( - \int_{t}^{T} {A}_{v} dv ) - {F}_{t} / {D}_{t}$, where $\beta $ is an arbitrarily fixed deterministic continuous function.
Then, \cref{eq:FBSDE particular :eq} admits the unique solution
\begin{equation*}
\left\{ \begin{aligned}
        \bar{X}_{s} & = {x}_{0} {e}^{ \int_{0}^{s} {A}_{v} dv } + \int_{0}^{s} {e}^{ \int_{\tau }^{s} {A}_{v} dv } \Big( {C}_{\tau } - \frac{ {B}_{\tau } }{ {D}_{\tau } } {F}_{\tau } \Big) d \tau \\
                    & \quad + {e}^{ - \int_{s}^{T} {A}_{v} dv } \bigg( \int_{0}^{s} {B}_{\tau } {\beta }_{\tau } d \tau + \int_{0}^{s} {D}_{\tau } {\beta }_{\tau } d {W}_{\tau } \bigg), \\
        {Y}^{t}_{s} & = {e}^{ \int_{s}^{T} {A}_{v} dv }
                        \Bigg( \kappa + \sum_{j=2}^{n} j {\psi }_{ {z}_{j} } \big( t, \vec{\alpha } ( {y}^{\beta }_{t} ) \big)
                                                       \bigg( \mathbb{E}_{s} \bigg[ \Big( \int_{t}^{T} {D}_{\tau } {\beta }_{\tau } d {W}_{\tau } \Big)^{j-1} \bigg] - {\alpha }_{j-1} ( {y}^{\beta }_{t} ) \bigg) \Bigg), \\
\mathcal{Y}^{t}_{s} & = {e}^{ \int_{s}^{T} {A}_{v} dv } {D}_{s} {\beta }_{s}
                        \sum_{j=2}^{n} j (j-1) {\psi }_{ {z}_{j} } \big( t, \vec{\alpha } ( {y}^{\beta }_{t} ) \big)
                                       \mathbb{E}_{s} \bigg[ \Big( \int_{t}^{T} {D}_{\tau } {\beta }_{\tau } d {W}_{\tau } \Big)^{j-2} \bigg], \\
        {Z}^{t}_{s} & = {e}^{ 2 \int_{s}^{T} {A}_{v} dv }
                        \sum_{j=2}^{n} j (j-1) {\psi }_{ {z}_{j} } \big( t, \vec{\alpha } ( {y}^{\beta }_{t} ) \big)
                                       \mathbb{E}_{s} \bigg[ \Big( \int_{t}^{T} {D}_{\tau } {\beta }_{\tau } d {W}_{\tau } \Big)^{j-2} \bigg], \\
\mathcal{Z}^{t}_{s} & = {e}^{ 2 \int_{s}^{T} {A}_{v} dv } {D}_{s} {\beta }_{s}
                        \sum_{j=3}^{n} j (j-1) (j-2) {\psi }_{ {z}_{j} } \big( t, \vec{\alpha } ( {y}^{\beta }_{t} ) \big)
                                       \mathbb{E}_{s} \bigg[ \Big( \int_{t}^{T} {D}_{\tau } {\beta }_{\tau } d {W}_{\tau } \Big)^{j-3} \bigg].
\end{aligned} \right.
\end{equation*}
\end{lemma}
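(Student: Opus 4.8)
The plan is to verify the five claimed processes directly. Since the forward equation in \cref{eq:FBSDE particular :eq} is decoupled from $(Y^{t},Z^{t})$, I would first solve it, then read off the terminal data, and finally confirm that the candidate pairs $(Y^{t},\mathcal{Y}^{t})$ and $(Z^{t},\mathcal{Z}^{t})$ satisfy their linear backward equations by It\^o's formula.

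First I would substitute $\bar{u}_{s} = {\beta }_{s} e^{-\int_{s}^{T} {A}_{v} dv} - {F}_{s}/{D}_{s}$ into the forward SDE. The choice of $-{F}_{s}/{D}_{s}$ forces a cancellation: the diffusion coefficient collapses to ${D}_{s}{\beta }_{s} e^{-\int_{s}^{T} {A}_{v} dv}$ and the drift becomes ${A}_{s}\bar{X}_{s} + {B}_{s}{\beta }_{s} e^{-\int_{s}^{T} {A}_{v} dv} + ({C}_{s} - {B}_{s}{F}_{s}/{D}_{s})$. This is a linear SDE, solved via the integrating factor $e^{-\int_{0}^{s} {A}_{v} dv}$; the identity $e^{\int_{\tau }^{s} {A}_{v} dv}e^{-\int_{\tau }^{T} {A}_{v} dv} = e^{-\int_{s}^{T} {A}_{v} dv}$ lets me pull this $\tau $-independent factor out of both the Lebesgue and the It\^o integral, giving exactly the stated $\bar{X}_{s}$. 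Setting $s=T$ and conditioning on $\mathcal{F}_{t}$, all deterministic terms and the $\mathcal{F}_{t}$-measurable part of $\int_{0}^{T} {D}_{\tau }{\beta }_{\tau } dW_{\tau }$ cancel in the difference, leaving the crucial identity $\bar{X}_{T} - \mathbb{E}_{t}[\bar{X}_{T}] = \int_{t}^{T} {D}_{\tau }{\beta }_{\tau } dW_{\tau }$, a centred Gaussian with variance $y^{\beta }_{t}$. Hence ${M}^{\bar{u}}_{j}(t,\bar{X}_{t}) = {\alpha }_{j}(y^{\beta }_{t})$ and $\vec{M}^{\bar{u}}_{-1}(t,\bar{X}_{t}) = \vec{\alpha }(y^{\beta }_{t})$ as in \cref{eq:central moment :eq}, so that the terminal data ${Y}^{t}_{T}$ and ${Z}^{t}_{T}$ in \cref{eq:FBSDE particular :eq} reduce to the purely $\int_{t}^{T}{D}_{\tau }{\beta }_{\tau } dW_{\tau }$-driven forms obtained from the candidate solutions at $s=T$ (where $e^{\int_{T}^{T}{A}_{v}dv}=1$ and $\mathbb{E}_{T}[(\int_{t}^{T}{D}_{\tau }{\beta }_{\tau } dW_{\tau })^{j-1}] = (\int_{t}^{T}{D}_{\tau }{\beta }_{\tau } dW_{\tau })^{j-1}$).

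Next I would solve the backward equations. Writing $I_{s} := \int_{t}^{s} {D}_{\tau }{\beta }_{\tau } dW_{\tau }$, the increment $I_{T}-I_{s}$ is, for $s\ge t$, independent of $\mathcal{F}_{s}$ and Gaussian, so $\mathbb{E}_{s}[(I_{T})^{m}]$ is a martingale that is a polynomial function of $I_{s}$; It\^o's formula then yields the recursion $d\,\mathbb{E}_{s}[(I_{T})^{m}] = m\,\mathbb{E}_{s}[(I_{T})^{m-1}]\,{D}_{s}{\beta }_{s}\,dW_{s}$, its drift vanishing by the martingale property. Since $A$ is deterministic, $e^{\int_{0}^{s}{A}_{v}dv}{Y}^{t}_{s}$ is a martingale, whence ${Y}^{t}_{s} = e^{\int_{s}^{T}{A}_{v}dv}\mathbb{E}_{s}[{Y}^{t}_{T}]$; applying It\^o to this candidate (noting the coefficients ${\psi }_{{z}_{j}}(t,\vec{\alpha }(y^{\beta }_{t}))$ are $\mathcal{F}_{t}$-measurable constants for $s\ge t$), the bounded-variation factor $e^{\int_{s}^{T}{A}_{v}dv}$ contributes the drift $-{A}_{s}{Y}^{t}_{s}\,ds$, while the recursion applied to each $\mathbb{E}_{s}[(I_{T})^{j-1}]$ produces precisely the claimed $\mathcal{Y}^{t}_{s}$ as the $dW_{s}$-coefficient. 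The pair $(Z^{t},\mathcal{Z}^{t})$ is handled identically, now with $e^{2\int_{s}^{T}{A}_{v}dv}$ giving the drift $-2{A}_{s}{Z}^{t}_{s}\,ds$ and the same recursion lowering each power by one (the $j=2$ term dropping out, as its derivative carries the factor $j-2$), leaving the stated $\mathcal{Z}^{t}_{s}$.

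Finally, uniqueness is immediate: the forward equation is a standard linear SDE decoupled from $(Y^{t},Z^{t})$, and each backward equation is a linear BSDE with square-integrable terminal datum (a polynomial of degree $\le n-1$ in a Gaussian, hence with all moments finite) and bounded deterministic coefficients, so existence and uniqueness of the square-integrable solution follow from classical theory, e.g.\ \cite{Yong-Zhou-1999}. I expect the moment recursion $d\,\mathbb{E}_{s}[(I_{T})^{m}] = m\,\mathbb{E}_{s}[(I_{T})^{m-1}]\,{D}_{s}{\beta }_{s}\,dW_{s}$ to be the only non-mechanical step: one must recognise that the conditional moments of the Gaussian terminal increment obey this clean lowering rule, which is exactly what aligns the explicit forms of $(\mathcal{Y}^{t},\mathcal{Z}^{t})$ with the martingale representations and makes the terminal conditions match; everything else is routine linear-SDE bookkeeping.
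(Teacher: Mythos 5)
Your proposal is correct and follows essentially the same route as the paper's proof: solve the decoupled forward linear SDE explicitly, use $\bar{X}_{T} - \mathbb{E}_{t} [ \bar{X}_{T} ] = \int_{t}^{T} {D}_{\tau } {\beta }_{\tau } d {W}_{\tau }$ to reduce the terminal data to Gaussian-moment expressions, represent ${Y}^{t}_{s}$ and ${Z}^{t}_{s}$ as exponentially discounted conditional expectations of their terminal values, and then identify $\mathcal{Y}^{t}$ and $\mathcal{Z}^{t}$ from the martingale representation together with the lowering rule $d \, \mathbb{E}_{s} [ ( {I}_{T} )^{m} ] = m \, \mathbb{E}_{s} [ ( {I}_{T} )^{m-1} ] {D}_{s} {\beta }_{s} d {W}_{s}$ and uniqueness of the BSDE solution. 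The only (cosmetic) difference is how that lowering rule is obtained: the paper differentiates the conditional moment generating function identity $\mathbb{E}_{s} [ {e}^{ z {I}_{T} } ] = {e}^{ z {I}_{s} + \frac{1}{2} {z}^{2} {y}^{\beta }_{s} }$ with respect to $s$, whereas you apply It\^o's formula to the polynomial martingale $\mathbb{E}_{s} [ ( {I}_{T} )^{m} ]$ and let the martingale property annihilate the drift --- an equivalent argument of the same length.
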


\begin{proof}
The explicit expression of $\bar{X}$ arises from the SDE
\begin{equation*}
d \bar{X}_{s} = \Big( {A}_{s} \bar{X}_{s} + {B}_{s} {\beta }_{s} {e}^{ - \int_{s}^{T} {A}_{v} dv } + {C}_{s} - \frac{ {B}_{s} }{ {D}_{s} } {F}_{s} \Big) ds
              + {D}_{s} {\beta }_{s} {e}^{ - \int_{s}^{T} {A}_{v} dv } d {W}_{s}.
\end{equation*}
Then, we have $\bar{X}_{T} = \mathbb{E}_{t} [ \bar{X}_{T} ] + \int_{t}^{T} {D}_{s} {\beta }_{s} d {W}_{s}$,
and hence $\vec{M}^{ \bar{u} }_{-1} ( t, \bar{X}_{t} ) = \vec{\alpha } ( {y}^{\beta }_{t} )$. See also the proof for \Cref{thm:CNEC particular}.
Consequently, by the Lipschitz continuity of $\psi ( t, \cdot )$, we conclude that ${Y}^{t}_{T}$ and ${Z}^{t}_{T}$ are square-integrable.
Due to \cite[Theorem 7.2.2, p. 349]{Yong-Zhou-1999}, each backward stochastic differential equation (BSDE) in \cref{eq:FBSDE particular :eq} admits a unique square-integrable $\mathbb{F}$-predictable solution.
Furthermore, from the BSDEs in \cref{eq:FBSDE particular :eq} one can obtain
\begin{equation*}
{Y}^{t}_{s} = {e}^{ \int_{s}^{T} {A}_{v} dv }  \mathbb{E}_{s} [ {Y}^{t}_{T} ], \qquad
{Z}^{t}_{s} = {e}^{ 2 \int_{s}^{T} {A}_{v} dv } \mathbb{E}_{s} [ {Z}^{t}_{T} ],
\end{equation*}
which lead to the desired explicit expression of $( {Y}^{t}, {Z}^{t} )$.
In order to derive the explicit expression of $\mathcal{Y}^{t}$ from ${Y}^{t}_{s} \exp ( - \int_{s}^{T} {A}_{v} dv ) = \mathbb{E}_{s} [ {Y}^{t}_{T} ]$, we differentiate the both sides of
\begin{equation*}
  \sum_{j=0}^{\infty } \frac{ {z}^{j} }{j!} \mathbb{E}_{s} \big[ ( \bar{X}_{T} - \mathbb{E}_{t} [ \bar{X}_{T} ] )^{j} \big]
= \mathbb{E}_{s} \big[ {e}^{ z \int_{t}^{T} {D}_{v} {\beta }_{v} d {W}_{v} } \big]
= {e}^{ z \int_{t}^{s} {D}_{v} {\beta }_{v} d {W}_{v} + \frac{1}{2} {z}^{2} {y}^{\beta }_{s} }.
\end{equation*}
with respect to $s$ to show that
$d \mathbb{E}_{s} [ ( \bar{X}_{T} - \mathbb{E}_{t} [ \bar{X}_{T} ] )^{j} ] = j \mathbb{E}_{s} [ ( \bar{X}_{T} - \mathbb{E}_{t} [ \bar{X}_{T} ] )^{j-1} ] {D}_{s} {\beta }_{s} d {W}_{s}$.
Consequently,
\begin{equation*}
  d \Big( {Y}^{t}_{s} {e}^{ - \int_{s}^{T} {A}_{v} dv } \Big)
= \sum_{j=2}^{n} j (j-1) {\psi }_{ {z}_{j} } \big( t, \vec{\alpha } ( {y}^{\beta }_{t} ) \big) \mathbb{E}_{s} \big[ ( \bar{X}_{T} - \mathbb{E}_{t} [ \bar{X}_{T} ] )^{j-2} \big] {D}_{s} {\beta }_{s} d {W}_{s}.
\end{equation*}
By comparing $d ( {Y}^{t}_{s} \exp ( - \int_{s}^{T} {A}_{v} dv ) ) = \exp ( - \int_{s}^{T} {A}_{v} dv ) \mathcal{Y}^{t}_{s} d {W}_{s}$ from \cref{eq:FBSDE particular :eq} with the above-mentioned result,
we obtain the desired expression of $\mathcal{Y}^{t}$ according to the existence and uniqueness of $\mathcal{Y}^{t}$.
In the same manner, comparing
\begin{equation*}
  d \Big( {Z}^{t}_{s} {e}^{ - 2 \int_{s}^{T} {A}_{v} dv } \Big)
= \sum_{j=3}^{n} j (j-1) (j-2) {\psi }_{ {z}_{j} } \big( t, \vec{\alpha } ( {y}^{\beta }_{t} ) \big) \mathbb{E}_{s} \big[ ( \bar{X}_{T} - \mathbb{E}_{t} [ \bar{X}_{T} ] )^{j-3} \big] {D}_{s} {\beta }_{s} d {W}_{s}
\end{equation*}
with $d ( {Z}^{t}_{s} \exp ( - 2 \int_{s}^{T} {A}_{v} dv ) ) = \exp ( - 2 \int_{s}^{T} {A}_{v} dv ) \mathcal{Z}^{t}_{s} d {W}_{s}$ from \cref{eq:FBSDE particular :eq} yields the desired expression of $\mathcal{Z}^{t}$.
\end{proof}

\begin{theorem}\label{thm:ONEC particular}
Let $\beta $ fulfill \cref{eq:integral equation for CNEC particular :eq,eq:concavity condition for CNEC particular :eq}.
Then, for the objective function \cref{eq:objective functional particular :eq} and the controlled SDE \cref{eq:linear controlled SDE :eq},
$\{ {\beta }_{t} \exp ( - \int_{t}^{T} {A}_{v} dv ) - {F}_{t} / {D}_{t} \}_{ t \in [ 0,T ] }$ is an ONEC.
\end{theorem}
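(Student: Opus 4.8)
The plan is to verify the hypotheses of \Cref{thm:sufficient maximum principle for linear control problem}, which applies here because $b(t,x,u) = A_t x + B_t u + C_t$ and $\sigma(t,x,u) = D_t u + F_t$ are both affine in $u$. First I would check admissibility: since $\beta$ is deterministic and continuous and $|D_t| > 0$, the candidate $\bar{u}_t = \beta_t e^{-\int_t^T A_v dv} - F_t / D_t$ is deterministic and continuous, hence $\mathbb{F}$-adapted, and the resulting $\bar{X}$ is Gaussian so that $\mathbb{E}[|\bar{X}_T|^{2n-2+\epsilon}] < \infty$ and $\bar{u} \in \mathcal{U}_0$. The FBSDE \cref{eq:FBSDE :eq} then specializes to \cref{eq:FBSDE particular :eq}, and \Cref{lem:solution to FBSDE particular} already supplies its explicit, square-integrable solution $(\bar{X}, Y^t, \mathcal{Y}^t, Z^t, \mathcal{Z}^t)$. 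It therefore remains only to verify the two optimality conditions in \cref{eq:SMP maximization condition particular :eq}.

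The crux is to evaluate the limits as $s \downarrow t$ of the relevant conditional expectations. Conditional on $\mathcal{F}_t$, the increment $\int_t^T D_\tau \beta_\tau \, dW_\tau$ is centred Gaussian with variance $y^\beta_t$, so its $k$-th conditional moment is exactly $\alpha_k(y^\beta_t)$; in particular every odd moment vanishes. Taking $\mathbb{E}_t$ of the formulas in \Cref{lem:solution to FBSDE particular} and using the tower property, the $\alpha_{j-1} - \alpha_{j-1}$ cancellation yields the martingale identity $\mathbb{E}_t[Y^t_s] = \kappa\, e^{\int_s^T A_v dv}$, while only the even indices $j = 2k$ survive in the sums for $Z^t$ and $\mathcal{Y}^t$, giving
\begin{align*}
\lim_{s \downarrow t} \mathbb{E}_t[Z^t_s] &= e^{2 \int_t^T A_v dv} \sum_{1 \le k \le \frac{n}{2}} 2k(2k-1)\, \alpha_{2k-2}(y^\beta_t)\, \psi_{z_{2k}}\big(t, \vec{\alpha}(y^\beta_t)\big), \\
\lim_{s \downarrow t} \mathbb{E}_t[\mathcal{Y}^t_s] &= e^{\int_t^T A_v dv} D_t \beta_t \sum_{1 \le k \le \frac{n}{2}} 2k(2k-1)\, \alpha_{2k-2}(y^\beta_t)\, \psi_{z_{2k}}\big(t, \vec{\alpha}(y^\beta_t)\big).
\end{align*}

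With these in hand, the second condition in \cref{eq:SMP maximization condition particular :eq} is immediate: the displayed limit for $\mathbb{E}_t[Z^t_s]$ is, up to the positive factor $2 e^{2\int_t^T A_v dv}$, the left-hand side of the concavity condition \cref{eq:concavity condition for CNEC particular :eq}, hence strictly negative. For the first condition I would combine the two limits, using that $B_s, D_s$ are deterministic, to obtain
\begin{equation*}
\lim_{s \downarrow t} \mathbb{E}_t[B_s Y^t_s + D_s \mathcal{Y}^t_s] = e^{\int_t^T A_v dv}\Big( \kappa B_t + |D_t|^2 \beta_t \sum_{1 \le k \le \frac{n}{2}} 2k(2k-1)\, \alpha_{2k-2}(y^\beta_t)\, \psi_{z_{2k}}\big(t, \vec{\alpha}(y^\beta_t)\big) \Big),
\end{equation*}
and this vanishes precisely because the bracketed quantity is $|D_t|^2$ times the left-hand side of the integral equation \cref{eq:integral equation for CNEC particular :eq}, which is zero by hypothesis. \Cref{thm:sufficient maximum principle for linear control problem} then delivers the conclusion that $\bar{u}$ is an ONEC.

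The only real obstacle is bookkeeping: correctly passing $\mathbb{E}_t$ through the nested conditional expectations $\mathbb{E}_s[\cdot]$ via the tower property and tracking which Gaussian moments vanish, so that the two surviving even-indexed sums align exactly with the two structural relations \cref{eq:integral equation for CNEC particular :eq,eq:concavity condition for CNEC particular :eq} defining $\beta$. No estimate beyond \Cref{lem:solution to FBSDE particular} is needed; in effect the argument verifies that the very $\beta$ that produced the CNEC in \Cref{thm:CNEC particular} also satisfies the maximum-principle optimality conditions, which is exactly why the closed-loop and open-loop equilibria coincide.
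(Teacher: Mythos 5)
Your proposal is correct and follows essentially the same route as the paper: invoke \Cref{lem:solution to FBSDE particular} for the explicit square-integrable solution of \cref{eq:FBSDE particular :eq}, compute the limits $\lim_{s \downarrow t}\mathbb{E}_t[B_s Y^t_s + D_s \mathcal{Y}^t_s]$ and $\lim_{s \downarrow t}\mathbb{E}_t[Z^t_s]$ (which the paper evaluates as $B_t Y^t_t + D_t \mathcal{Y}^t_t$ and $Z^t_t$, matching your tower-property computation), identify them with the integral equation \cref{eq:integral equation for CNEC particular :eq} and the concavity condition \cref{eq:concavity condition for CNEC particular :eq} respectively, and conclude via \Cref{thm:sufficient maximum principle for linear control problem}. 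Your added admissibility check for $\bar{u}$ is a minor supplement the paper leaves implicit, not a different argument.
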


\begin{proof}
From \Cref{lem:solution to FBSDE particular}, we obtain
$\lim_{s \downarrow t} \mathbb{E}_{t} [ {B}_{s} {Y}^{t}_{s} + {D}_{s} \mathcal{Y}^{t}_{s} ] = {B}_{t} {Y}^{t}_{t} + {D}_{t} \mathcal{Y}^{t}_{t}$ and
$\lim_{s \downarrow t} \mathbb{E}_{t} [ {Z}^{t}_{s} ] = {Z}^{t}_{t}$,
where
\begin{equation*}
\left\{ \begin{aligned}
        {Y}^{t}_{t} & = \kappa {e}^{ \int_{t}^{T} {A}_{v} dv }, \\
\mathcal{Y}^{t}_{t} & = {e}^{ \int_{t}^{T} {A}_{v} dv } {D}_{t} {\beta }_{t}
                        \sum_{ 1 \le j \le \frac{n}{2} } 2j (2j-1) {\alpha }_{2j-2} ( {y}^{\beta }_{t} ) {\psi }_{ {z}_{2j} } \big( t, \vec{\alpha } ( {y}^{\beta }_{t} ) \big), \\
        {Z}^{t}_{t} & = {e}^{ 2 \int_{t}^{T} {A}_{v} dv }
                        \sum_{ 1 \le j \le \frac{n}{2} } 2j (2j-1) {\alpha }_{2j-2} ( {y}^{\beta }_{t} ) {\psi }_{ {z}_{2j} } \big( t, \vec{\alpha } ( {y}^{\beta }_{t} ) \big).
\end{aligned} \right.
\end{equation*}
${D}_{t} \mathcal{Y}^{t}_{t} = - {B}_{t} {Y}^{t}_{t}$ follows from \cref{eq:integral equation for CNEC particular :eq},
while ${Z}^{t}_{t} < 0$ follows from \cref{eq:concavity condition for CNEC particular :eq}.
Thus, for $\beta $ given by \cref{eq:integral equation for CNEC particular :eq,eq:concavity condition for CNEC particular :eq}, the optimality condition \cref{eq:SMP maximization condition particular :eq} holds.
According to \Cref{thm:sufficient maximum principle for linear control problem}, $\bar{u}$ is an ONEC.
\end{proof}

\subsection{Limiting cases}
\label{subsec: Limiting cases}

In the previous subsections, we have shown that
the univariate time function given by \cref{eq:integral equation for CNEC particular :eq,eq:concavity condition for CNEC particular :eq,eq:CNEC particular :eq}
is a CNEC and an ONEC for the control problem with the objective function \cref{eq:objective functional particular :eq} and the controlled SDE \cref{eq:linear controlled SDE :eq}.
In particular, for the objective function \cref{eq:linear combination objective function :eq} as a linear combination of mean and central moments that researchers may be more concerned about,
deriving the state-independent CNEC or the path-independent ONEC is reduced to solving the algebraic equation \cref{eq:algebraic equation :eq}.
However, the celebrated Abel-Ruffini Theorem and Galois Theory indicate that there is no solution in radicals to \cref{eq:algebraic equation :eq} of finite degree $n \ge 6$ with arbitrary coefficients ${\kappa }_{2j}$.
In this subsection, we let $n$ tend to infinity.
As a result, we provide a heuristic approach to find state-independent CNEC, as well as path-independent ONEC, for several artificial objective functions.
Notably, the theories in \Cref{sec:Sufficient condition of CNEC: extended HJB-PDE,sec:Sufficient condition of ONEC: maximum principle} are only suitable for the problem with finitely many higher-order moments,
so we also provide verification procedures by applying those theories to $\Psi ( t,x,y, \vec{z} ) \equiv \Phi ( y, {z}_{1} )$ (i.e. without any higher-order moment) for the sake of rigor.

\subsubsection{Limiting case I: mean-exponential utility function}
\label{subsubsec: Limiting case I: mean-exponential objective function}

Let us consider the objective function
\begin{equation}
J ( t,x,u ) = \kappa \mathbb{E}_{t} [ {X}^{t,x,u}_{T} ] + \sum_{j=2}^{\infty } \frac{ (-c)^{j-1} }{j!} \mathbb{E}_{t} \big[ ( {X}^{t,x,u}_{T} - \mathbb{E}_{t} [ {X}^{t,x,u}_{T} ] )^{j} \big]
\label{eq:objective function case I :eq}
\end{equation}
for fixed $\kappa \ge 0$ and $c > 0$, or equivalently,
\begin{equation}
J ( t,x,u ) = \kappa \mathbb{E}_{t} [ {X}^{t,x,u}_{T} ] - \frac{1}{c} \mathbb{E}_{t} \big[ {e}^{ - c ( {X}^{t,x,u}_{T} - \mathbb{E}_{t} [ {X}^{t,x,u}_{T} ] ) } \big] + \frac{1}{c}.
\label{eq:objective function case I equivalent :eq}
\end{equation}
Notably, \cref{eq:objective function case I equivalent :eq} can be regarded as a weighted average of
the expectation $\mathbb{E}_{t} [ {X}^{t,x,u}_{T} ]$ and the expected exponential utility for ${X}^{t,x,u}_{T} - \mathbb{E}_{t} [ {X}^{t,x,u}_{T} ]$, for which $c$ represents the CARA coefficient.
By \Cref{thm:CNEC particular,thm:ONEC particular}, to determine a CNEC and an ONEC for \cref{eq:objective function case I :eq},
it suffices to solve ${\kappa }^{2} {\theta }_{t} = \exp ( {c}^{2} {y}_{t} ) - 1$ arising from \cref{eq:linear combination objective function :eq,eq:algebraic equation :eq}.
As a result, ${y}_{t} = {c}^{-2} \ln ( 1 + {\kappa }^{2} {\theta }_{t} )$, and hence, ${\beta }_{t} = \kappa {c}^{-1} {B}_{t} | {D}_{t} |^{-2} ( 1 + {\kappa }^{2} {\theta }_{t} )^{-1/2}$ solves ${y}^{\beta }_{t} = {y}_{t}$.
Moreover, \cref{eq:concavity condition for CNEC particular :eq} for $n = \infty $ holds, since
\begin{equation*}
 \sum_{j=1}^{\infty } j (2j-1) {\alpha }_{2j-2} ( {y}^{\beta }_{t} ) {\psi }_{ {z}_{2j} } \big( t, \vec{\alpha } ( {y}^{\beta }_{t} ) \big)
= - \frac{c}{2} \sum_{j=1}^{\infty } \frac{ ( {c}^{2} {y}^{\beta }_{t} )^{j-1} }{(2j-2)!!}
= - \frac{c}{2} {e}^{ \frac{1}{2} {c}^{2} {y}^{\beta }_{t} } < 0.
\end{equation*}
Hence, for the CNEC $\tilde{u}$, the ONEC $\bar{u}$ and the EVF, we have
\begin{align}
\tilde{u} ( t,x ) & = \bar{u}_{t} = \frac{ \kappa {B}_{t} {e}^{ - \int_{t}^{T} {A}_{v} dv } }{ c | {D}_{t} |^{2} \sqrt{ 1 + {\kappa }^{2} {\theta }_{t} } } - \frac{ {F}_{t} }{ {D}_{t} },
\label{eq:NEC in case I :eq}
\\
V ( t,x ) & = J ( t, x, \tilde{u} ) = \kappa \Theta ( t,x ) + \frac{ \sqrt{ 1 + {\kappa }^{2} {\theta }_{t} } - 1 }{c}.
\label{eq:EVF in case I :eq}
\end{align}

The rest of this subsection is to show that $\tilde{u}$ (resp. $\bar{u}$) is exactly a CNEC (resp. an ONEC) for \cref{eq:objective function case I equivalent :eq}.
We may refer to the total of the following steps as a ``verification procedure''.
On the one hand, we let $\Phi ( t,x,y, \vec{z} ) \equiv \Phi ( y, {z}_{1} ) = \kappa y - {c}^{-1} {e}^{ - c ( y - {z}_{1} ) } + {c}^{-1}$ for \cref{eq:objective function case I equivalent :eq}.
According to \cref{eq:Bellman equation reduced :eq}, the Bellman equation for deriving a CNEC is
\begin{equation}
\begin{aligned}
0 = \max_{ \zeta \in \mathbb{R} }
    \bigg\{ & {V}_{t} ( t,x ) + {V}_{x} ( t,x ) ( {A}_{t} x + {B}_{t} \zeta + {C}_{t} ) + \frac{1}{2} {V}_{xx} ( t,x ) ( {D}_{t} \zeta + {F}_{t} )^{2} \\
          & - \Big( {\lambda }^{t,x}_{x} ( t,x ) {\partial }_{x} {m}^{ \tilde{u} }_{1} ( t,x )
                  + \frac{1}{2} {\mu }^{t,x} ( t,x ) \big( {\partial }_{x} {m}^{ \tilde{u} }_{1} ( t,x ) \big)^{2} \Big) ( {D}_{t} \zeta + {F}_{t} )^{2} \bigg\},
\end{aligned}
\label{eq:Bellman equation in case I :eq}
\end{equation}
where
\begin{equation}
\left\{ \begin{aligned}
{\lambda }^{s,y} ( t,x ) & = \mathbb{E}_{t} \big[ {\Phi }_{ {z}_{1} } \big( {X}^{ t,x, \tilde{u} }_{T}, {m}^{ \tilde{u} }_{1} ( s,y ) \big) \big]
                           = - \mathbb{E}_{t} \big[ {e}^{ - c ( {X}^{ t,x, \tilde{u} }_{T} - {m}^{ \tilde{u} }_{1} ( s,y ) ) } \big], \\
    {\mu }^{s,y} ( t,x ) & = \mathbb{E}_{t} \big[ {\Phi }_{ {z}_{1} {z}_{1} } \big( {X}^{ t,x, \tilde{u} }_{T}, {m}^{ \tilde{u} }_{1} ( s,y ) \big) \big]
                           = c {\lambda }^{s,y} ( t,x ).
\end{aligned} \right.
\label{eq:lambda-mu in case I :eq}
\end{equation}
Now we show that $V$ given by \cref{eq:EVF in case I :eq} fulfills \cref{eq:Bellman equation in case I :eq}, whereby $\tilde{u}$ given by \cref{eq:NEC in case I :eq} realizes the maximum.
Corresponding to $\tilde{u}$ given by \cref{eq:NEC in case I :eq}, we have 
\begin{equation*}
  {\lambda }^{s,y}_{x} ( t,x )
= c {e}^{ \int_{t}^{T} {A}_{v} dv } \mathbb{E}_{t} \big[ {e}^{ - c ( {X}^{ t,x, \tilde{u} }_{T} - {m}^{ \tilde{u} }_{1} ( s,y ) ) } \big]
= - c {e}^{ \int_{t}^{T} {A}_{v} dv } {\lambda }^{s,y} ( t,x )
\end{equation*}
and ${\lambda }^{t,x} ( t,x ) = - \mathbb{E}_{t} [ \exp ( - c \int_{t}^{T} {D}_{s} {\beta }_{s} d {W}_{s} ) ] = - \sqrt{ 1 + {\kappa }^{2} {\theta }_{t} }$.
It follows from \cref{eq:EVF in case I :eq} with the above-mentioned results that the right-hand side of \cref{eq:Bellman equation in case I :eq} is equal to
\begin{align*}
\max_{ \zeta \in \mathbb{R} }
\bigg\{ - \frac{ {\kappa }^{2} | {B}_{t} |^{2} }{ 2 c | {D}_{t} |^{2} \sqrt{ 1 + {\kappa }^{2} {\theta }_{t} } }
      + \kappa {e}^{ \int_{t}^{T} {A}_{v} dv } \frac{ {B}_{t} }{ {D}_{t} } ( {D}_{t} \zeta + {F}_{t} ) & \\
      - \frac{c}{2} \sqrt{ 1 + {\kappa }^{2} {\theta }_{t} } {e}^{ 2 \int_{t}^{T} {A}_{v} dv } ( {D}_{t} \zeta + {F}_{t} )^{2} & \bigg\} = 0,
\end{align*}
for which ${\zeta }^{*} = \tilde{u} ( t,x )$ given by \cref{eq:NEC in case I :eq} satisfies the first-order derivative condition:
\begin{equation*}
0 = \kappa {e}^{ \int_{t}^{T} {A}_{v} dv } {B}_{t} - c \sqrt{ 1 + {\kappa }^{2} {\theta }_{t} } {e}^{ 2 \int_{t}^{T} {A}_{v} dv } {D}_{t} ( {D}_{t} {\zeta }^{*} + {F}_{t} ).
\end{equation*}
So we are done.
On the other hand, to show that $\bar{u}$ given by \cref{eq:NEC in case I :eq} is an ONEC for \cref{eq:objective function case I equivalent :eq},
it suffices to show the validity of \cref{eq:SMP maximization condition particular :eq} with $( {Y}^{t}, \mathcal{Y}^{t}, {Z}^{t} )$ given by the BSDEs
\begin{equation*}
\left\{ \begin{aligned}
d {Y}^{t}_{s} & = - {A}_{s} {Y}^{t}_{s} ds + \mathcal{Y}^{t}_{s} d {W}_{s}, \quad \forall s \in [ t,T ], \\
  {Y}^{t}_{T} & = {\Psi }_{y} ( \bar{X}_{T}, \mathbb{E}_{t} [ \bar{X}_{T} ] )
                + \mathbb{E}_{t} \big[ {\Psi }_{ {z}_{1} } ( \bar{X}_{T}, \mathbb{E}_{t} [ \bar{X}_{T} ] ) \big]
                = \kappa + {e}^{ - c \int_{t}^{T} {D}_{v} {\beta }_{v} d {W}_{v} } - {e}^{ \frac{1}{2} {c}^{2} {y}^{\beta }_{t} }; \\
d {Z}^{t}_{s} & = - 2 {A}_{s} {Z}^{t}_{s} ds + \mathcal{Z}^{t}_{s} d {W}_{s}, \quad \forall s \in [ t,T ], \\
  {Z}^{t}_{T} & = {\Psi }_{yy} ( \bar{X}_{T}, \mathbb{E}_{t} [ \bar{X}_{T} ] )
                = - c {e}^{ - c \int_{t}^{T} {D}_{v} {\beta }_{v} d {W}_{v} },
\end{aligned} \right.
\end{equation*}
according to \Cref{thm:sufficient maximum principle for linear control problem} with $\bar{X}_{T} - \mathbb{E}_{t} [ \bar{X}_{T} ] = \int_{t}^{T} {D}_{s} {\beta }_{s} d {W}_{s}$.
From
\begin{equation*}
\left\{ \begin{aligned}
& {Y}^{t}_{s} {e}^{ - \int_{s}^{T} {A}_{v} dv }
  = \kappa + \mathbb{E}_{s} \big[ {e}^{ - c \int_{t}^{T} {D}_{v} {\beta }_{v} d {W}_{v} } \big] - {e}^{ \frac{1}{2} {c}^{2} {y}^{\beta }_{t} }, \\
& d \big( {Y}^{t}_{s} {e}^{ - \int_{s}^{T} {A}_{v} dv } \big) = {e}^{ - \int_{s}^{T} {A}_{v} dv } \mathcal{Y}^{t}_{s} d {W}_{s},
\end{aligned} \right.
\end{equation*}
we obtain $\mathcal{Y}^{t}_{s} = - c \exp ( \int_{s}^{T} {A}_{v} dv + \frac{1}{2} {c}^{2} {y}^{\beta }_{t} - c \int_{t}^{s} {D}_{v} {\beta }_{v} d {W}_{v} ) {D}_{s} {\beta }_{s}$.
In the same manner, we can show that ${Z}^{t}_{s} = - c \exp ( 2 \int_{s}^{T} {A}_{v} dv ) \mathbb{E}_{s} [ \exp ( - c \int_{t}^{T} {D}_{v} {\beta }_{v} d {W}_{v} ) ]$.
Consequently,
\begin{equation*}
  \lim_{s \downarrow t} \mathbb{E}_{t} [ {B}_{s} {Y}^{t}_{s} + {D}_{s} \mathcal{Y}^{t}_{s} ]
= {B}_{t} {Y}^{t}_{t} + {D}_{t} \mathcal{Y}^{t}_{t} = 0, \quad
  \lim_{ s \downarrow t } \mathbb{E}_{t} [ {Z}^{t}_{s} ] = {Z}^{t}_{t} < 0.
\end{equation*}
Therefore, \cref{eq:SMP maximization condition particular :eq} holds, and hence $\bar{u}$ given by \cref{eq:NEC in case I :eq} is an ONEC for \cref{eq:objective function case I equivalent :eq}.

\subsubsection{Limiting case II: hyperbolic cosine penalty function}
\label{subsubsec: Limiting case II: hyperbolic cosine penalty function}

For arbitrarily fixed $\kappa \ge 0$ and $c > 0$, we consider the objective function
\begin{equation}
J ( t,x,u ) = \kappa \mathbb{E}_{t} [ {X}^{t,x,u}_{T} ] - \frac{1}{c} \mathbb{E}_{t} \big[ \cosh \big( c ( {X}^{t,x,u}_{T} - \mathbb{E}_{t} [ {X}^{t,x,u}_{T} ] ) \big) \big] + \frac{1}{c},
\label{eq:objective function case II equivalent :eq}
\end{equation}
where the hyperbolic cosine function is the penalty function for the deviation ${X}^{t,x,u}_{T} - \mathbb{E}_{t} [ {X}^{t,x,u}_{T} ]$.
Roughly speaking, \cref{eq:objective function case II equivalent :eq} is an analogue to the conventional mean-variance objective function,
but it replaces the quadratic function for variance with a hyperbolic cosine function.
Notably, \cref{eq:objective function case II equivalent :eq} can be re-expressed as
\begin{equation}
J ( t,x,u ) = \kappa \mathbb{E}_{t} [ {X}^{t,x,u}_{T} ] - \sum_{j=1}^{\infty } \frac{ {c}^{2j-1} }{ (2j)!} \mathbb{E}_{t} \big[ ( {X}^{t,x,u}_{T} - \mathbb{E}_{t} [ {X}^{t,x,u}_{T} ] )^{2j} \big],
\label{eq:objective function case II :eq}
\end{equation}
which is the result of removing all odd-order central moments from \cref{eq:objective function case I :eq}.
This implies that $\tilde{u}$ (resp. $\bar{u}$) given by \cref{eq:NEC in case I :eq} is a CNEC (resp. an ONEC) for \cref{eq:objective function case II :eq},
and $V$ given by \cref{eq:EVF in case I :eq} is the EVF.
Interested readers can follow our verification procedure as in the previous \Cref{subsubsec: Limiting case I: mean-exponential objective function}
to see that $\tilde{u}$ (resp. $\bar{u}$) is exactly a CNEC (resp. an ONEC) for \cref{eq:objective function case II equivalent :eq}.
Here we omit the statement due to the page limit.

\subsubsection{Limiting case III: cosine penalty function}
\label{subsubsec: Limiting case III: cosine penalty function}

In this subsection, we employ an irrational penalty function $S(x) = ( 1 - \cos (cx) ) / c$ with $c > 0$; that is,
\begin{equation}
J ( t,x,u ) = \kappa \mathbb{E}_{t} [ {X}^{t,x,u}_{T} ] + \frac{1}{c} \mathbb{E}_{t} \big[ \cos \big( c ( {X}^{t,x,u}_{T} - \mathbb{E}_{t} [ {X}^{t,x,u}_{T} ] ) \big) \big] - \frac{1}{c}.
\label{eq:objective function case III equivalent :eq}
\end{equation}
Some results in the following are parallel to those in \Cref{subsubsec: Limiting case II: hyperbolic cosine penalty function}, because of the link between $\cos ( \cdot )$ and $\cosh ( \cdot )$.
In addition, for technic reasons we assume that ${\kappa }^{2} {\theta }_{0} < 1$,
so that the marginal reward from $\mathbb{E}_{t} [ {X}^{t,x,u}_{T} ]$ and the penalty for the deviation ${X}^{t,x,u}_{T} - \mathbb{E}_{t} [ {X}^{t,x,u}_{T} ]$ are both bounded.
Let us rewrite \cref{eq:objective function case III equivalent :eq} as
\begin{equation}
J ( t,x,u ) = \kappa \mathbb{E}_{t} [ {X}^{t,x,u}_{T} ] - \sum_{j=1}^{\infty } \frac{ (-1)^{j-1} {c}^{2j-1} }{ (2j)! } \mathbb{E}_{t} \big[ ( {X}^{t,x,u}_{T} - \mathbb{E}_{t} [ {X}^{t,x,u}_{T} ] )^{2j} \big].
\label{eq:objective function case III :eq}
\end{equation}
Following the same line as deriving \cref{eq:NEC in case I :eq,eq:EVF in case I :eq},
we obtain ${y}^{\beta }_{t} = - {c}^{-2} \ln ( 1 - {\kappa }^{2} {\theta }_{t} )$, which solves ${\kappa }^{2} {\theta }_{t} = 1 - \exp ( - {c}^{2} {y}_{t} )$.
Then, ${\beta }_{t} = \kappa {c}^{-1} {B}_{t} | {D}_{t} |^{-2} ( 1 - {\kappa }^{2} {\theta }_{t} )^{-1/2}$ and
\begin{equation*}
 \sum_{j=1}^{\infty } j (2j-1) {\alpha }_{2j-2} ( {y}^{\beta }_{t} ) {\psi }_{ {z}_{2j} } \big( t, \vec{\alpha } ( {y}^{\beta }_{t} ) \big)
= - \frac{c}{2} \sum_{j=1}^{\infty } \frac{ ( - {c}^{2} {y}^{\beta }_{t} )^{j-1} }{(2j-2)!!}
= - \frac{c}{2} {e}^{ - \frac{1}{2} {c}^{2} {y}^{\beta }_{t} } < 0.
\end{equation*}
Therefore, for the CNEC $\tilde{u}$, the ONEC $\bar{u}$ and the EVF $V$, we have
\begin{align*}
\tilde{u} ( t,x ) = \bar{u}_{t} = \frac{ \kappa {B}_{t} {e}^{ - \int_{t}^{T} {A}_{v} dv } }{ c | {D}_{t} |^{2} \sqrt{ 1 - {\kappa }^{2} {\theta }_{t} } } - \frac{ {F}_{t} }{ {D}_{t} }, \quad
V ( t,x ) = \kappa \Theta ( t,x ) + \frac{ 1 - \sqrt{ 1 - {\kappa }^{2} {\theta }_{t} } }{c}.
\end{align*}
Here we omit the statement for the verification procedure due to the page limit, and encourage interested readers to try it on their own.
In fact, the verification procedure for a more general case will be provided in the coming \Cref{subsubsec: Limiting case IV: ambiguous cosine penalty function}.

\subsubsection{Limiting case IV: ambiguous cosine penalty function}
\label{subsubsec: Limiting case IV: ambiguous cosine penalty function}

Let $H$ be an $\mathbb{F}$-independent random variable whose all original moments of positive integer order exist, and
$S(x) = 1 - \mathbb{E} [ \cos ( H x ) ]$ be the penalty function for
\begin{equation}
\begin{aligned}
J ( t,x,u ) & = \kappa \mathbb{E}_{t} [ {X}^{t,x,u}_{T} ] + \mathbb{E}_{t} \big[ \cos \big( H ( {X}^{t,x,u}_{T} - \mathbb{E}_{t} [ {X}^{t,x,u}_{T} ] ) \big) \big] - 1 \\
            & = \kappa \mathbb{E}_{t} [ {X}^{t,x,u}_{T} ]
              - \sum_{j=1}^{\infty } \frac{ (-1)^{j-1} }{ (2j)! } \mathbb{E} [ {H}^{2j} ] \mathbb{E}_{t} \big[ ( {X}^{t,x,u}_{T} - \mathbb{E}_{t} [ {X}^{t,x,u}_{T} ] )^{2j} \big].
\end{aligned}
\label{eq:objective function case IV :eq}
\end{equation}
In fact, when the agent cannot specify an exact penalty function ${S}_{h} (x) = 1 - \cos ( h x )$ because she/he cannot have complete information about the principal or the environment,
it might be an appropriate choice to try the expected penalty function $S(x) = \mathbb{E} [ {S}_{H} (x) ] = 1 - \mathbb{E} [ \cos ( H x ) ]$ with a given distribution of $H$.
Since
\begin{equation*}
  \sum_{j=1}^{\infty } j (2j-1) {\alpha }_{2j-2} ( {y}^{\beta }_{t} ) {\psi }_{ {z}_{2j} } \big( t, \vec{\alpha } ( {y}^{\beta }_{t} ) \big)
= - \frac{1}{2} \sum_{j=1}^{\infty } \frac{ ( - {y}^{\beta }_{t} )^{j-1} }{(2j-2)!!} \mathbb{E} [ {H}^{2j} ]
= - \frac{1}{2} \mathbb{E} \big[ {H}^{2} {e}^{ - \frac{1}{2} {H}^{2} {y}^{\beta }_{t} } \big],
\end{equation*}
the condition \cref{eq:concavity condition for CNEC particular :eq} for $n = \infty $ holds.
Then, for a CNEC/ONEC candidate in the form of \cref{eq:CNEC particular :eq}, due to \cref{eq:linear combination objective function :eq,eq:algebraic equation :eq}, it is supposed to solve
$ {\kappa }^{2} {\theta }_{t}
= \int_{0}^{ {y}_{t} } \big( \mathbb{E} \big[ {H}^{2} {e}^{ - \frac{1}{2} {H}^{2} z } \big] \big)^{2} dz$
for the unique solution ${y}^{\beta }_{t}$. As a result, ${\beta }_{t} = \kappa {B}_{t} | {D}_{t} |^{-2} / \mathbb{E} [ {H}^{2} \exp ( - {H}^{2} {y}^{\beta }_{t} / 2 ) ]$.

Now we are going to show that for \cref{eq:objective function case IV :eq},
\begin{equation}
\tilde{u} ( t,x ) = \bar{u}_{t} = {e}^{ - \int_{t}^{T} {A}_{v} dv } \frac{ \kappa {B}_{t} }{ | {D}_{t} |^{2} } \big( \mathbb{E} \big[ {H}^{2} {e}^{ - \frac{1}{2} {H}^{2} {y}^{\beta }_{t} } \big] \big)^{-1}
                  - \frac{ {F}_{t} }{ {D}_{t} }
\label{eq:NEC in case IV :eq}
\end{equation}
exactly gives a CNEC and an ONEC, and the corresponding EVF is
\begin{equation}
V ( t,x ) = \kappa \Theta ( t,x ) + 1 - \mathbb{E} \big[ {e}^{ - \frac{1}{2} {H}^{2} {y}^{\beta }_{t} } \big].
\label{eq:EVF in case IV :eq}
\end{equation}
On the one hand, let $\Phi ( y, {z}_{1} ) = \kappa y - S ( y - {z}_{1} )$ and
\begin{equation*}
\left\{ \begin{aligned}
{\lambda }^{s,y} ( t,x ) & = \mathbb{E}_{t} \big[ {\Phi }_{ {z}_{1} } \big( {X}^{ t,x, \tilde{u} }_{T}, {m}^{ \tilde{u} }_{1} ( s,y ) \big) \big]
                           = \mathbb{E}_{t} \big[ S' \big( {X}^{ t,x, \tilde{u} }_{T} - {m}^{ \tilde{u} }_{1} ( s,y ) \big) \big], \\
    {\mu }^{s,y} ( t,x ) & = \mathbb{E}_{t} \big[ {\Phi }_{ {z}_{1} {z}_{1} } \big( {X}^{ t,x, \tilde{u} }_{T}, {m}^{ \tilde{u} }_{1} ( s,y ) \big) \big]
                           = - \mathbb{E}_{t} \big[ S'' \big( {X}^{ t,x, \tilde{u} }_{T} - {m}^{ \tilde{u} }_{1} ( s,y ) \big) \big],
\end{aligned} \right.
\end{equation*}
which leads to ${\lambda }^{s,y}_{x} ( t,x ) = - \exp ( \int_{t}^{T} {A}_{v} dv ) {\mu }^{s,y} ( t,x )$ and
\begin{equation*}
  {\mu }^{t,x} ( t,x )
= - \mathbb{E}_{t} \bigg[ {H}^{2} \cos \bigg( H \int_{t}^{T} {D}_{v} {\beta }_{v} d {W}_{v} \bigg) \bigg]
= - \mathbb{E} \big[ {H}^{2} {e}^{ - \frac{1}{2} {H}^{2} {y}^{\beta }_{t} } \big].
\end{equation*}
Then, for \cref{eq:EVF in case IV :eq}, the right-hand side of \cref{eq:Bellman equation in case I :eq} is equal to
\begin{align*}
\max_{ \zeta \in \mathbb{R} }
\bigg\{ - \frac{ {\kappa }^{2} | {B}_{t} |^{2} }{ 2 | {D}_{t} |^{2} \mathbb{E} \big[ {H}^{2} {e}^{ - \frac{1}{2} {H}^{2} {y}^{\beta }_{t} } \big] }
      + \kappa {e}^{ \int_{t}^{T} {A}_{v} dv } \frac{ {B}_{t} }{ {D}_{t} } ( {D}_{t} \zeta + {F}_{t} ) \\
      - \frac{1}{2} \mathbb{E} \big[ {H}^{2} {e}^{ - \frac{1}{2} {H}^{2} {y}^{\beta }_{t} } \big] {e}^{ 2 \int_{t}^{T} {A}_{v} dv } ( {D}_{t} \zeta + {F}_{t} )^{2} & \bigg\} = 0,
\end{align*}
whereby \cref{eq:NEC in case IV :eq} realizes the maximum.
Hence, we obtain the desired conclusion for $( \tilde{u}, V )$.
On the other hand, since the following BSDE arising from \cref{eq:FBSDE particular :eq},
\begin{equation*}
\left\{ \begin{aligned}
d {Y}^{t}_{s} & = - {A}_{s} {Y}^{t}_{s} ds + \mathcal{Y}^{t}_{s} d {W}_{s}, \quad \forall s \in [ t,T ], \\
  {Y}^{t}_{T} & = \kappa - \mathbb{E}_{T} \bigg[ H \sin \bigg( H \int_{t}^{T} {D}_{v} {\beta }_{v} d {W}_{v} \bigg) \bigg],
\end{aligned} \right.
\end{equation*}
admits the solution
\begin{equation*}
\left\{ \begin{aligned}
    {Y}^{t}_{s} {e}^{ - \int_{s}^{T} {A}_{v} dv }
& = \kappa - \mathbb{E}_{s} \bigg[ H \sin \bigg( H \int_{t}^{s} {D}_{v} {\beta }_{v} d {W}_{v} \bigg) {e}^{ - \frac{1}{2} {H}^{2} {y}^{\beta }_{s} } \bigg], \\
    \mathcal{Y}^{t}_{s} {e}^{ - \int_{s}^{T} {A}_{v} dv }
& = - \mathbb{E}_{s} \bigg[ {H}^{2} \cos \bigg( H \int_{t}^{s} {D}_{v} {\beta }_{v} d {W}_{v} \bigg) {e}^{ - \frac{1}{2} {h}^{2} {y}^{\beta }_{s} } \bigg] {D}_{s} {\beta }_{s},
\end{aligned} \right.
\end{equation*}
we obtain $\lim_{s \downarrow t} \mathbb{E}_{t} [ {B}_{s} {Y}^{t}_{s} + {D}_{s} \mathcal{Y}^{t}_{s} ] = {B}_{t} {Y}^{t}_{t} + {D}_{t} \mathcal{Y}^{t}_{t} = 0$.
Moreover, given that $d {Z}^{t}_{s} = - 2 {A}_{s} {Z}^{t}_{s} ds + \mathcal{Z}^{t}_{s} d {W}_{s}$ and
${Z}^{t}_{T} = - \mathbb{E}_{T} [ {H}^{2} \cos ( H \int_{t}^{T} {D}_{v} {\beta }_{v} d {W}_{v} ) ]$ from \cref{eq:FBSDE particular :eq}, we obtain
$\lim_{ s \downarrow t } {Z}^{t}_{s} = {Z}^{t}_{t} = - \mathbb{E}_{t} [ {H}^{2} \exp ( - \frac{1}{2} {H}^{2} {y}^{\beta }_{t} ) ] < 0$.
Hence, we obtain the desired conclusion for $\bar{u}$. So we are done.

\begin{remark}
Let $i$ be the imaginary unit for (generalized) Fourier transform.
For a fairly general even penalty function $S(x)$, its Fourier (cosine) transform
\begin{equation*}
\hat{S} (h) := \int_{ - \infty }^{ + \infty } S(x) {e}^{ - i h x } dx = \int_{ - \infty }^{ + \infty } S(x) \cos (hx) dx
\end{equation*}
is also an even real-valued function. By inverse Fourier (cosine) transform,
\begin{equation*}
S(x) = \frac{1}{ 2 \pi } \int_{ - \infty }^{ + \infty } \hat{S} (h) \cos (hx) dh
     = \sum_{j=0}^{\infty } \frac{ (-1)^{j} {x}^{2j} }{ 2 \pi (2j)! } \int_{ - \infty }^{ + \infty } \hat{S} (h) {h}^{2j} dh.
\end{equation*}
This is a generalization of the form $S(x) = 1 - \mathbb{E} [ \cos (Hx) ]$.
Roughly speaking, one can treat $\hat{S} / 2 \pi $ as a ``probability density function'', or replace the probability measure with $\hat{S} (h) dh / 2 \pi $ in our previous derivation.
For example, for $S(x) = 1 - \exp ( - \frac{1}{2} {x}^{2} )$,
\begin{equation*}
\frac{1}{ 2 \pi } \hat{S} (h) = \delta ( h ) - \frac{1}{ 2 \pi } \int_{ - \infty }^{ + \infty } {e}^{ - i h x - \frac{ {x}^{2} }{2} } dx = \delta ( h ) - \frac{1}{ \sqrt{ 2 \pi } } {e}^{ - \frac{ {h}^{2} }{2} },
\end{equation*}
where $\delta ( \cdot )$ is Dirac delta function.
In terms of determining $\beta $ for the CNEC/ONEC in the form of \cref{eq:CNEC particular :eq}, according to \cref{eq:algebraic equation :eq}, it is supposed to solve
\begin{align*}
  {\kappa }^{2} {\theta }_{t}
= \int_{0}^{ {y}^{\beta }_{t} } \bigg( \int_{ - \infty }^{ + \infty } \frac{ \hat{S} (h) }{ 2 \pi } {h}^{2} {e}^{ - \frac{1}{2} {h}^{2} z } dh \bigg)^{2} dz.
\end{align*}
As a result, if $\int_{ - \infty }^{ + \infty } \hat{S} (h) {h}^{2} \exp ( - {h}^{2} {y}^{\beta }_{t} / 2 ) dh > 0$,
interested readers can arrive at the CNEC/ONEC and its corresponding EVF, as the following:
\begin{align*}
& \tilde{u} ( t,x ) = \bar{u}_{t}
  = {e}^{ - \int_{t}^{T} {A}_{v} dv } \frac{ \kappa {B}_{t} }{ | {D}_{t} |^{2} }
    \bigg( \int_{ - \infty }^{ + \infty } \frac{ \hat{S} (h) }{ 2 \pi } {h}^{2} {e}^{ - \frac{1}{2} {h}^{2} {y}^{\beta }_{t} } dh \bigg)^{-1}
  - \frac{ {F}_{t} }{ {D}_{t} }, \\
& V ( t,x ) = \kappa \Theta ( t,x ) + 1 - \int_{ - \infty }^{ + \infty } \frac{ \hat{S} (h) }{ 2 \pi } {e}^{ - \frac{1}{2} {h}^{2} {y}^{\beta }_{t} } dh.
\end{align*}
Notably, any real function $S$ defined on an interval symmetric about the origin has the decomposition $S(x) = ( S(x) + S(-x) ) / 2 + ( S(x) - S(-x) ) / 2$, where the first part is even and the second part is odd.
According to \Cref{rem:independence for odd central moments}, the state-independent CNEC and the path-independent ONEC for a fairly general penalty function $S(x)$ might be the same as those for $( S(x) + S(-x) ) / 2$,
for which the above Fourier cosine transform method can be used. 
\end{remark}

\section{Concluding remark}
\label{sec:Concluding remark}

We have studied the time-consistent stochastic control problems with higher-order central moments of the terminal value of state process.
On the one hand, seeking a closed-loop Nash equilibrium control is reduced to solving a PDE system,
where the so-called equilibrium value function as well as its Bellman equation is not necessary.
On the other hand, referring to a standard perturbation argument for spike variation, we derive a sufficient maximum principle with a flow of FBSDEs for open-loop Nash equilibrium control.
After the necessary theory is established, we consider the linear control problems, where the objective functional is affine in the mean.
In many cases, the closed-loop and the open-loop Nash equilibrium controls are identical and independent of the state, random path and the preference on odd-order central moments.

\section*{Acknowledgments}
The research of the first author is supported by National Natural Science Foundation of China under Grant 12401611 and CTBU Research Projects Grant 2355010.
The research of the second author is supported by Major Program of the Key Research Institute on Humanities and Social Science of China Ministry of Education under Grant 22JJD790091.
The third author acknowledges the funding support from NSF-DMS Grant 2204795.
The fourth author acknowledges the funding support from the Research Grants Council of the Hong Kong Special Administrative Region Grant PolyU15223419.

\bibliographystyle{apacite}
\bibliography{references}
\end{document}